\newtheorem{theorem}{Theorem}[section]
\newtheorem{corollary}[theorem]{Corollary}
\newtheorem{proposition}[theorem]{Proposition}
\newtheorem{lemma}[theorem]{Lemma}
\newtheorem{claim}[theorem]{Claim}
\newtheorem{observation}[theorem]{Observation}
\theoremstyle{definition}
\newtheorem{definition}[theorem]{Definition}
\theoremstyle{remark}
\newtheorem{remark}[theorem]{Remark}
\setlist[itemize]{itemsep=0pt}
\DeclarePairedDelimiter{\ceil}{\lceil}{\rceil}
\DeclarePairedDelimiter{\floor}{\lfloor}{\rfloor}
\newcommand{\R}{\mathbb R}
\renewcommand{\H}{\mathcal H}
\newcommand{\F}{\mathcal F}
\renewcommand{\Pr}{ \mathrm P}
\newcommand{\eps}{ \varepsilon}
\newcommand{\N}{\mathbb{N}}
\newcommand{\cc}[1][n]{\{0,1\}^{#1}}
\newcommand{\zo}{\{0,1\}}
\newcommand{\PMargs}[2]{\mathrm {SM}_{#1,#2}}
\newcommand{\PM}{\PMargs{n}{k}}
\newcommand{\VC}{ \mathrm {VC}}
\newcommand{\DNF}{ \mathrm {DNF}}
\newcommand{\CNF}{ \mathrm {CNF}}
\newcommand{\EQ}{ \mathrm {EQ}}
\newcommand{\GEQ}{ \mathrm {GEQ}}
\newcommand{\AND}{ \mathrm {AND}}
\newcommand{\OR}{ \mathrm {OR}}
\newcommand{\XOR}{ \mathrm {XOR}}
\newcommand{\GT}{ \mathrm {GT}}
\newcommand{\Disj}{ \mathrm {Disj}}
\newcommand{\LTF}{ \mathrm {LTF}}
\newcommand{\ELTF}{ \mathrm {ELTF}}
\newcommand{\EMAJ}{ \mathrm {EMAJ}}
\newcommand{\SYM}{ \mathrm {SYM}}
\newcommand{\C}{ \mathcal {C}}
\newcommand{\newclass}[2]{\newcommand{#1}{{\text{\upshape\sffamily #2}}\xspace}}
\renewcommand{\P}{{\text{\upshape\sffamily P}}\xspace}
\newclass{\NP}{NP}
\newclass{\MA}{MA}
\newclass{\BPP}{BPP}
\newclass{\TC}{TC}
\newclass{\U}{U}
\newcommand{\mydot}{{\kern.1em{\cdot}\kern.1em}}
\newcommand{\UBPP}{\U\mydot\BPP}
\newcommand{\UP}{\U\mydot\P}
\newcommand\polylog{\mathrm{polylog}}
\title{String Matching: Communication, Circuits, and Learning}
\author{
    Alexander Golovnev\thanks{
    Harvard University.
    Email: \texttt{alexgolovnev@gmail.com}.
    Research supported by a Rabin Postdoctoral Fellowship.
    }
     \and	
    Mika G\"{o}\"{o}s\thanks{
    Institute for Advanced Study, Princeton, NJ, USA.
    Email: \texttt{mika@ias.edu}
    }
    \and	
    Daniel Reichman\thanks{
    Department of Computer Science, Princeton University.
    Email: \texttt{daniel.reichman@gmail.com}
    }
    \and
	Igor Shinkar\thanks{
    School of Computing Science, Simon Fraser University.
    Email: \texttt{ishinkar@sfu.ca}
    }
}
\begin{document}

\maketitle

\begin{abstract}
\noindent
\emph{String matching} is the problem of deciding whether a given $n$-bit string contains a given $k$-bit pattern. We study the complexity of this problem in three settings.
\begin{itemize}[itemsep=4pt]
\item {\bf Communication complexity.}~~%
For small $k$, we provide near-optimal upper and lower bounds on the communication complexity of string matching. For large $k$, our bounds leave open an exponential gap; we exhibit some evidence for the existence of a better protocol.

\item {\bf Circuit complexity.}~~%
We present several upper and lower bounds on the size of circuits with  threshold and DeMorgan gates solving the string matching problem. Similarly to the above, our bounds are near-optimal for small $k$.

\item {\bf Learning.}~~%
We consider the problem of learning a hidden pattern of length at most $k$ relative to the classifier that assigns $1$ to every string that contains the pattern. We prove optimal bounds on the VC dimension and sample complexity of this problem.
\end{itemize}
\end{abstract}

\vspace{3mm}
{\small \tableofcontents}

\thispagestyle{empty}
\setcounter{page}{0}

\newpage

\section{Introduction}
One of the most fundamental and frequently encountered tasks by minds and machines is that of detecting patterns in perceptual inputs. A basic example is the \emph{string matching} problem, where given a string $x\in\{0,1\}^n$ and a pattern $y\in\{0,1\}^k$, $k\leq n$, the goal is to decide whether $x$ contains $y$ as a substring. Formally, denoting by $x{[i,j]}$ the bits of $x$ in the interval $[i,j]\coloneqq\{i,i+1,\ldots,j\}$, we define a Boolean function by
\[
\PM(x,y)\coloneqq 1\qquad\text{iff}\qquad x{[i,i+k-1]}=y\enspace\text{for some $i\in[n-k+1]$}.
\]
String matching is well-studied in the context of traditional algorithms: it can be computed in linear time~\cite{boyer1977fast,knuth1977fast,galil1983time} (with some lower bounds given by~\cite{rivest1977worst}). It has also been studied in more modern algorithmic frameworks such as streaming~\cite{porat2009exact}, sketching~\cite{bar2004sketching}, and property testing~\cite{ben2017deleting}. See \Cref{sec:related} for more related work.

In this work we study the $\PM$ problem in three models of computation, where it appears to have received relatively little attention.
\begin{enumerate}
\item \emph{Communication complexity:}
How many bits of communication are required to compute $\PM$ when the input $(x,y)$ is adversarially split between two players?
\item \emph{Circuit complexity:}
How many gates are needed to compute $\PM$ by DeMorgan circuits (possibly in low depth)? How about threshold circuits?
\item \emph{Learning:}
How many labeled samples of strings must be observed in order to (PAC) learn a classifier assigning $1$ to a string if and only
if it contains a (fixed) hidden pattern $y$? What is the VC dimension of this problem?
\end{enumerate}

\subsection{Results: Communication Complexity}

We first show bounds on the randomized two-party communication complexity of $\PM$. (For standard textbooks on communication complexity, see~\cite{kushilevitz97communication,jukna2012boolean}.) The only related prior work we are aware of is Bar-Yossef et al.~\cite{bar2004sketching} who studied the \emph{one-way} communication complexity of string matching; our focus is on \emph{two-way} communication. Our bounds are near-optimal for small $k$, but for large $k \geq \Omega(n)$, we leave open a mysterious exponential gap. Our protocols work regardless of how the input bits $(x,y)$ are bipartitioned between the players, whereas our lower bound is proved relative to some fixed hard partition.

\begin{restatable}[Communication Complexity]{theorem}{comm}
\label{thm:comm}
For the $\PM(x,y)$ problem:
\begin{itemize}
  \item
{\bf Upper bound:} Under any bipartition of the input bits, there is a protocol of cost
\begin{center}
\begin{tabular}{r l l }
\upshape Deterministic: & $O(\log k\cdot n/k)$ & if  $k\leq \sqrt{n}$ ;\\
\upshape Randomized: & $O(\log n \cdot \sqrt{n})$ & if $k \geq \sqrt{n}$.\\
\end{tabular}
\end{center}
  \item
{\bf Lower bound:} For $k\geq 2$ there is a bipartition of the input bits such that every randomized protocol requires $\Omega(\log\log k\cdot n/k)$ bits of communication, even for the fixed pattern $y = 1^k$.
\end{itemize}
\end{restatable}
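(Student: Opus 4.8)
The plan is to prove the two upper bounds and the lower bound separately; both upper bounds rest on the same \emph{windowing} idea. Cover the text by $m=O(n/k)$ windows $W_1,\dots,W_m$, each a block of $2k$ consecutive positions with consecutive windows overlapping in $k$ positions; every length-$k$ substring of $x$ lies inside some $W_\ell$, so $\PM(x,y)=\bigvee_{\ell}[\,y\text{ occurs in }W_\ell\,]$, and it suffices to handle the window sub-problems cheaply in aggregate. The tool that makes the per-window work cheap is combinatorics on words: by the Fine--Wilf periodicity theorem the set of shifts at which $y$ occurs inside a fixed window is empty or a single arithmetic progression, and more generally the set of borders of a string is a union of $O(\log k)$ arithmetic progressions, so all the ``overlap profile'' data one attaches to a window boundary is describable in $\polylog(k)$ bits.

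\emph{Deterministic upper bound, $k\le\sqrt n$.} Here $k\le n/k$, so the players can afford to exchange the pattern at cost $k\le n/k$ and assume $y$ is public. They then sweep the $m=O(n/k)$ length-$k$ blocks from left to right, carrying across each boundary only the $O(\log k)$-bit Knuth--Morris--Pratt state (the length of the longest suffix of the text read so far that is a prefix of $y$) together with the $\polylog(k)$-bit overlap profile needed to spot an occurrence straddling the boundary. The goal is to show that processing one block --- updating the state and testing whether an occurrence is completed inside it --- costs only $O(\log k)$ bits, using the arithmetic-progression structure of occurrences and borders in place of the naive $\Theta(k)$-bit full-equality test; establishing this $O(\log k)$-per-block bound is where essentially all of the real work lies and where I would expect to spend most effort.

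\emph{Randomized upper bound, $k\ge\sqrt n$.} Now $m=O(n/k)=O(\sqrt n)$. Using public randomness fix a Rabin--Karp fingerprint $h(z)=\sum_t z_t r^t\bmod p$ with $p=\poly(n)$; since $h$ is linear in the bits of $z$, under any bipartition each player computes their own additive share of $h$ of any substring or of $y$, so equality of two $\Theta(k)$-bit substrings is decided with error $1/\poly(n)$ by a single equality test on $O(\log n)$-bit numbers. It then remains to dispatch the $O(\sqrt n)$ windows with $O(\log n)$ amortized communication each, which I would do by casing on the smallest period $\pi$ of $y$: if $\pi$ is large then any two occurrences of $y$ in $x$ are far apart, so there are only $O(\sqrt n)$ of them and a short round of fingerprint comparisons localizes and verifies one; if $\pi$ is small then $y$ is a prefix of a power of its length-$\pi$ root and its occurrences form long periodic runs aligned with that root, again detected by $O(\sqrt n)$ fingerprint comparisons. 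The total is $O(\log n\cdot\sqrt n)$, with the case analysis the delicate part.

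\emph{Lower bound.} Restrict to the pattern $y=1^k$, so $\PM(x,1^k)=1$ iff $x$ has a run of $k$ ones. Fix the hard bipartition and input family: reserve fixed $0$'s as separators, split the remaining positions of $x$ into $m=\Theta(n/k)$ blocks of length $\approx 2k$, give the first $k$ positions of block $j$ to Alice and the last $k$ to Bob, and restrict Alice to write $0^{k-u_j}1^{u_j}$ and Bob to write $1^{v_j}0^{k-v_j}$ for inputs $u_j,v_j\in\{0,\dots,k\}$. The separators stop $1$-runs crossing blocks, and block $j$ contains a run of $k$ ones exactly when $u_j+v_j\ge k$, so
\[
\PM(x,1^k)\;=\;\bigvee_{j=1}^{m}[\,u_j+v_j\ge k\,]\;=\;\big(\OR_m\circ\GT\big)\big((u_j)_j,(v_j)_j\big),
\]
where, after the relabelling $v_j\mapsto k-v_j$, the inner function is the greater-than predicate $\GT$ on $\ell=\Theta(\log k)$-bit integers. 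Since the randomized communication complexity of $\GT$ on $\ell$-bit integers is $\Theta(\log\ell)$ and, moreover, its information complexity under a suitable hard distribution supported on $\GT^{-1}(0)$ is $\Omega(\log\ell)$, the standard direct-sum argument for OR-compositions (in the spirit of the Bar-Yossef--Jayram--Kumar--Sivakumar analysis of set disjointness) yields $R(\OR_m\circ\GT)=\Omega(m\log\ell)=\Omega(\log\log k\cdot n/k)$, as required. The obstacle in this part is the information-complexity lower bound for $\GT$ restricted to its $0$-inputs and the verification that the OR-composition argument goes through verbatim; across all four parts, however, I expect the $O(\log k)$-per-block analysis of the deterministic protocol to be the single hardest step, as it is the only one not reducible to standard communication-complexity machinery.
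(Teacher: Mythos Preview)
Your lower bound is essentially the paper's argument: both reduce $\PM(\,\cdot\,,1^k)$ to $\OR_{\Theta(n/k)}\circ\GT_{\Theta(k)}$ and then invoke an $\Omega(m\log\log\ell)$ randomized lower bound for $\OR_m\circ\GT_\ell$ via a one-sided information-complexity bound for $\GT$. Your reduction (Alice owns the left half of each block, Bob the right half) is in fact a bit cleaner than the paper's interleaved one; and you correctly flag the one-sided IC bound for $\GT$ as the technical crux, which the paper dispatches by observing that the Braverman--Weinstein discrepancy simulation already yields it.

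Where you diverge from the paper is the upper bounds, and here you are working much harder than necessary. For the deterministic protocol you propose a KMP sweep and call the $O(\log k)$-per-block analysis ``the single hardest step.'' In the paper this is the \emph{easiest} step, via the following observation you are missing: once $y$ is public, to test whether $y$ occurs starting in a given length-$k$ interval, Alice simply sends the smallest index $i$ and the largest index $j$ in that interval for which her bits are \emph{consistent} with $y$ starting there. The point is that from $i$ alone Bob learns all of Alice's bits in $[i,i+k-1]$ (they must equal the corresponding bits of $y$), and likewise from $j$; so Bob now knows the entire text on $[i,j+k-1]$ and can decide outright. Two indices, $2\log k$ bits, no KMP automaton, no border/period bookkeeping, no state carried across intervals.

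For the randomized protocol you have the right instinct (case on periodicity), but your sketch cases on the smallest period $\pi$ of the \emph{whole} pattern $y$, without saying how the players determine $\pi$ when $y$ is split between them. The paper sidesteps this by first exchanging the $2\sqrt n$-bit prefix $p$ of $y$ (affordable within the budget), so both players know $p$ and can locally compute its primitive period. If $p$ has no period of order $\le\sqrt n$, there is at most one candidate start per length-$\sqrt n$ interval, found by the same min/max-index trick above, and a single fingerprint equality test finishes. If $p$ does have a short period, the players extend $p$ (still within $O(\sqrt n)$ exchanged bits) to a prefix $p'$ that provably has no short period, and reduce to the previous case. So the ``small period'' branch is not a separate analysis of periodic runs of $y$ in $x$ as you propose, but a reduction back to the aperiodic case; and again the min/max-index trick replaces the vaguer ``localize and verify'' step.

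In short: your plan is sound, but the per-window work you expect to be hard is handled by a two-index trick that makes both upper bounds nearly routine.
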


\medskip
\begin{remark}
Note that the most natural bipartition, where Alice gets $x$ and Bob gets $y$, is easy. Indeed, for such partition there is a randomized $O(\log n)$-bit protocol, where Bob sends to Alice a hash of $y$, and Alice compares it with the hashes of the substrings $x[1,k]$, $x[2,k+1]$,\ldots, $x[n-k+1,n]$. Under this bipartition, by setting $k=n$, one can also recover the usual \emph{equality} problem, which is well-known to have deterministic communication complexity $\Omega(n)$. This explains why nontrivial protocols for large $k$ need randomness.
\end{remark}

\paragraph{A better protocol?}
For simplicity of discussion, consider the case $k=n/2$.
\begin{center}\itshape
What is the randomized communication complexity of $\PMargs{n}{n/2}$?
\end{center}
Our bounds, $\Omega(\log\log n)$ and $O(\log n\mydot\sqrt{n})$, leave open a huge gap. We conjecture that the answer is closer to the lower bound. As formal evidence we show that problems closely related to $\mathrm{SM}_{n,n/2}$ admit efficient ``unambiguous randomized'' (aka $\U\!\mydot\!\BPP$) communication protocols. A classic result~\cite{yannakakis91expressing} says that any ``unambiguous deterministic'' (aka $\U\!\mydot\!\P$) protocol can be efficiently simulated by a deterministic one, that is, $\U\!\mydot\!\P=\P$ in communication complexity. A randomized analogue of this, $\U\!\mydot\!\BPP=\BPP$, turns out to be false as a consequence of the recent breakthrough of Chattopadhyay et al.~\cite{chattopadhyay19log}. One can nevertheless interpret our $\UBPP$ protocols as evidence for the existence of improved randomized protocols.

\paragraph{Techniques.}
Our lower bound in \Cref{thm:comm} requires proving a tight randomized lower bound for composed functions of the form $\OR\circ\GT$ (where $\GT$ is the \emph{greater-than} function), which answers a question of Watson~\cite{watson18communication}. We observe that the lower bound follows by a minor modification of existing \emph{information complexity} techniques~\cite{braverman16discrepancy}. For upper bounds, the role of periods in strings plays a central role (\Cref{sec:periods}). We go on to discuss a natural \emph{period finding} problem, and conjecture that it is easy for randomized protocols. See \Cref{sec:evidence} for details.

\subsection{Results: Circuit Complexity}

\paragraph{Threshold circuits.}
A threshold circuit is a circuit whose gates compute \emph{linear threshold functions} (LTFs). Recall that an LTF outputs 1 on an $m$-bit input $x$ if and only if $\sum_{i\in[m]} a_ix_i\geq \theta$ for some fixed coefficient vector $a \in \R^{m}$, and $\theta \in \R$. The study of threshold circuits is often motivated by its connection to neural networks~\cite{hajnal1993threshold,parberry1988parallel,parberry1994circuit,martens2013representational,muroga1971threshold}. The case of \emph{low-depth} threshold circuits is also interesting. In particular, one line of work~\cite{siu1993depth,razborov1992small,siu1991power} has focused on efficient low-depth threshold implementations of arithmetic primitives (addition, comparison, multiplication). As for lower bounds, \cite{hajnal1993threshold} show an exponential-in-$n$ lower bound for the \emph{mod-2 inner-product} function against depth-$2$ threshold circuits of low weight (see \cite{forster2001relations} for an extension). Superlinear lower bounds on the number of gates of arbitrary depth-$2$ as well as low-weight depth-3 threshold circuits were proven recently by Kane and Williams~\cite{kane2016super}.

It is important that we measure the size of a threshold circuit as the \emph{number of gates} (excluding inputs), in which case even superconstant lower bounds are meaningful. For example, it is easy to implement the equality function (namely $\PMargs{n}{n}$) using three threshold gates (albeit, with exponential weights). Thus, in contrast to the case of bounded fanin circuits, proving linear or even nonconstant lower bounds on the number of gates is not straightforward. Indeed, there are few explicit examples of functions with superconstant lower bounds \cite{groeger1993linear}, and proving them is considered challenging~\cite{roychowdhury1994lower}. Indeed, Jukna~\cite{jukna2012boolean} writes ``even proving non-constant lower bounds $\ldots$ is a nontrivial task".

We show that $\PM$ admits a linear-size implementation at low depth. Thereafter we focus on its fine-grained complexity, seeking to establish lower bounds as close to $\Omega(n)$ as possible.

\begin{restatable}[Threshold circuits]{theorem}{threshold}
\label{thm:threshold}
For the $\PM(x,y)$ problem:
\begin{itemize}
  \item
    {\bf Upper bound:} There is a depth-2 threshold circuit of size $O(n - k)$.
\item
    {\bf Lower bound for unbounded depth:} Any threshold circuit must be of size
\begin{center}
\begin{tabular}{l l }
\upshape $\Omega(\frac{n\log\log{k}}{k\log{n}})$ & if $k>1$;\\
\upshape $\Omega(\sqrt{n/k})$ & if $k \geq 2.1\cdot\log n$.\\
\end{tabular}
\end{center}
\end{itemize}
\end{restatable}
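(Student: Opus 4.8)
For a shift $i\in[n-k+1]$ let $v_i(x):=\sum_{j=1}^{k}2^{j-1}x_{i+j-1}$ be the integer encoded by the $i$-th window of $x$, and $Y(y):=\sum_{j=1}^{k}2^{j-1}y_j$ the integer encoded by the pattern; then $x[i,i+k-1]=y$ iff $v_i(x)=Y(y)$. The plan is a first layer containing, for every $i$, the two LTFs $G_i:=[\,v_i(x)\ge Y(y)\,]$ and $G_i':=[\,v_i(x)\ge Y(y)+1\,]$; each is a genuine linear threshold of the input bits (integer weights $\pm 2^{j-1}$, which the theorem allows). Since $G_i\ge G_i'$ and $G_i-G_i'=1$ exactly when $v_i(x)=Y(y)$, we get $\PM(x,y)=1$ iff $\sum_i(G_i-G_i')\ge 1$ — a single LTF with $\pm1$ weights on the first-layer gates. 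This is a depth-$2$ circuit with $2(n-k+1)+1=O(n-k)$ gates (degenerating to the usual $3$-gate equality circuit when $k=n$).

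\textbf{Lower bounds.} Both bounds I would route through one reduction: \emph{a threshold circuit of size $s$ on $N$ inputs gives, under any bipartition of the inputs, a randomized protocol of cost $O(s\log(N+s))$.} To prove it, use Muroga's bound to assume integer weights of magnitude $2^{O((N+s)\log(N+s))}$, then process the gates in topological order maintaining the invariant that both players know all earlier gate values: the current gate value is then decided by whether Alice's partial weighted sum plus Bob's partial weighted sum meets the threshold, i.e.\ by a $\GEQ$ comparison of two integers of bit-length $O((N+s)\log(N+s))$, whose randomized cost is $O(\log(N+s))$; running each comparison with error $1/(3s)$ and union bounding gives cost $O(s\log(N+s))$ and total error $\le 1/3$. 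Applied with $N=n+k\le 2n$, a size-$s$ circuit for $\PM$ becomes an $O(s\log n)$-bit randomized protocol, so a randomized communication lower bound of $R$ (under any bipartition) forces $s\ge\Omega(R/\log n)$.

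It then remains to lower-bound communication. Fixing the pattern to $1^{k}$ turns the problem into ``does $x$ contain $k$ consecutive ones'', and I would embed $\OR_m\circ\GT_{[k]}$ with $m=\Theta(n/k)$ into it: for each coordinate $j$ allocate a length-$\approx 2k$ block whose first half is Alice's monotone string $0^{\,k-a_j}1^{\,a_j}$ and whose second half is Bob's $1^{\,k-b_j-1}0^{\,b_j+1}$, so that the one run straddling the block's midpoint has length $a_j+(k-b_j-1)\ge k$ iff $a_j>b_j$; separating consecutive blocks by a single $0$ rules out any length-$k$ run inside a block or across a boundary unless some $a_j>b_j$. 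With Alice holding the first halves and Bob the second halves, $\PM(\cdot,1^k)$ restricted to the block bits computes exactly $\OR_m\circ\GT_{[k]}$, whose randomized complexity is $\Omega(m\log\log k)=\Omega(\tfrac{n}{k}\log\log k)$ by the tight $\OR\circ\GT$ bound underlying \Cref{thm:comm}. Combined with the reduction this gives $s\ge\Omega(\tfrac{n\log\log k}{k\log n})$.

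For the second bound the key extra observation is that this gadget uses only $N'=\Theta(\tfrac{n}{k}\log k)$ input bits, so restricting the circuit to it yields a size-$\le s$ threshold circuit on $N'$ inputs and hence a randomized protocol of cost $O(s\log(N'+s))$. Then either $s>N'=\Theta(\tfrac{n}{k}\log k)\ge\Omega(\sqrt{n/k})$ and we are done, or $s\le N'$ and $s\ge\Omega\!\big(\tfrac{(n/k)\log\log k}{\log((n/k)\log k)}\big)$, which a short calculation — using $k\ge 2.1\log n$ to keep $n/k$ in check against $\log\log k$ — reduces to $\Omega(\sqrt{n/k})$. The steps I expect to be fiddliest are (i) the case analysis in the block construction certifying that a run of $k$ ones can arise \emph{only} from a coordinate with $a_j>b_j$ (no spurious runs inside a block or across boundaries), and (ii) keeping the ``few inputs'' accounting honest — the weight bound for the restricted circuit must be in terms of $N'+s$, not $n$, and the closing arithmetic should genuinely bottom out exactly at $k\ge 2.1\log n$.
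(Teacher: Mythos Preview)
Your upper bound and first lower bound are correct and essentially match the paper (the paper uses the pair $[v_i\ge Y]$, $[Y\ge v_i]$ summing to at least $n-k+2$ rather than your $[v_i\ge Y]-[v_i\ge Y+1]$, and it cites Nisan's per-gate simulation rather than spelling it out, but the content is identical).

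The second lower bound has a genuine gap. The sentence ``restricting the circuit to it yields a size-$\le s$ threshold circuit on $N'=\Theta(\tfrac{n}{k}\log k)$ inputs'' is not true. In your embedding the integers $a_j,b_j\in[k]$ are written into $x$ via the thermometer map $a\mapsto 0^{k-a}1^{a}$, which is \emph{not} affine in the $O(\log k)$-bit binary representation of $a$. Precomposing a threshold gate with a nonlinear map does not give a threshold gate, so you do not get a size-$s$ threshold circuit on $N'$ inputs; and if you add gates to compute each thermometer bit $[a_j\ge t]$ you add $\Theta(mk)=\Theta(n)$ gates and the argument collapses. Put differently, the number of \emph{free input bits} of the restricted $\PM$ circuit is $\Theta(mk)=\Theta(n)$, not $\Theta(m\log k)$; Muroga's bound then still gives weight bit-length $\Theta((n+s)\log(n+s))$, the per-gate cost stays $O(\log n)$, and you recover only the first bound.

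The paper gets $\Omega(\sqrt{n/k})$ by a completely different, counting-based argument that does not go through communication at all. For $\ell\approx k/2$ and $t=n/k$ it shows that any size-$s$ threshold circuit for $\PM$ can be specialized---by hard-wiring $x$ to a string $x_f$ encoding the $1$-set of $f$ and identifying pairs $y_{2i-1},y_{2i}$---into a size-$\le s$ threshold circuit for \emph{any} $f\colon\{0,1\}^\ell\to\{0,1\}$ with $|f^{-1}(1)|=t$. There are $\binom{2^\ell}{t}\ge 2^{(\ell-\log t)t}$ such $f$, while only $2^{O(\ell s^2)}$ functions are computable by size-$s$ threshold circuits (Roychowdhury--Orlitsky--Siu), forcing $s\ge\Omega(\sqrt{t-t\log t/\ell})=\Omega(\sqrt{n/k})$ once $k\ge\Omega(\log n)$. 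This is where the threshold $k\ge 2.1\log n$ genuinely enters.
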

The second lower bound is stronger than the first one in the regime $k=\Omega(n\cdot(\frac{\log\log{n}}{\log{n}})^2)$. We note that for $k \leq \polylog(n)$, we have nearly linear lower bounds
for unbounded-depth threshold circuits computing $\PM$.
We stress that there are no restrictions on the weights of the threshold gates in these lower bounds.
We are not able to prove $\Omega(n)$
lower bounds even for depth-2 threshold circuits. Proving such lower bounds (or constructing a threshold circuit of size $o(n)$) remains open. We can prove strong lower bounds for depth-$2$ circuits in some special cases (see \Cref{sec:thrdepth2}).

\paragraph{Techniques.} In \Cref{sec:lb threshold unbdd k = logn} we obtain lower bounds for threshold circuits from the lower bounds on communication complexity of $\PM$ using a connection between threshold complexity and circuit complexity outlined by \cite{nisan1993communication}. We also prove lower bounds for threshold circuits by reducing the problem of computing a ``sparse hard'' function to computing $\PM$. Perhaps surprisingly, we show that the string matching problem can encode a truth table of an arbitrary sparse (few preimages of $1$) Boolean function.

\paragraph{DeMorgan circuits.}
We consider usual DeMorgan circuits (AND, OR, NOT gates) of \emph{unbounded fan-in} and show upper and lower bounds on the circuit complexity of $\PM$. We emphasize again that we measure the size of a circuit as the \emph{number of gates} (excluding inputs). For example, the $n$-bit AND can be computed with a circuit of size 1.

We start by analyzing the case of low-depth circuits.

\begin{restatable}[Depth-2 DeMorgan circuits]{theorem}{depthtwolb}
\label{thm:depth2lb}
For the $\PM(x,y)$ problem:
\begin{itemize}
  \item
{\bf Depth-2 upper bound:} There is a depth-2 DeMorgan circuit of size $O(n \cdot 2^{k})$.
  \item
{\bf Depth-2 lower bound:} Any depth-2 DeMorgan circuit must be of size
\begin{center}
\begin{tabular}{ l l l }
 $\Omega(n\cdot2^{k})$ & & if  $1<k\leq\sqrt{n}$ ;\\
 $\Omega(2^{2\sqrt{n-k+1}})$ & & if $k\geq\sqrt{n}$.\\
\end{tabular}
\end{center}
\end{itemize}
\end{restatable}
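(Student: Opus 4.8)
The upper bound is the easy direction: a depth-2 circuit for $\PM$ is most naturally written as an OR over the $n-k+1$ starting positions $i$ of a sub-circuit testing $x[i,i+k-1]=y$; each such equality test on $2k$ relevant bits is a function on $2k$ variables, which can be written as an OR of at most $2^k$ ANDs (one AND-term per value of $y\in\{0,1\}^k$, conjoining the literals $x_{i+j}\oplus y_j$ appropriately — note $x_{i+j}\oplus y_j$ is itself just a pair of literals inside the big OR-of-ANDs once we case on $y_j$). Collapsing the two OR layers gives a depth-2 (DNF) circuit with $O(n\cdot 2^k)$ gates, matching the claimed bound. I would also remark that the circuit is a DNF, which is what makes the matching lower bound plausible.

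For the lower bounds the plan is to reduce to known depth-2 (i.e., DNF/CNF top-gate) size lower bounds, or to prove them directly by the standard "certificate/rectangle" counting method for depth-2 circuits. The key structural observation, which should be extracted as a lemma, is that by restricting the input we can embed a disjointness-like or "exact-$k$-weight" hard function into $\PM$. Concretely, for $k\le\sqrt n$, partition the $n$ positions of $x$ into $\Theta(n/k)$ disjoint blocks of length $\Theta(k)$, hardwire the pattern to $y=1^k$ (as in the communication lower bound of Theorem~\ref{thm:comm}), and set all "separator" bits between blocks to $0$; then $\PM(x,1^k)=1$ iff some block contains $k$ consecutive $1$s. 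Within a single free block of length say $2k-1$, the indicator of "contains $1^k$" is an OR of $k$ ANDs of $k$ literals, and — crucially — this is a "tight" DNF: it has $2^{\Theta(k)}$ maxterms, or equivalently one shows by a Neciporuk-style / maxterm-counting argument that any depth-2 circuit computing the OR over $n/k$ independent copies of this block function needs $\Omega(n/k)\cdot 2^{\Omega(k)}=\Omega(n2^{k})$ gates (after reoptimizing block lengths so the per-block bound is $2^{k}$ rather than $2^{\Theta(k)}$). The cleanest route is: (i) for a CNF (OR of the circuit is the output OR, so the relevant hard object is a DNF bottom) lower-bound the number of \emph{prime implicants} or \emph{distinct certificates} of the restricted function, and (ii) argue each bottom AND-gate of a depth-2 circuit can "cover" only a bounded number of these, forcing many gates; symmetrically handle the CNF case. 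For $k\ge\sqrt n$, one instead restricts so that the number of \emph{free} positions is $\Theta(n-k+1)$ and runs the same counting argument on a pattern of length $\sqrt{n-k+1}$ inside a window of length $2\sqrt{n-k+1}$, yielding $2^{\Theta(\sqrt{n-k+1})}$; squaring appears because both a DNF \emph{and} a CNF obstruction must be simultaneously respected (a circuit of depth $2$ is one or the other at the top, and the dual bound on the other gives the product), producing the stated $2^{2\sqrt{n-k+1}}$.

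I would organize the write-up as: (1) the restriction lemma reducing $\PM$ to the "union of independent windows each testing for $1^m$"; (2) a counting lemma bounding how many minterms (resp.\ maxterms) a single bottom gate of a depth-2 circuit contributes, against the exact count of minterms/maxterms of the restricted target; (3) assembling (1)+(2) with the block-length optimization to get $\Omega(n2^k)$ in the small-$k$ regime; (4) the analogous computation in the large-$k$ regime giving $\Omega(2^{2\sqrt{n-k+1}})$. The main obstacle I anticipate is step (2) in the large-$k$ regime: getting the \emph{exponent~2} (rather than~1) in $2^{2\sqrt{n-k+1}}$ requires carefully exploiting that a depth-2 circuit is simultaneously constrained "from below" by its minterm structure and "from above" by being forced through a single top gate — so one must pick the restriction so that the restricted $\PM$ has \emph{both} many minterms (killing small DNFs) \emph{and} many maxterms (killing small CNFs), and the standard window construction for $1^m$ has an asymmetry (few maxterms relative to minterms) that has to be fixed by choosing the pattern and window lengths asymmetrically, or by using a different hard sub-pattern than the all-ones string. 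Verifying that $\PM$ actually contains such a doubly-hard sub-function under a legal restriction is the delicate point; everything else is routine DNF/CNF combinatorics.
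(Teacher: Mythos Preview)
Your upper bound is fine and matches the paper.

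Your lower-bound plan has a fatal gap at the very first step. Once you \emph{hardwire the pattern} to $y=1^k$, the resulting function $x\mapsto\PM(x,1^k)$ has a DNF of size exactly $n-k+1$: one term $\bigwedge_{j=0}^{k-1}x_{i+j}$ per starting position $i$. No restriction of $x$ can push the DNF size of a sub-function above that of the original, so the strongest DNF lower bound obtainable via your restriction is $O(n)$, not $\Omega(n\cdot 2^k)$. (Your claim that the block function ``contains $1^k$'' on $2k-1$ bits has $2^{\Theta(k)}$ maxterms is also false: setting the single middle bit to $0$ already hits every window, and more generally the minimal hitting sets of these $k$ shifted intervals are only polynomially many in $k$.) The paper's DNF bound works precisely because it does \emph{not} fix $y$: it exhibits $2^{k-1}(n-k+1)$ accepting inputs $(0^i p\, 0^{n-k-i},\,p)$, one per pattern $p$ starting with $1$ and per shift $i$, and shows any two of them disagree on some coordinate in a way that forbids a common DNF term. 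The $2^k$ factor comes from ranging over patterns, which your restriction throws away.

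For the large-$k$ bound your reasoning for the exponent $2$ is also off. You do not ``multiply'' a DNF obstruction and a CNF obstruction (the depth-$2$ bound is the \emph{minimum} of the DNF and CNF bounds, not a product). In the paper the $2$ arises from an AM--GM step inside the CNF argument: any maxterm of $\PM$ must fix $k_1$ bits of $y$ and $n_1$ bits of $x$ with $n_1 k_1\ge n-k+1$ (each fixed $x$-bit can ``kill'' at most $k_1$ shifts), hence every CNF clause has width $n_1+k_1\ge 2\sqrt{n-k+1}$. Combined with a count of the $0$-inputs this gives $\CNF(\PM)\ge |Z|/2^{n+k-2\sqrt{n-k+1}}=\Omega(2^{2\sqrt{n-k+1}})$. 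The DNF bound $2^{k-1}(n-k+1)$ dominates this when $k\ge\sqrt{n}$, so the minimum is the CNF bound. Your proposed route via a ``doubly-hard sub-function under restriction'' does not produce this and, as above, cannot once $y$ is frozen.
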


For $k\leq\sqrt{n}$, our depth-2 results are optimal (up to a constant factor). For large $k$, say $k=n/2$, there is (similarly as for communication) a huge gap in our bounds: $2^{\Omega(\sqrt{n})}$ versus $2^{O(n)}$. We do not know what bound to conjecture here as the correct answer.

For DeMorgan circuits, the celebrated H{\aa}stad's switching lemma \cite{haastad1987computational} established exponential lower bounds for bounded depth circuits computing explicit functions (e.g., majority, parity). We note that in contrast to the parity function, the string matching function admits a polynomial size circuit of depth 3. It is unclear (to us) how to
leverage known tools for proving lower bounds for small depth circuits (such as the switching lemma) towards proving super linear lower bounds for small depth DeMorgan circuits computing $\PM$. Whether the string matching problem can be computed by a depth $3$ (or even unrestricted) DeMorgan circuit of size $O(n)$ remains open.

\bigskip\noindent
Next, we prove that the circuit complexity of $\PM$ for general DeMorgan circuits (unrestricted depth and fan-in) must be $\Omega(n)$. We also include a relatively straightforward upper bound (which may have been discovered before; \cite{galil1985optimal} claims an upper bound $O(n \log^2 n)$ without a proof).

\begin{restatable}[General DeMorgan circuits]{theorem}{elimination}
\label{thm:lb DeMorgan gate elimination}
For the $\PM(x,y)$ problem:
\begin{itemize}
  \item
{\bf Upper bound:} There is a DeMorgan circuit of size $O(n k)$ and depth $3$.
  \item
{\bf Lower bound:} Any DeMorgan circuit must be of size at least $n/2$.%
\end{itemize}
\end{restatable}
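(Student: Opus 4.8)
The plan is to handle the two bullets separately. For the upper bound, I would build a depth-3 circuit in the obvious ``try every shift'' manner: for each $i \in [n-k+1]$, form the conjunction $\bigwedge_{j \in [k]} \big(x_{i+j-1} \leftrightarrow y_j\big)$, where each bit-equality $x_{i+j-1} \leftrightarrow y_j$ is itself computed as $(x_{i+j-1} \wedge y_j) \vee (\neg x_{i+j-1} \wedge \neg y_j)$, and then take the OR over all $i$. Naively this has depth $4$, but the two AND-levels (the inner gate producing $a\wedge b$ and the outer $k$-fold AND) collapse into a single unbounded fan-in AND of literals and their negations, giving depth $3$: an OR of ANDs of $2$-term ORs of literals. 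The size is $O(nk)$ gates: there are $O(nk)$ bit-equality OR-gates, $O(n)$ AND-gates, and one top OR-gate. (One should double-check the depth bookkeeping and that negations are pushed to the leaves; this is routine.)

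For the lower bound I would use gate elimination. Fix an optimal DeMorgan circuit $C$ computing $\PM_{n,k}$; I want to argue that by restricting input variables one kills at least one gate per step while the residual function stays ``string-matching-like'' and nonconstant. The cleanest route is to restrict $y$ to the all-ones pattern $y = 1^k$ (substituting constants into the $y$-inputs can only remove gates, never add them), reducing to the function on $x \in \zo^n$ that tests whether $x$ contains $k$ consecutive ones; call it $f_n$. Now I would show that any DeMorgan circuit for $f_n$ has at least $n/2$ gates by induction on $n$: pick a variable $x_i$ feeding the largest number of gates, or more carefully pick a variable that top-fans-in to some gate; setting $x_i \in \zo$ eliminates that gate, and I claim one can choose the restriction and two coordinates to fix so that the residual function is (isomorphic to) $f_{n-2}$ or at least remains nonconstant and depends on all remaining variables — e.g.\ setting a single $x_i = 0$ for $i$ near the middle splits the window structure, and with a little care restricting two variables reduces $f_n$ to $f_{n'}$ with $n' \ge n-2$. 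Then $S(f_n) \ge 1 + S(f_{n-2}) \ge n/2$ by induction, with a trivial base case.

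The arithmetic target $n/2$ suggests the intended argument eliminates one gate per two fixed variables, which is exactly what a generic DeMorgan gate-elimination step buys you (fixing one variable kills $\ge 1$ gate, but to keep the function in the right form one may need to spend a second variable). The main obstacle — and the step I would spend the most care on — is the structural claim that $f_n$ is robust under single-variable restrictions: that for some choice of $i$ and some value $b$, the restriction $x_i := b$ leaves a function that still contains a full-size copy of $f_{n-1}$ (or $f_{n-2}$) on the surviving variables, so that the induction hypothesis applies and no surviving variable has become irrelevant. Setting $x_i := 1$ is dangerous (it can create a length-$k$ run of ones and make the function constant $1$ once enough ones accumulate), so one wants $x_i := 0$; one then checks that $f_n\!\restriction_{x_i=0}$, viewed on the two segments to the left and right of position $i$, still has a window of $k$ consecutive positions disjoint from $i$ on at least one side as long as $n - 1 \ge k$, and that all remaining variables are relevant. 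Once this robustness lemma is in place, the gate-elimination induction and the final bound $S(\PM_{n,k}) \ge S(f_n) \ge n/2$ follow immediately.
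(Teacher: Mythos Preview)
Your upper-bound sketch lands on the right structure (``an OR of ANDs of 2-term ORs of literals'', i.e.\ $\bigvee_i\bigwedge_j\bigl[(x_{i+j-1}\vee\neg y_j)\wedge(\neg x_{i+j-1}\vee y_j)\bigr]$), which is exactly the paper's depth-$3$ construction. But the justification is garbled: having written bit-XNOR in DNF form $(a\wedge b)\vee(\neg a\wedge\neg b)$, you say ``the inner gate producing $a\wedge b$ and the outer $k$-fold AND collapse''. They do not---an OR sits between them. What actually collapses is the two-clause AND in the \emph{CNF} form $(a\vee\neg b)\wedge(\neg a\vee b)$ with the outer AND over $j$. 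This is cosmetic; once you use the CNF of XNOR the construction is fine and matches the paper.

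The lower bound has a genuine gap, and the approach as written fails. You begin by substituting $y=1^k$, reducing to $f_n(x)=[\,1^k\text{ occurs in }x\,]$, and then attempt gate elimination on $f_n$. But $f_n$ admits a DeMorgan circuit of size $n-k+2$ (namely $\bigvee_i\bigwedge_{j=0}^{k-1}x_{i+j}$), so for large $k$---in the extreme $k=n$, where $f_n=x_1\wedge\cdots\wedge x_n$ is a single gate---no argument about $f_n$ can possibly yield $n/2$. Substituting constants for the $y$-inputs throws away precisely the structure the lower bound needs. There is also a second problem with your elimination step: you write ``setting $x_i:=1$ is dangerous\ldots so one wants $x_i:=0$'', but in gate elimination the value is not yours to choose---it is dictated by whether the gate $x_i$ feeds is an AND or an OR, and the circuit designer may arrange that the variable you pick feeds only OR gates.

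The paper's fix is to \emph{not} fix $y$ and to restrict only the odd-indexed text bits $x_1,x_3,x_5,\ldots$ in order. The key claim is that for \emph{any} assignment to $x_1,x_3,\ldots,x_{2t-1}$ (whatever values earlier elimination steps forced), the residual function still depends on $x_{2t+1}$. This is witnessed by setting all unfixed bits of $x$ to $0$ except $x_{2t+2}:=1$, flipping $x_{2t+1}$ from $0$ to $1$, and \emph{then} choosing $y$ to be any length-$k$ substring of the flipped string covering positions $2t+1,2t+2$; that $y$ contains the substring $11$, while the unflipped $x$ contains no $11$ (its only $1$'s sit at odd fixed positions and at the isolated even position $2t+2$). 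The freedom to choose $y$ adaptively is exactly what makes the dependence survive arbitrary forced values at earlier steps---and it is this idea that your restriction $y=1^k$ discards.
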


\paragraph{Techniques.} We prove the lower bound on $\DNF$ by exhibiting an explicit set of inputs to $\PM$ each of which requires a separate clause in any $\DNF$. Our lower bound for $\CNF$ involves estimating the size of maxterms of $\PM$. For the lower bound against circuits of unrestricted depth, we adjust the gate elimination technique to the case of unbounded fan-in circuits. See Section~\ref{sec:depth2} for details.


\subsection{Results: Learning}
Finally, we seek to understand the sample complexity of PAC-learning the string matching function $\PMargs{n}{\ell}(x,\sigma)$, where $x$ is an arbitrary string of length $n$
and $\sigma$ is a \emph{fixed} pattern of length $\ell \leq k$. Towards this goal
we prove (almost) tight bounds on the $\VC$ dimension of the class of these functions. The $\VC$ dimension essentially determines the sample complexity needed to learn the pattern $\sigma$ from a set of i.i.d. samples in the PAC learning framework. We formalize these notions below.

Let $\Sigma$ be a fixed finite alphabet of size $|\Sigma|\geq2$.\footnote{In contrast to the circuit and communication setting, for the learning problem we consider nonbinary alphabets.} By $\Sigma^n$ we denote the set of strings over $\Sigma$ of length $n$, and by $\Sigma^{\leq k}$ we denote the set of strings of length at most $k$. We study the $\VC$ dimension of the class of functions, where each function is identified with a pattern of length at most $k$, and outputs $1$ only on the strings containing this pattern. Recall that the length of the pattern $k=k(n)\leq n$ can be a function of $n$. We now define the set of functions we wish to learn:

\begin{definition}
For a fixed \emph{finite} alphabet $\Sigma$ and an integer $k>0$, let us define the class of Boolean functions $\mathcal{H}_{k,\Sigma}$ over $\Sigma^n$ as follows.
Every function $h_{\sigma} \in \mathcal{H}_{k,\Sigma}$ is parameterized by a pattern $\sigma \in \Sigma^{\leq k}$ of length at most $k$. Hence,
$|\mathcal{H}_{k,\Sigma}|=\frac{|\Sigma|^{k+1}-1}{|\Sigma|-1}$. For a string $s\in\Sigma^n$, $h_{\sigma}(s)=1$ if and only if $s$ contains $\sigma$ as a substring.
\end{definition}

To analyze the sample complexity required to learn a function from $\mathcal{H}_{k,\Sigma}$ we first define \emph{VC dimension}.
\begin{definition}
Let $\F$ be a class of functions from a set $D$ to $\{0,1\}$, and let $S\subseteq D$. A \emph{dichotomy} of $S$ is one of the possible labellings of the points of $S$ using a function from $\F$.
$S$ is \emph{shattered} by $\F$ if $\F$ realizes all $2^{|S|}$ dichotomies of $S$.
The \emph{$\VC$ dimension} of $\F$, $\VC(\F)$, is the size of the largest set $S$ shattered by $\F$.
\end{definition}

In particular, $\VC(\mathcal{H}_{k,  \Sigma})=d$ if and  only if there is a set $S$ of $d$ strings of length $n$ such that for every $S' \subseteq S$,
there exists a pattern $P_S$ of length at most $k$ occurring in all the strings in $S'$ and not occurring in all the strings in $S \setminus S'$.

A class of functions $\F$ is PAC-learnable\footnote{For a precise definition of PAC learning, see Definition~\ref{def:PAC}.} with accuracy $\varepsilon$ and confidence $1-\delta$ in $\Theta\left(\frac{\VC(\F)+\log(1/\delta)}{\varepsilon}\right)$ samples~\cite{blumer1989learnability,ehrenfeucht1989general,hanneke2016optimal}, and is agnostic PAC-learnable in $\Theta\left(\frac{\VC(\F)+\log(1/\delta)}{\varepsilon^2}\right)$ samples~\cite{anthony2009neural,shalev2014understanding}. Thus, tight bounds on the $\VC$ dimension of a class of functions give tight bounds on its sample complexity.

Our main result is a tight bound on the $\VC$ dimension of $\mathcal{H}_{k,\Sigma}$ (up to low order terms). That is:

\begin{restatable}{theorem}{VCmain}
\label{thm:VCmain}
Let $\Sigma$ be a finite alphabet of size $|\Sigma|\geq2$, then
$$\VC(\mathcal{H}_{k,  \Sigma}) = \min(\log{|\Sigma|}(k-O(\log k)), \log n+O(\log \log n)) \; .$$
\end{restatable}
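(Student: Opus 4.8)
The plan is to establish the two quantities inside the minimum as separate upper bounds on $\VC(\mathcal{H}_{k,\Sigma})$, and then to exhibit a single family of strings that, depending on which of the two terms is smaller, is shattered and has the corresponding size. Throughout, if $S=\{s_1,\dots,s_d\}\subseteq\Sigma^n$ is shattered I fix, for every $T\subseteq S$, a pattern $P_T\in\Sigma^{\le k}$ that occurs in exactly the strings of $T$. The one structural fact I will lean on is: \emph{if $P_A$ is a substring of $P_B$ then every string containing $P_B$ contains $P_A$, hence $B\subseteq A$.}

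\emph{The bound $\VC(\mathcal{H}_{k,\Sigma})\le\log n+O(\log\log n)$.} By the fact above, if $\mathcal F\subseteq 2^S$ is an antichain then the patterns $\{P_T:T\in\mathcal F\}$ are pairwise incomparable in the substring order. Taking $\mathcal F$ to be the $\binom{d-1}{\lfloor(d-1)/2\rfloor}\ge 2^{d-1}/d$ subsets of $S$ of size $\lfloor(d-1)/2\rfloor+1$ that contain $s_1$, I get that many pairwise substring–incomparable strings, all of which are substrings of $s_1$. But a string of length $n$ has at most $n$ pairwise substring–incomparable substrings: mapping each such substring to the starting index of its leftmost occurrence is injective, since two distinct substrings sharing a leftmost-occurrence position would have the shorter one a prefix, hence a substring, of the longer. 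Therefore $2^{d-1}/d\le n$, which gives $d\le\log n+O(\log\log n)$. For the other term of the minimum, a shattered set of size $d$ needs $2^d$ distinct patterns in $\Sigma^{\le k}$, so $2^d\le|\mathcal{H}_{k,\Sigma}|=\tfrac{|\Sigma|^{k+1}-1}{|\Sigma|-1}$ and hence $d\le k\log|\Sigma|+O(1)$; together with the previous inequality this yields $\VC(\mathcal{H}_{k,\Sigma})\le\min(\cdot,\cdot)$ up to the stated lower-order corrections.

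\emph{The lower bound.} Set $\ell:=\min\bigl(k,\ \lceil\log_{|\Sigma|}(n/\log n)\rceil\bigr)$ and $d:=\lfloor\log|\Sigma|\cdot(\ell-c\log\ell)\rfloor$ for a suitable constant $c$; then $d=\min\bigl(\log|\Sigma|(k-O(\log k)),\ \log n-O(\log\log n)\bigr)$, so it suffices to shatter $d$ strings using patterns of length $\le\ell$. Fix symbols $\sigma_1\ne\sigma_2$ in $\Sigma$, let $r=\Theta(\log_{|\Sigma|}\ell)$, and to each $T\subseteq[d]$ assign a ``self-delimiting codeword'' $P_T:=\sigma_1^{\,r}\sigma_2\,\beta(T)\,\sigma_2\sigma_1^{\,r}$ of length $\ell$, where the $\beta(T)$ are distinct words in $\Sigma^{\ell-2r-2}$ having no run of $r$ consecutive $\sigma_1$'s; a routine count shows there are at least $2^d$ such $\beta(T)$ for this choice of parameters. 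The point is that $\sigma_1^{\,r}\sigma_2$ occurs in a concatenation of such codewords only at the start of a codeword block, so the string $s_i:=\bigl(\text{concatenation of all }P_T\text{ with }i\in T\bigr)$, padded to length exactly $n$ by a word avoiding $\sigma_1^{\,r}\sigma_2$, contains $P_T$ if and only if $P_T$ is one of its blocks, i.e.\ iff $i\in T$. Since $|s_i|\le 2^{d-1}\ell+O(1)\le n$ by the choice of $\ell$, we have $s_i\in\Sigma^n$, and $\{s_1,\dots,s_d\}$ is shattered. (The degenerate case $\ell=1$, i.e.\ $|\Sigma|\gtrsim n$, is handled separately by using $2^d$ distinct single symbols as the patterns.)

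\emph{Main obstacle.} The substantive difficulty is this lower-bound construction over a small — in particular binary — alphabet, where no symbol is free to serve as a separator: one must delimit codewords by runs $\sigma_1^{\,r}$, which forces an $\Theta(\log k)$-symbol overhead per codeword and is exactly the origin of the $-O(\log k)$ term. The delicate part is to verify that this delimiter admits no spurious codeword occurrences — neither across block boundaries, nor inside some $\beta(T)$, nor in the padding — and that the overhead is essentially unavoidable; this is where combinatorics-on-words tools about periods and runs (cf.\ \Cref{sec:periods}) naturally come in.
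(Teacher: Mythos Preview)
Your proposal is correct and follows essentially the same route as the paper. For the upper bound you use the identical Sperner/antichain idea (the paper adds the element $d$ to each set in a Sperner family over $\{1,\dots,d-1\}$, just as you fix $s_1$ in each subset); your injectivity-of-leftmost-occurrence argument is a clean variant of the paper's ``sort by starting position, then ending positions must increase'' count, both yielding $\binom{d-1}{\lfloor(d-1)/2\rfloor}\le n$. For the lower bound the paper's \Cref{lem:construction} builds the codeword family $T_m=\{s\circ 0^{\lceil\log m\rceil+1}1 : s\text{ avoids }0^{\lceil\log m\rceil+1}\}$ and concatenates them exactly as you do; your two-sided delimiter $\sigma_1^{r}\sigma_2\,\beta\,\sigma_2\sigma_1^{r}$ is a cosmetic variant of their one-sided $s\,0^{r}1$. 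One small looseness: merely requiring the padding to avoid $\sigma_1^{r}\sigma_2$ does not by itself rule out a spurious $\sigma_1^{r}\sigma_2$ straddling the last block and the first padding symbol---just pad with $\sigma_2^{*}$ (as the paper pads with $1$'s) and this disappears. Your closing remark that the $-O(\log k)$ overhead is ``essentially unavoidable'' is not something the paper proves, nor is it needed for the stated theorem.
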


It follows that the sample complexity of learning patterns is $O(\log n)$. We also show that there are efficient polynomial time algorithms solving this learning problem.
See Corollary~\ref{cor:runtime} for details.

\paragraph{Techniques.}
We prove our upper bound on the VC dimension by a double counting argument. This argument uses Sperner families to show that shattering implies a ``large'' family of non-overlapping patterns, which, on the other hand, is constrained by the length $n$ of the strings
that we shatter. The lower bound is materialized by the idea to have $2^d$ patterns $P=\{p_0 \ldots p_{2^{d}-1}\}$ and $d$ strings such that the $i$th string is a concatenation of all patterns
with the binary expansion of their index having the $i$th bit equal $1$. We construct a family of patterns $T$ with the property that for any pair of distinct strings $\alpha,\beta \in T$, their concatenation $\alpha \beta$ does not contain a string $\gamma \in T, \gamma \neq \alpha, \beta$. Using this family (with some additional technical requirements) we are able to show that $P$ shatters a set of $d$ strings implying our lower bound on the VC dimension.
\section{More related work} \label{sec:related}
\paragraph{Circuit complexity.}

\emph{Upper bounds} on the circuit complexity of 2D image matching problem under projective transformations was studied in \cite{rosenke2016exact}. In this problem, which is considerably more complicated than the pattern matching problems we study, the goal is to find a projective transformation $f$ such that $f(A)$ ``resembles''\footnote{We refer to \cite{rosenke2016exact} for the precise definition of distance used there.} $B$ for two images $A, B$. Here, images are 2D square arrays of dimension $n$ containing discrete values (colors). In particular, it is proven that this image matching problem is in $\TC^0$ (it admits a threshold circuit of polynomial size and logarithmic depth in $n$). These results concern a different problem than the string matching considered here, and do not seem to imply the upper bounds we obtain for circuits solving the string matching problem.

The idea to lower bound the circuit complexity of Boolean functions that arise in feature detection was studied in \cite{legenstein2001foundations,legenstein2002neural}. These works assumed a setting with two types of features, $a$ and $b$, with detectors corresponding to the two types situated on a 1D or 2D grid. The binary outputs of these features are represented by an array of $n$ positions:
$a_1,...,a_n$ (where $a_i=1$ if the feature $a$ is detected in position $i$, and $a_i=0$ otherwise) and an array $b_1,...,b_n$ which is analogously defined with respect to $b$. The Boolean function $P_{LR}^{n}$ outputs $1$ if there exist $i,j$ with $i<j$ such that $a_i=b_j=1$, and $0$ otherwise. This function is advocated in \cite{legenstein2002neural} as a simple example of a detection problem in vision that requires to identify spatial relationship among features. It is shown that this problem can be solved by $O(\log n)$ threshold gates. A 2-dimensional analogue where the indices $i=(i_1,i_2)$ and $j=(j_1,j_2)$ represent two-dimensional coordinates and one is interested whether there exist indices $i$ and $j$ such that $a_i=b_j=1$ and $j$ is above and to the right of the location $i$ is studied in \cite{legenstein2002neural}. Recently, the two-dimensional version was studied in \cite{uchizawa2015threshold} where a $O(\sqrt{n})$-gate threshold implementation was given along with a lower bound of $\Omega(\sqrt{n/\log n})$ for the size of any threshold circuit for this problem. We remark that the problem studied in \cite{legenstein2001foundations,legenstein2002neural,uchizawa2015threshold} is different from ours, and different proof ideas are needed for establishing lower bounds in our setting.



\paragraph{Learning patterns.}

The language of all strings (of arbitrary length) containing a fixed pattern is regular and can be recognized by a finite automata. There is a large literature on learning finite automata (e.g., \cite{angluin1987learning,freund1997efficient,ron1997exactly}). This literature is mostly concerned with various active learning models and it does not imply our bounds on the sample complexity of learning $\mathcal{H}_{k,\Sigma}$.

Motivated by computer vision applications, several works have considered the notion of \emph{visual concepts}: namely a set of shapes that can be used to classify images in the PAC-learning framework \cite{kushilevitz1996learning,shvaytser1990learnable}. Their main idea is that occurrences of shapes (such as lines, squares etc.) in images can be used to classify images and that furthermore the representational class of DNF's can represent occurrences of shapes in images. For example, it is easy to represent the occurrence of a fixed pattern of length $k$ in a string of size $n$ as a DNF with $n-k$ clauses (see e.g., Lemma~\ref{thm:depth2 construction}). We note that these works do not study the VC dimension of our pattern matching problems (or VC bounds in general). We also observe that no polynomial algorithm is known for learning DNF's and that there is some evidence that the problem of learning DNF is intractable \cite{daniely2016complexity}. Hence the result in \cite{kushilevitz1996learning,shvaytser1990learnable} do not imply that our pattern learning problem (represented as a DNF) can be done in polynomial time.

\section{Communication Complexity}\label{sec:CC}

In this section we prove \Cref{thm:comm}, and also discuss the possibility of a better upper bound.
\comm*

\subsection{Periods in strings} \label{sec:periods}

We say a string $x\in\{0,1\}^n$ has \emph{period $p\in\{0,1\}^i$ of order $i$} if $x$ is a prefix of a high enough power $p^m$ (for some $m\geq 1$). Equivalently, $x$ has a period of order $i$ iff $x[i+1,n]=x[1,n-i-1]$. A classic lemma characterizes the orders of short periods in a string.
\begin{lemma}[{\cite{lyndon62equation}}] \label{lem:period}
If $x$ has periods of orders $i,j$, $i+j\leq|x|$, then there is one of order $\mathrm{gcd}(i,j)$.
\end{lemma}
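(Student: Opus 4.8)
The plan is to run the subtractive Euclidean algorithm, inducting on $i+j$. We may assume $1\le j\le i$ (if $j=0$ the statement is vacuous, and $i=j$ is the base case, where $\gcd(i,j)=i$ and there is nothing to prove). For the inductive step, the key claim is the following: \emph{if $x$ has periods of orders $i$ and $j$ with $j<i$ and $i+j\le|x|$, then $x$ also has a period of order $i-j$}. Granting this claim, the orders $i-j$ and $j$ are periods of $x$ satisfying $(i-j)+j=i\le|x|$, so the induction hypothesis (applied since $(i-j)+j=i<i+j$, as $j\ge1$, which also guarantees termination) yields a period of order $\gcd(i-j,j)=\gcd(i,j)$, as desired.

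So the whole argument reduces to the claim. Writing $n=|x|$ and using that ``period of order $p$'' means $x_t=x_{t+p}$ whenever $1\le t\le n-p$, I would show $x_t=x_{t+i-j}$ for every $t$ with $1\le t\le n-(i-j)$, splitting on the size of $t$:
\begin{itemize}
\item \textbf{Case $t>j$.} Then $1\le t-j\le n-i$ (the upper bound from $t\le n-i+j$). The order-$i$ period gives $x_{t-j}=x_{t-j+i}=x_{t+i-j}$, and the order-$j$ period gives $x_{t-j}=x_t$; hence $x_t=x_{t+i-j}$.
\item \textbf{Case $t\le j$.} Then $t+i\le i+j\le n$, so the order-$i$ period gives $x_t=x_{t+i}$; and since also $1\le t+i-j\le n-j$, the order-$j$ period gives $x_{t+i-j}=x_{t+i}$. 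Hence $x_t=x_{t+i-j}$.
\end{itemize}

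I expect the only real subtlety to be locating where the hypothesis $i+j\le|x|$ is actually needed: it is used exactly once, in the second case, where $t+i\le i+j\le n$ is precisely what keeps the ``shift up by $i$'' inside the string (in the first case one instead shifts \emph{down} by $j$, which is always legal). Everything else is routine index-range bookkeeping. Once this observation is made, the Euclidean induction closes the argument with no further work.
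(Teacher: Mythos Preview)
Your proof is correct: the subtractive Euclidean induction together with the two-case verification of the $(i-j)$-period is a standard and complete argument, and you have correctly identified that the hypothesis $i+j\le|x|$ is needed precisely in the ``shift up by $i$'' step of the second case. Note, however, that the paper does not actually prove this lemma at all---it is quoted as a classical result of Lyndon and Sch\"utzenberger (their periodicity lemma, often attributed to Fine and Wilf) and used as a black box---so there is no in-paper proof to compare against.
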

In particular, all periods of order $\leq n/2$ are powers of some \emph{primitive period} (shortest period of order $\leq n/2$). It is natural to ask: how many bits of communication are required to decide whether a string has a primitive period? We will discuss this in \Cref{sec:evidence}.

\subsection{Upper bound}

We start by describing an $O(\log k\cdot n/k)$-bit deterministic protocol for $\PM$ assuming the pattern $y$ is fixed (known to both players). This immediately gives a protocol of cost $O(k+\log k\cdot n/k)$ when $y$ is \emph{not} fixed: Alice and Bob simply exchange all bits of the $k$-bit pattern and then run the protocol that assumes $y$ is fixed. When $k\leq \sqrt{n}$ this yields the first upper bound claimed in \Cref{thm:comm}.
\begin{lemma} \label{lem:fixed-y}
For every fixed pattern $y\in\{0,1\}^k$ the function $x\mapsto \PM(x,y)$ admits a deterministic protocol of cost $O(\log k\cdot n/k)$ under any bipartition of the input $x$.
\end{lemma}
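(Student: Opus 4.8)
\emph{Step 1 (reduction to $O(n/k)$ windows).}
Partition the positions $[n]$ into $\lceil n/\lfloor k/2\rfloor\rceil \le 3n/k+1$ consecutive blocks $B_1,B_2,\dots$ of length $\lfloor k/2\rfloor$. Any occurrence of $y$ starting in $B_j$ occupies an interval of length $k$, hence lies inside the window $W_j:=B_j\cup B_{j+1}\cup B_{j+2}$ and, since $k-\lfloor k/2\rfloor\ge\lfloor k/2\rfloor$, it covers \emph{all} of the middle block $B_{j+1}$. Therefore
$\PM(x,y)=\bigvee_j \mathbf 1[\,y\text{ occurs in }x\text{ at a position in }B_j\,]$,
and it suffices to give, for each $j$, a deterministic $O(\log k)$-bit protocol deciding the $j$-th disjunct while reading only the bits of $x$ inside $W_j$ (with whatever bipartition the global one induces). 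The players run these sub-protocols one after another; each ends with its one-bit output being common knowledge, so the final disjunction is free, and the total cost is $O(\log k)\cdot O(n/k)=O(\log k\cdot n/k)$.

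\emph{Step 2 (periodic structure inside one window).}
Fix $j$. Since $|B_j|=\lfloor k/2\rfloor<k=|y|$, any two start positions $p<p'$ in $B_j$ at which $y$ occurs make the two copies of $y$ overlap, so $y$ has a period of order $p'-p$. Let $\pi$ be the shortest period of $y$ of order $\le k/2$; if no such period exists then $y$ occurs at $\le1$ position in $B_j$ (easy case, handled by the next step with $\pi$ replaced by the trivial analysis). Otherwise, \Cref{lem:period} forces every difference $p'-p$ to be a multiple of $\pi$, and a short additional argument (using $2\pi\le k$) shows the set of occurrence positions in $B_j$ is a \emph{contiguous} arithmetic progression with common difference $\pi$. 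Writing $u:=y[1,\pi]$ and $y=u^{t}v$ with $t=\lfloor k/\pi\rfloor\ge2$ and $v$ a prefix of $u$, one checks directly that $y$ occurs at $p$ iff $x_q=u_{((q-p)\bmod\pi)+1}$ for all $q\in[p,p+k-1]$; in particular, for any such $p$ the middle block $B_{j+1}$ equals a length-$\lfloor k/2\rfloor$ substring of $u^{\infty}$, and since $u$ is primitive its phase $p\bmod\pi$ is uniquely determined by $B_{j+1}$.

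\emph{Step 3 (the sub-protocol).}
Both players know $y$, hence $\pi$, $u$, $t$. Set $w:=x|_{B_{j+1}}$, a string of length $\lfloor k/2\rfloor$ bipartitioned between them. By Step 2, $y$ occurs at a position $p\in B_j$ iff $w$ occurs in the \emph{known} string $y$ at the position $a:=\lfloor k/2\rfloor+2-(p-\min B_j+1)\in[2,\lfloor k/2\rfloor+1]$ \emph{and} the two boundary equalities $x[p,\min B_{j+1}-1]=y[1,a-1]$ and $x[\max B_{j+1}+1,p+k-1]=y[a+\lfloor k/2\rfloor,k]$ both hold; note that once $a$ is fixed, the right-hand sides of all three equalities are strings known in advance, so each such test costs $O(1)$ bits (each player checks their own bits against the fixed string and sends one bit). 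The set of valid $a$ is, by periodicity of $y$, a contiguous arithmetic progression with common difference $\pi$. The players locate this progression, and thereby decide whether some valid $a$ yields a genuine occurrence, by a binary search whose individual queries are all of the above ``equality-against-a-fixed-string'' type, or of the ``does any of my bits in $[m,m']$ violate a pattern fixed in advance'' type (also $O(1)$ bits via an $\mathrm{OR}$); this costs $O(\log k)$ bits, completing the window's budget.

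\emph{Main obstacle.}
The delicate point is precisely the binary search in Step 3: a naive binary search on the occurrence position is circular (``does $w$ occur in a prefix of $y$'' is another instance of the same problem), and naively iterating over the up-to-$\pi$ residues mod $\pi$ costs $\Theta(\pi\log k)$, which is too much. The resolution is to use the periodicity from \Cref{lem:period} to phrase each range query as a comparison against a string fixed before the query is asked: first pin down the common difference $\pi$ and the phase of the periodic stretch containing $B_{j+1}$ (unique because $u$ is primitive), then reduce to locating the endpoints of a maximal ``$\pi$-periodic run'' exactly as one detects a long run of a fixed symbol. Granting this, the remaining ingredients — the block reduction of Step 1, the periodicity lemma, and the fact that testing equality of a bipartitioned substring of $x$ against a fixed known string costs $O(1)$ bits — are routine.
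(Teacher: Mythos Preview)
Your Step~1 and the periodicity analysis in Step~2 are fine, but Step~3 has a real gap precisely where you flag the ``main obstacle''. All of your binary-search queries are comparisons of a bipartitioned substring of $x$ against a \emph{fixed} string, and that fixed string is $u^\infty$ at a specific phase. You write ``first pin down \ldots\ the phase of the periodic stretch containing $B_{j+1}$'', but you give no protocol for this, and it is not free: each player sees only scattered bits of $w=x|_{B_{j+1}}$, and the set of phases consistent with one player's bits need not be an interval or anything binary-searchable. For instance, if $y=(010)^t$ so $\pi=3$ and Alice owns a single bit $x_q=0$ of $w$, then two of the three residues mod~$3$ are consistent with her view; intersecting two such unstructured subsets of $[\pi]$ to find the unique common phase is not obviously an $O(\log k)$-bit task. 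Your endpoint binary search is sound \emph{once the phase is known}, but you have not shown how to learn the phase.

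The paper's proof avoids all of this with a much simpler idea you are missing. For a window whose candidate start positions lie in an interval of length $\le k$, Alice sends just two numbers: the smallest index $i$ and the largest index $j$ such that her own bits are consistent with $y$ starting there. The point is that ``Alice's bits are consistent with $y$ at $i$'' lets Bob read off every bit Alice owns in $[i,i+k-1]$ (each must equal the corresponding bit of the known $y$); likewise for $j$. Since $j-i<k$, the union $[i,i+k-1]\cup[j,j+k-1]=[i,j+k-1]$ covers every position relevant to any start in $[i,j]$, and any true occurrence must start in $[i,j]$. Bob now knows all of Alice's relevant bits and decides directly---two indices, $2\log k+O(1)$ bits, no periods, no binary search. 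Your approach can in fact be rescued by using exactly this trick to reveal the phase, but once you do, the periodicity machinery becomes redundant.
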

\begin{proof}
Since every occurrence of pattern $y$ in $x$ must start in one of the $n/k$ many intervals $[1,k],[k,2k],\ldots$, it suffices to to design a $O(\log k)$-bit protocol to test whether $y$ occurs starting in a particular interval, and then repeat this protocol for every interval. Let us describe a protocol for the first interval $[1,k]=[k]$.

Suppose Alice is given the bits $x_I$ for $I\subseteq[n]$ and Bob the bits $x_{\bar{I}}$ for $\bar{I}\coloneqq[n]\smallsetminus I$. The protocol proceeds as follows. First, Alice sends two indices $i,j\in[k]$ where $i$ (resp.\ $j$) is the smallest (largest) index such that it is consistent with Alice's bits $x_I$ that $y$ could appear in $x$ starting at position~$i$~($j$). (If there are no such indices, then the players may output ``no match''.) From $i$ Bob can infer all Alice's bits in the interval $[i,i+|y|-1]$ (the bits agree with $y$, which is known to Bob), and similarly from $j$ Bob can infer Alice's bits in $[j,j+|y|-1]$. Altogether Bob learns Alice's bits in $[i,i+|y|-1]\cup [j,j+|y|-1]=[i,j+k-1]$. Together with his own bits $x_{\bar{I}}$ Bob can then determine whether $y$ occurs in $x$ with a starting position in $[k]$. The cost of the protocol (sending the two indices and the final output value) is $2\log k+1$.
\end{proof}

Next we supply the protocol for the second upper bound in \Cref{thm:comm}.
\begin{lemma} \label{lem:cc-ub}
For $k\geq\sqrt{n}$ the function $\PM$ admits a randomized protocol of cost $O(\log n\cdot \sqrt{n})$ under any bipartition of the input $(x,y)$.
\end{lemma}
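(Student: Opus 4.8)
\bigskip
\noindent\textbf{Proof plan.}
The idea is to combine the known-pattern protocol of \Cref{lem:fixed-y} with Karp--Rabin polynomial fingerprints, using \Cref{lem:period} to control how occurrences of $y$ in $x$ can cluster. Fix via public coins a prime $q=\poly(n)$ and $r\in\mathbb{F}_q$, and for a string $s$ let $\phi(s)=\sum_a s_a r^a\bmod q$. The key feature is that $\phi$ is \emph{additive across the bipartition}: if $\phi_A(s)$ is the part of the sum over coordinates of $s$ held by Alice (others zeroed) and $\phi_B$ similarly, then $\phi(s)=\phi_A(s)+\phi_B(s)$. Hence from one $O(\log n)$-bit message carrying a constant number of Alice's partial prefix fingerprints of $x$ and of $y$, Bob can recover $\phi$ of any one substring of $x$ or $y$ cut out by those prefixes; and distinct strings of equal length have distinct fingerprints except with probability $O(k/q)=1/\poly(n)$ (degree-$\le k$ polynomials in $r$ agreeing on few points). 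I will assume $k>2\sqrt n$; if $\sqrt n\le k\le 2\sqrt n$ the players simply exchange all of $y$ ($O(\sqrt n)$ bits) and run \Cref{lem:fixed-y}, which costs $O(\log k\cdot n/k)=O(\log n\cdot\sqrt n)$.

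\emph{Easy case: $y$ has a short period.} First determine the smallest period order $\rho$ of $y$ with $\rho\le\lceil\sqrt n\rceil$, or certify there is none: for each $d\le\lceil\sqrt n\rceil$ test whether $y[d+1,k]=y[1,k-d]$ by a fingerprint comparison; all these tests together need only $O(\sqrt n)$ prefix fingerprints of $y$, i.e.\ $O(\sqrt n\log n)$ bits. If such a $\rho$ exists, the players exchange $y[1,\rho]$ explicitly ($O(\sqrt n)$ bits); since $y$ has period $\rho$ this determines all of $y$, so \Cref{lem:fixed-y} finishes in $O(\log k\cdot n/k)=O(\log n\cdot\sqrt n)$ bits.

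\emph{Hard case: $y$ has no period of order $\le\sqrt n$.} Here I would build a prefix $w$ of $y$ that (i) has smallest period order $>\sqrt n$, (ii) has $|w|\le k$, and (iii) can be made common knowledge with $O(\sqrt n\log n)$ communication. If the min period $d$ of $y[1,2\lceil\sqrt n\rceil]$ (already found above) exceeds $\sqrt n$, take $w=y[1,2\lceil\sqrt n\rceil]$ and send it raw. Otherwise $d\le\sqrt n$, and I take $w=y[1,L_0+1]$ where $L_0$ is the largest $L$ with $y[d+1,L]=y[1,L-d]$ (a monotone predicate, so $L_0$ is found by binary search with single substring-equality fingerprint tests, $O(\log^2 n)$ bits); then $d,L_0,y[1,d],y[L_0+1]$ determine $w$ and cost $O(\sqrt n)$ bits to send. \Cref{lem:period} is used to verify (i): in the second subcase, the $d$-periodicity of $y$ ``breaks for good'' at $L_0+1$, i.e.\ $y[1,L_0+1]$ has no period of order $\le\sqrt n$, because any such period order would be a multiple $jd$ of $d$ forcing $y[L_0+1]=y[L_0+1-jd]=y[L_0+1-d]$, contradicting the fact that $y[1,L_0+1]$ lacks period $d$. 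Since the min period of $w$ exceeds $\sqrt n$ (and $|w|\ge$ min period $>\sqrt n$), \Cref{lem:period} gives that any two occurrences of $w$ in $x$ are more than $\sqrt n$ apart. Now partition the $\le n-k+1$ candidate start positions into $O(\sqrt n)$ blocks of $\lceil\sqrt n\rceil$ consecutive positions; each block contains at most one start of an occurrence of $w$. Block by block, Alice sends the smallest and largest position in the block consistent with her bits of $x$ and an occurrence of the (now common) pattern $w$ --- exactly as in \Cref{lem:fixed-y}; from this Bob infers Alice's bits on the induced window (the two inferred intervals overlap since $|w|\ge\lceil\sqrt n\rceil$ exceeds the block length) and hence learns the unique candidate position $t_j$, if any, at which $w$ can occur in that block. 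Finally the players verify whether $x[t_j,t_j+k-1]=y$ with one fingerprint comparison, Alice sending $\phi_A$ of two prefixes of $x$ and of $y$. This is $O(\log n)$ bits per block and $O(\log n\cdot\sqrt n)$ overall, and $\PM(x,y)=1$ iff some block's verification succeeds; a union bound over the $O(\sqrt n)$ fingerprint comparisons keeps the total error below $1/3$.

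\emph{Main obstacle.} The difficulty is the hard case together with an adversarial bipartition: we cannot afford to communicate $y$, yet we must still pin down, in each block, the single position that could host an occurrence. The resolution above is to manufacture a \emph{short common} prefix $w$ of $y$ that is aperiodic enough to be a ``sparse'' pattern (it occurs $O(\sqrt n)$ times in $x$, at most once per block), so that \Cref{lem:fixed-y} locates these candidates and a single additive Karp--Rabin fingerprint certifies the full pattern at each. Getting the bookkeeping for $w$ right --- its existence for all $k>2\sqrt n$, a bound $|w|\le k$, the ``period breaks for good'' claim, and the cheap encoding of $w$ --- is the delicate part, and is exactly where \Cref{lem:period} is invoked.
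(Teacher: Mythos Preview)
Your proposal is correct and takes essentially the same approach as the paper: both arguments manufacture a common-knowledge prefix of $y$ with no period of order $\le\sqrt n$ (the paper sends $p=y[1,2\sqrt n]$ explicitly and, if $p$ has primitive period $\bar p$, extends to $p'=\bar p^m q$ past the point where $\bar p$-periodicity of $y$ breaks---which is exactly your $w=y[1,L_0+1]$ with $d=|\bar p|$), then run the \Cref{lem:fixed-y} idea block by block to isolate the unique candidate start per $\sqrt n$-block, and finish with an equality/fingerprint check. The only slip is ``already found above'': your earlier tests detect short periods of $y$, not of $y[1,2\lceil\sqrt n\rceil]$, but the latter is obtainable from the same $O(\sqrt n)$ prefix fingerprints at no extra asymptotic cost.
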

\begin{proof}
At the start of the protocol, the two players exchange the first $2\sqrt{n}$ many bits of $y$ so that they both learn the prefix $p\coloneqq y[1,2\sqrt{n}]$. We think of $p$ as fixed from now on. Since any occurrence of $y$ in $x$ must start in one of the $\sqrt{n}$ many intervals $[1,\sqrt{n}],[\sqrt{n},2\sqrt{n}],\ldots$ it suffices to design a $O(\log n)$-bit protocol (with error probability $\leq 1/n$) to test whether $y$ starts in a particular interval, and then repeat this protocol for every interval (resulting in error probability $\leq \sqrt{n}/n$ by a union bound). Let us describe a protocol for the first interval $[1,\sqrt{n}]=[\sqrt{n}]$.

For simplicity of presentation, we first assume that $p$ has no period of order $\leq \sqrt{n}$. We will handle a $p$ with short periods later.

\paragraph{No short period.}
The protocol to test if $y$ occurs in $x$ starting at a position in $[\sqrt{n}]$ is similar to the one in \Cref{lem:fixed-y}. Assuming Alice is given $x_I$ and Bob is given $x_{\bar{I}}$, Alice first sends two indices $i,j\in[\sqrt{n}]$ where~$i$ (resp.\ $j$) is the smallest (largest) index such that it is consistent with Alice's bits that the prefix $p$ could appear in $x$ starting at position $i$~($j$). Bob can again reconstruct all Alice's bits in the interval $[i,j+|p|-1]$ and determine whether $p$ occurs in $x$ with a starting position in $[\sqrt{n}]$. Since we are assuming that $p$ has no period of order $\leq \sqrt{n}$, Bob can find at most one such starting position, say at coordinate $\ell\in[\sqrt{n}]$. (If there is no starting position for the prefix, there is none for the full pattern $y$ and we may output ``no match''.) The remaining goal becomes to test whether $x[\ell,\ell+k-1]=y$. Consider any $i\in[k]$; either
\begin{enumerate}[label=(\arabic*),noitemsep]
\item Alice (or Bob) owns both $x_{\ell-1+i}$ and $y_i$;
\item Alice owns $x_{\ell-1+i}$ and Bob owns $y_i$ (or vice versa).
\end{enumerate}
For coordinates of type (1), the players may test for equality without communication. For coordinates of type (2), the players can execute a randomized test for equality---a single test for all type-(2) coordinates at once---for which there is a well-known $O(\log n)$-bit protocol (with error probability $\leq 1/n$)~\cite[Example 3.5]{kushilevitz97communication}. This concludes the description of the $O(\log n)$-bit protocol (for a $p$ without short periods).

\paragraph{Short period.}
Suppose $p\in\{0,1\}^{2\sqrt{n}}$ has a period of order $\leq \sqrt{n}$. Since the players know $p$, they can both agree on the shortest one (the primitive period), call it $\bar{p}$, $|\bar{p}|\leq\sqrt{n}$.

The players then proceed to find the largest number $m$ such that $\bar{p}^m$ is a prefix of $y$. To do this, Alice (resp.\ Bob) reports the largest $m_A$ ($m_B$) such that it is consistent with her (his) knowledge of the bits of $y$ that $\bar{p}^{m_A}$ ($\bar{p}^{m_B}$) is a prefix of $y$. Then $m\coloneqq \min(m_A,m_B)$ is the sought number. This takes $O(\log n)$ bits of communication.

Next, the players can check, with constant communication, whether $y$ is simply a prefix of $\bar{p}^{m+1}$. If yes, both players would fully know $y$ and hence they can run the protocol from \Cref{lem:fixed-y}. Assume otherwise henceforth. In this case the players can find a string $q$, $|q|\leq |\bar{p}|$, that is not a prefix of $\bar{p}$, and such that $p'\coloneqq \bar{p}^mq$ is a prefix of $y$. This takes $|q|\leq|\bar{p}|\leq\sqrt{n}$ bits of communication.

We claim that $p'$ has no period of order $\leq \sqrt{n}$. This claim would finish the proof, as the players can finally run the \emph{no-short-period} protocol with $p'$ in place of $p$ (note that the cost of that protocol does not depend on $|p|$). To prove the claim, suppose for contradiction that $p'$ (and hence $p$) has a period $\hat{p}$ of order $|\hat{p}| \leq \sqrt{n}$. Since $\bar{p}$ is the primitive period for $p$, $\hat{p}$ must be a power of $\bar{p}$. Therefore $p'$ is a power of $\bar{p}$. But this contradicts our definition of $p'=\bar{p}^mq$.
\end{proof}

\begin{remark}
 For $k\geq \sqrt{n\log n}$ the above protocol can be optimized to have cost $O(\sqrt{n\log n})$. Namely, consider a prefix $p$ (and intervals) of length $\Theta(\sqrt{n\log n})$ rather than $\Theta(\sqrt{n})$.
\end{remark}

\subsection{Lower bound}
Next we prove a lower bound of $\Omega(\log\log k \cdot n/k)$, for every $k\leq n$, on the randomized communication complexity of $\PM$. As a warm-up, we first observe that a reduction from the ubiquitous set-disjointness function yields a randomized lower bound of $\Omega(n/k)$ for $\PM$. We then show how to improve this by a factor of $\log\log k$.

Recall that in the $m$-bit set-disjointness problem, Alice is given $a \in \{0,1\}^m$, Bob is given $b \in \{0,1\}^m$, and their goal is to compute
$\Disj_m(a,b)\coloneqq(\OR_m\circ\AND_2)(a,b)=\bigvee_{i \in [m]}(a_i \wedge b_i)$. It is well known that this function has communication complexity $\Omega(m)$ even against randomized protocols~\cite{kalyanasundaram1992probabilistic,razborov1992distributional,bar2004information}.

\begin{observation} \label{obs:disj}
$\Disj_{\Omega(n/k)}$ reduces to $\PM$ (under some bipartition of input bits).
\end{observation}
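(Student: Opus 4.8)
The plan is to exhibit a simple combinatorial reduction embedding set-disjointness into the fixed-pattern instance $x\mapsto\PM(x,1^k)$. The key observation is that with pattern $y=1^k$, deciding whether $x$ contains $y$ amounts to asking whether $x$ has a run of at least $k$ consecutive ones. So I want to design a string $x$ of length $n$, built out of the inputs $a,b\in\{0,1\}^m$ with $m=\Theta(n/k)$, such that $x$ contains a run of $k$ ones if and only if $\Disj_m(a,b)=1$, i.e.\ if and only if there is a coordinate $i$ with $a_i=b_i=1$.

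Concretely, I would partition the index set $[n]$ into $m$ blocks, each of length roughly $2k$ (so $m=\Theta(n/k)$), and within block $i$ place a gadget that encodes the pair $(a_i,b_i)$. The gadget I have in mind: reserve a window of $k$ positions inside block $i$; Alice controls the first half of this window and Bob the second half. Alice writes $1^{k/2}$ in her half iff $a_i=1$ (and $0^{k/2}$ otherwise), Bob writes $1^{k/2}$ in his half iff $b_i=1$ (and $0^{k/2}$ otherwise), and the positions strictly between consecutive blocks are hard-wired to $0$ so that no run of ones can bridge two different blocks. Then block $i$ contains a run of $k$ consecutive ones precisely when $a_i=b_i=1$, and no other position of $x$ contains a run of length $k$. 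Hence $\PM(x,1^k)=1\iff\Disj_m(a,b)=1$. This is a deterministic, fixed-partition reduction with no communication overhead, so any protocol for $\PM$ on this partition yields one for $\Disj_m$, and the $\Omega(m)=\Omega(n/k)$ randomized lower bound for set-disjointness~\cite{kalyanasundaram1992probabilistic,razborov1992distributional,bar2004information} transfers.

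A couple of routine points need care. First, I must check divisibility/rounding: take $m=\lfloor n/(2k)\rfloor$ (or similar) so the blocks and separators fit inside $n$, padding any leftover positions with zeros; this only changes constants. Second, I should make sure the window of length $k$ sits strictly inside its block of length $\approx 2k$ so that the zero separators on both sides genuinely prevent a spurious run of $k$ ones from straddling a block boundary — a run confined to block $i$ has length at most $k$, achieved iff both halves are all-ones. Third, the reduction is non-adaptive and the constructed partition is fixed in advance (independent of $a,b$), as required by the statement.

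I do not expect a real obstacle here — this is the ``warm-up'' reduction and the main content is just choosing the gadget so that runs cannot leak across blocks. The only mild subtlety is making the gadget robust for small $k$ (e.g.\ $k$ odd, so $k/2$ is not an integer): one can instead let Alice own $\lceil k/2\rceil$ window positions and Bob $\lfloor k/2\rfloor$, or simply assume $k$ even and absorb the parity into constants, since for $k=1$ the bound $\Omega(n/k)=\Omega(n)$ is anyway immediate from the $n$-bit $\OR$. With these details dispatched, \Cref{obs:disj} follows, and combined with the set-disjointness lower bound it gives the claimed $\Omega(n/k)$ baseline that the subsequent argument improves to $\Omega(\log\log k\cdot n/k)$.
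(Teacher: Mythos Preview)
Your proposal is correct and follows essentially the same approach as the paper: embed $\Disj_m$ into $\PM$ with the fixed pattern $y=1^k$ using per-coordinate block gadgets separated by zeros. The paper's gadget is a bit more compact---each block is $a_i b_i 1^{k-2} 0$ of length $k+1$ (so a single bit per player rather than a half-window), yielding $m=n/(k+1)$---but the idea and the resulting $\Omega(n/k)$ bound are identical.
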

\begin{proof}
Given inputs $(a,b)$ of $\Disj_{m}$ to Alice and Bob they construct, without communication, inputs to $\mathrm{SM}_{m(k+1),k}$ as follows. We set $y\coloneqq 1^k$ and
\begin{equation*}
x~\coloneqq~a_1b_{1}1^{k-2}0a_2b_{2}1^{k-2}0 \ldots a_{n}b_{n}1^{k-2}0.
\end{equation*}
This also implicitly determines the bipartition of input bits of $\mathrm{SM}_{m(k+1),k}$; namely, Alice gets all the coordinates of $x$ with $a_i$s, Bob gets those with $b_i$s, and the rest can be split arbitrarily. It is straightforward to check that $\Disj_{m}(a,b)=\mathrm{SM}_{m(k+1),k}(x,y)$.
\end{proof}

To improve the above, we give a reduction from a slightly harder function, $\OR_m \circ \GT_\ell\colon [\ell]^m\times[\ell]^m\to\{0,1\}$, which maps $(a,b)\mapsto \bigvee_{i\in[m]}\GT(a_i,b_i)$ where $\GT_\ell\colon[\ell]\times[\ell]\to\{0,1\}$ is the \emph{greater-than} function given by $\GT_\ell(a,b)\coloneqq 1$ iff $a\geq b$. The claimed lower bound $\Omega(\log\log k \cdot n/k)$ for $\PM$ follows from the following two lemmas. As mentioned in the introduction, \Cref{lem:cc-lb} was conjectured by \cite{watson18communication}.
\begin{lemma} \label{lem:reduction}
$\OR_{\Omega(n/k)} \circ \GT_{\Omega(k)}$ reduces to $\PM$ (under some bipartition of input bits).
\end{lemma}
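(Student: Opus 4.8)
\textbf{Proof proposal for \Cref{lem:reduction}.}
The plan is to mimic the reduction of \Cref{obs:disj}, but to replace each pair of bits $(a_i, b_i)$ encoding a single $\AND$ with a slightly longer gadget that encodes a $\GT$ comparison between a number held by Alice and a number held by Bob. Concretely, I would pick the comparison range $\ell$ to be roughly $k/4$ (so $\GT_{\Omega(k)}$), set $y = 1^k$ again, and for each coordinate $i \in [m]$ with $m = \Theta(n/k)$ build a block $B_i$ of length about $k+1$ inside $x$ that contains a run of $1$s whose length depends on both $a_i$ and $b_i$, padded by a $0$ to separate consecutive blocks. The block should be designed so that it contains a run of $k$ consecutive $1$s if and only if $\GT_\ell(a_i,b_i)=1$, i.e. $a_i \geq b_i$, and otherwise every maximal run of $1$s inside $B_i$ has length $<k$; then a $1^k$ occurs somewhere in $x$ iff some block ``fires'', which is exactly $\OR_m \circ \GT_\ell$.

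The concrete gadget I have in mind: reserve a window of $k-1$ positions per block for Alice to write a run of $a_i$ ones followed by $0$s (so Alice writes $1^{a_i}0^{k-1-a_i}$), and a window of $k-1$ positions for Bob to write $0$s followed by a run of $b_i$ ones from the other side, arranged so that the two runs of $1$s are adjacent; their concatenation has a run of $1$s of length exactly $a_i + (\text{something})$, and by shifting Bob's contribution appropriately one gets a run of length $\geq k$ precisely when $a_i \geq b_i$. One clean way: Alice's portion contributes $1^{a_i}$ flush-right against the block boundary and Bob's portion contributes $1^{\,\ell - b_i}$ flush-left, with a fixed pad of $k - \ell$ ones in between that both players know; then the central run has length $a_i + (k-\ell) + (\ell - b_i) = k + (a_i - b_i)$, which is $\geq k$ iff $a_i \geq b_i$, and one must additionally check with surrounding $0$s that no run of length $k$ can span two blocks. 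The bipartition is then forced exactly as in \Cref{obs:disj}: Alice owns the coordinates carrying her run, Bob owns his, the fixed pad and separators go to either side, and $y=1^k$ can be split arbitrarily. Since each block has length $\Theta(k)$ and there are $m$ of them, the total string length is $\Theta(mk)$, so taking $m = \Theta(n/k)$ gives a string of length $n$ (up to constants), yielding $\OR_{\Omega(n/k)} \circ \GT_{\Omega(k)}$.

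The main obstacle I anticipate is the ``no spurious match across block boundaries'' bookkeeping: I must ensure that the only way to see $1^k$ in $x$ is entirely within one block's central run, never straddling the junction between block $B_i$ and block $B_{i+1}$, and never partly inside a block's padding and partly in the next. This is handled by flanking each block's informative region with a guaranteed $0$ (the separator), but one has to verify the run-length arithmetic at every boundary carefully, and also confirm that when $a_i < b_i$ the longest run inside $B_i$ is genuinely $<k$ (here the choice $\ell \leq k/4$ or similar gives slack so that $a_i,b_i,\ell-b_i$ are each well below $k$). A secondary, purely cosmetic issue is rounding: $n/k$ and $k$ need not be integers or divide evenly, so the number of blocks and the block length should be taken as floors, absorbing the loss into the $\Omega(\cdot)$ constants. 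Once the gadget is pinned down, correctness of the identity $\OR_m \circ \GT_\ell(a,b) = \mathrm{SM}_{\Theta(mk),k}(x,y)$ is a direct case check, and combined with the (separately proved) randomized lower bound for $\OR_m \circ \GT_\ell$ in \Cref{lem:cc-lb}, this gives the desired $\Omega(\log\log k \cdot n/k)$ bound for $\PM$.
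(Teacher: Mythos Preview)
Your proposal is correct and follows the same high-level strategy as the paper: fix $y$ to be the all-ones pattern, build one block per $\GT$ instance in which the unique maximal run of $1$s has length affine in $a_i-b_i$, and separate blocks by $0$s so that $1^k$ appears iff some $\GT$ fires. The paper's implementation differs in one neat detail: rather than giving Alice and Bob contiguous segments with a fixed pad between them, it \emph{interleaves} their bits, letting Alice own the even positions and Bob the odd positions of a length-$4k$ block, with Alice writing $1^{k+a}0^{k-a}$ on her positions and Bob writing $0^{b}1^{2k-b}$ on his; the resulting all-$1$ run has length $2(k+a-b+1)$, compared against the pattern $1^{2k+2}$. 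The interleaving makes the boundary bookkeeping you flagged essentially automatic (there is only one run of $1$s per block, bounded by a $0$ on each side regardless of $a,b$), whereas your concatenation gadget is arguably more transparent but requires the explicit checks you outline. Either construction yields $\OR_{\Omega(n/k)}\circ\GT_{\Omega(k)} \leq \PM$.
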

\begin{lemma} \label{lem:cc-lb}
$\OR_m\circ\GT_\ell$ has randomized communication complexity $\Omega(m\cdot\log\log\ell)$ for any $m$, $\ell$.
\end{lemma}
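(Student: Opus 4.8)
The plan is to prove the lower bound for $\OR_m \circ \GT_\ell$ via the \emph{information complexity} framework, following the template of Braverman–Garg–Pankratskiy–Weinstein~\cite{braverman16discrepancy} (and the surrounding literature on direct-sum-style lower bounds for composed functions). The high-level strategy is: (i) identify a hard distribution $\mu$ over $[\ell]\times[\ell]$ for a single $\GT_\ell$ coordinate under which the information cost of any protocol computing $\OR_m \circ \GT_\ell$ is at least $\Omega(\log\log\ell)$ \emph{per coordinate}, (ii) invoke a direct-sum theorem for $\OR$-compositions so that the total information cost is $\Omega(m\cdot\log\log\ell)$, and (iii) use the standard fact that communication cost is lower-bounded by information cost to conclude. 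Since communication complexity is at least information complexity (over any input distribution), this yields $\mathrm{CC}(\OR_m\circ\GT_\ell)=\Omega(m\log\log\ell)$.

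The heart of the matter is the per-coordinate bound, which is essentially a statement about the \emph{internal information complexity of $\GT_\ell$ with respect to the ``hard core'' of the $\OR$}. The key structural idea: in any protocol for $\OR_m\circ\GT_\ell$, we embed a single $\GT_\ell$ instance in coordinate $i$ while filling the other coordinates with random inputs drawn so that $\GT$ evaluates to $0$ there; the protocol must still determine whether $\GT(a_i,b_i)=1$. For the single-coordinate information bound, I would use the distribution on $[\ell]\times[\ell]$ where, conditioned on $\GT=0$, the pair $(a,b)$ is ``spread out'' on a scale that forces the transcript to reveal information about roughly $\log\log\ell$ bits — this is exactly the phenomenon that makes the \emph{randomized} complexity of a single $\GT_\ell$ equal to $\Theta(\log\log\ell)$ (Miltersen et al./Viola), as opposed to the deterministic complexity $\Theta(\log\ell)$. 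Concretely, one partitions $[\ell]$ into $\log\ell$ ``dyadic scales,'' and shows via a round-elimination or cut-and-paste argument that distinguishing the scales costs $\Omega(\log\log\ell)$ information; the composed-function setting then upgrades this from a communication bound on one copy to an \emph{information} bound that can be summed.

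The main obstacle — and the reason this is phrased in the excerpt as ``a minor modification of existing techniques'' rather than trivial — is getting the \emph{information} lower bound for a single $\GT_\ell$ coordinate (not merely a communication bound), because only an information bound direct-sums correctly over the $m$ coordinates of the $\OR$. A pure communication lower bound of $\Omega(\log\log\ell)$ for $\GT_\ell$ would give nothing for the $\OR$-composition by itself; one genuinely needs that any protocol leaks $\Omega(\log\log\ell)$ bits of information about each coordinate's input under the hard distribution, and then the $\OR$ direct sum (where the ``hidden'' coordinate is planted uniformly among $m$ positions, all others set to the $\GT=0$ side) lets the cost add up. I expect the delicate part to be verifying that the single-copy information argument of~\cite{braverman16discrepancy} — originally tuned for a different inner gadget — transfers to $\GT_\ell$ with the right parameters, in particular that the hard distribution for $\GT_\ell$ is ``collapsing'' in the sense needed for the embedding, and that the $\log\log\ell$ (rather than $\Theta(1)$ or $\Theta(\log\ell)$) is the honest answer for the per-coordinate information. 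Everything else (the communication $\geq$ information inequality, the union-over-coordinates bookkeeping, and the choice of the all-zeros padding for the non-embedded coordinates) is routine.
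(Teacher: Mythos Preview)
Your high-level framework---information complexity plus the Bar-Yossef et al.\ direct sum for $\OR_m\circ F$, reducing everything to a per-coordinate information lower bound of $\Omega(\log\log\ell)$ for $\GT_\ell$ under a \emph{one-sided} distribution $\mu_0$ supported on $\GT_\ell^{-1}(0)$---is exactly right, and you correctly isolate the real obstacle: a communication lower bound on one copy of $\GT_\ell$ is useless here; you need the information bound under $\mu_0$.

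Where your proposal diverges from the paper is in how the per-coordinate bound is obtained. You suggest a direct ``round-elimination or cut-and-paste'' argument on dyadic scales, and you describe the needed modification of~\cite{braverman16discrepancy} as porting their technique from a ``different inner gadget'' to $\GT_\ell$. That is not the situation: Braverman--Weinstein already prove $I_\mu(\GT_\ell)\geq\Omega(\log\log\ell)$, but for a \emph{two-sided} $\mu$, via their information-to-discrepancy simulation (low information cost $\Rightarrow$ small unbounded-error protocol $\Rightarrow$ contradiction with the known discrepancy of $\GT_\ell$). The actual ``minor modification'' is not a change of gadget but the observation that their simulation argument still yields a non-trivial bias against any two-sided $\mu$ even if one only assumes an IC upper bound under the one-sided marginal $\mu_0$: on $0$-inputs the simulation succeeds with bias $+\Omega(\delta)$ as before, and on $1$-inputs one uses only the unconditional ``doesn't err too much'' guarantee of the simulation (bias $\geq -0.1\delta$), then rebalances by randomly flipping to output $1$ with a tiny probability. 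This immediately gives the one-sided bound needed to plug into Bar-Yossef et al.

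Your proposed route via a bespoke cut-and-paste argument might be made to work, but it is both vaguer and harder than necessary: the discrepancy of $\GT_\ell$ and the BW simulation are off-the-shelf, and the one-sided tweak is a two-line observation. Identifying the two-sided versus one-sided distinction as the crux---rather than a gadget transfer---is what you are missing.
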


\begin{proof}[Proof of \Cref{lem:reduction}]
It suffices to describe a reduction from $\GT_k$ to $\mathrm{SM}_{4k,2k+2}$ as this reduction can be repeated $\Omega(n/k)$ times in parallel on disjoint inputs (similarly as in the proof of \Cref{obs:disj}). Given inputs $(a,b)\in[k]\times[k]$ to $\GT_k$ the two players construct inputs $(x,y)$ to $\mathrm{SM}_{4k,2k+2}$ as follows. As before, we set $y\coloneqq 1^{2k+2}$. As for $x$, Alice will own the even coordinates $I\coloneqq \{2,4,\ldots,4k\}$ of $x$ and Bob the odd coordinates $\bar{I}\coloneqq[4k]\smallsetminus I$. Alice sets $x_I\coloneqq 1^{k+a}0^{k-a}$ and Bob sets $x_{\bar{I}}\coloneqq 0^b1^{2k-b}$. The longest all-$1$ pattern in $x$ is then of length $2(k+a-b+1)$, as illustrated below.
\begin{equation*}
\begin{tabular}{r}
$k=5$\\
$a=b=2$
\end{tabular}
\qquad\text{\huge$\leadsto$}\qquad
x\,\coloneqq
\begin{tabular}{ll}
\text{\ttfamily \ 1\ 1\ 1\ 1\ 1\ 1\ 1\ 0\ 0\ 0}& \text{\small (Alice's bits $x_I$)}\\[-1mm]
\text{\ttfamily  0\ 0\smash{$\underbrace{\text{\ttfamily\ 1\ 1\ 1\ 1\ 1\ 1}}_{2(k+a-b+1)}$}\ 1\ 1}  & \text{\small (Bob's bits $x_{\bar{I}}$)}
\end{tabular}\\[4mm]
\end{equation*}
Note that $2(k+a-b+1) \geq 2k+2$ iff $a\geq b$. Hence $\GT_k(a,b)=\mathrm{SM}_{4k,2k+2}(x,y)$, as desired.
\end{proof}

\begin{proof}[Proof of \Cref{lem:cc-lb}]
A standard technique for proving randomized communication lower bounds for functions of the form $\OR_m\circ F$, where $F\colon \mathcal{X}\times \mathcal{Y}\to\{0,1\}$, is \emph{information complexity} (IC)~\cite{chakrabarti01informational,bar2004information,braverman12interactive}. We explain how to combine existing methods to obtain the desired lower bound when $F=\GT_\ell$. Our discussion assumes familiarity with the IC technique.

The usual plan is to exhibit a \emph{one-sided} distribution $\mu_0$ over $F^{-1}(0)$ and prove, for any bounded-error protocol $\Pi$ computing $F$, a lower bound on the \emph{information cost} $I_{\mu_0}\coloneqq \mathbb{I}(\Pi(X,Y):X \mid Y)+\mathbb{I}(\Pi(X,Y):Y \mid X)$ where $XY\sim\mu_0$ (see~\cite{braverman12interactive} for details on information cost). Bar-Yossef et al.~\cite{bar2004information} proved that the randomized communication complexity of $\OR_m\circ F$ is at least $m\cdot I_{\mu_0}$. Hence our goal is to show, for some one-sided $\mu_0$ over $\GT_\ell^{-1}(0)$,
\begin{equation}\label{eq:ic}
I_{\mu_0}~\geq~\Omega(\log\log\ell).
\end{equation}
Braverman and Weinstein~\cite{braverman16discrepancy} already obtained a lower bound like \eqref{eq:ic} except for a \emph{two-sided} distribution $\mu$ over $F^{-1}(0)\cup F^{-1}(1)$. Here we observe that their proof, virtually unchanged, gives the same lower bound also for a one-sided distribution $\mu_0$.

\paragraph{BW simulation.}
Let us summarize the main technical result of \cite{braverman16discrepancy}. They show a general simulation of any bounded-error, say $\leq 1\%$, protocol $\Pi$ computing $F$ with information cost $I_\mu$ relative to a $\mu$ by an ``unbounded-error'' protocol $\Pi'$ (of communication cost $O(I_\mu)$) satisfying the following: With high probability, say $\geq 99\%$, over $(x,y)\sim\mu$, the simulation is \emph{``successful''} (event $\mathcal{Z}$ in the proof of~\cite[Thm~2]{braverman16discrepancy}) meaning that, for some $\delta \coloneqq 2^{-O(I_\mu+1)}$,
\begin{equation} \label{eq:succ} \textstyle
\forall (x,y)\in \mathcal{Z}:\qquad
\mathbb{P}_{\text{coins of $\Pi'$}}[\,\Pi'(x,y)\text{ outputs }F(x,y)\,]~\geq
~\frac{1}{2}+0.9\cdot\delta.
\end{equation}
A crucial property is that even if the simulation fails for an input $(x,y)\notin \mathcal{Z}$, we are still guaranteed that $\Pi'$ does not output the wrong answer with too high a probability \cite[Prop~2]{braverman16discrepancy}:
\begin{equation} \label{eq:fail} \textstyle
\forall(x,y):\qquad\mathbb{P}_{\text{coins of $\Pi'$}}[\,\Pi'(x,y)\text{ outputs }F(x,y)\,]~\geq~
\frac{1}{2}-0.1\cdot\delta.
\end{equation}
By averaging over $(x,y)\sim\mu$ it follows that
\begin{align} \label{eq:bias}
\mathbb{P}_{(x,y)\sim\mu,\,\text{coins of $\Pi'$}}[\,\Pi'(x,y)\text{ outputs }F(x,y)\,]
~&\textstyle\geq~
\frac{1}{2}+(99\%\cdot0.9-1\%\cdot0.1)\delta \notag\\
~&\textstyle\geq~
\frac{1}{2}+0.8\cdot\delta.
\end{align}
In words, $\Pi'$ achieves a non-trivial bias in guessing $F$ relative to $\mu$. The authors conclude~\cite[Thm~1]{braverman16discrepancy} that $\Pi'$ witnesses an $O(\log\delta^{-1})=O(I_\mu+1)$ \emph{discrepancy bound} for $F$ relative to $\mu$. Finally, they provide an $\Omega(\log\log \ell)$ discrepancy bound for $F=\GT_\ell$ relative to a two-sided $\mu$, which proves~\eqref{eq:ic} (except for a two-sided $\mu$).

\paragraph{Our modification.}
Our observation is that the BW simulation can be applied while assuming only an upper bound on $I_{\mu_0}$ for every one-sided $\mu_0$, and still conclude \eqref{eq:bias} for any two-sided $\mu$. Indeed, let $\mu$ be any two-sided distribution; we may assume wlog that it is balanced, $\mu=\frac{1}{2}\mu_0+\frac{1}{2}\mu_1$, where $\mu_b$ is over $F^{-1}(b)$. Suppose $\Pi$ is a protocol for $F$ with information cost $I_{\mu_0}$ relative to $\mu_0$. Then from the BW simulation we can obtain $\Pi'$ such that for some $\delta\coloneqq 2^{-O(I_{\mu_0}+1)}$,
\begin{align*}
\mathbb{P}_{(x,y)\sim\mu_0,\,\text{coins of $\Pi'$}}[\,\Pi'(x,y)\text{ outputs }0\,]~&\textstyle\geq~\frac{1}{2}+0.8\cdot\delta,\\
\mathbb{P}_{(x,y)\sim\mu_1,\,\text{coins of $\Pi'$}}[\,\Pi'(x,y)\text{ outputs }1\,]~&\textstyle\geq~\frac{1}{2}-0.1\cdot\delta,
\end{align*}
where the first bound is from \eqref{eq:bias} (specialized to $\mu_0$) and the second bound is from the failure guarantee~\eqref{eq:fail}. We may finally define a third protocol $\Pi''$ with a slightly scaled-down probability of outputting $0$: $\Pi''(x,y)$ outputs 1 with probability $\delta/2$ and with the remaining probability $1-\delta/2$ it runs $\Pi'(x,y)$. This protocol satisfies \eqref{eq:bias}, albeit with a slightly smaller coefficient than $0.8$.
\end{proof}

\subsection{A better protocol?} \label{sec:evidence}

As bonus results, we give some evidence for the existence of an improved randomized protocol for $\PM$ when $k$ is large. We first define what \emph{unambiguous randomized} (aka $\UBPP$, or unambiguous Merlin--Arthur) protocols are; they generalize the notion of unambiguous deterministic protocols (aka $\UP$) introduced by Yannakakis~\cite{yannakakis91expressing}.

\begin{definition}[$\UBPP$ protocols]
An \emph{unambiguous randomized} protocol $\Pi$ computes a function $F(x,y)$ as follows. In the first phase the players nondeterministically guess a witness string $z\in\{0,1\}^{c_1}$, and then in the second phase they run a randomized (error $\leq 1/3$) protocol of cost $c_2$ to decide whether to accept the witness $z$. The correctness requirement is that for every $(x,y)\in F^{-1}(1)$ there needs to be a unique witness that is accepted; for every $(x,y)\in F^{-1}(0)$ no witness should be accepted. The cost of $\Pi$ is defined as $c_1+c_2$.
\end{definition}

Unambiguous randomized protocols have not been studied before in communication complexity. However, the recent breakthrough of Chattopadhyay et al.~\cite{chattopadhyay19log} (who disproved the log-approximate-rank conjecture of~\cite{lee09lower}) is closely related. It is not hard to see that the function $F(x,y)$ they study (of the form $\mathrm{Sink}\circ\mathrm{XOR}$) admits an $O(\log n)$-cost $\UBPP$ protocol. The authors proved that the usual randomized (aka~$\BPP$) communication complexity of $F$ is high, $n^{\Omega(1)}$. Consequently, there is no generic simulation of a $\UBPP$ protocol by a $\BPP$ protocol. By contrast, Yannakakis~\cite[Lemma~1]{yannakakis91expressing} showed that $\U\mydot\P$ protocols can be made deterministic efficiently.

Our first bonus result is an efficient $\UBPP$ protocol for determining if a given string has a primitive period. We do not know whether there is an efficient randomized protocol.
\begin{lemma} \label{lem:cc-period}
Suppose the bits of $x\in\{0,1\}^n$ are split between two players. There is an $\UBPP$ protocol of cost $O(\log^2 n)$ for deciding whether $x$ has a primitive period (and to compute its order).
\end{lemma}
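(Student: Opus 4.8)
The plan is for the two players to nondeterministically guess the order $i$ of the primitive period and then verify the guess using a handful of invocations of the standard randomized equality protocol. The first ingredient is a reduction of ``does $x$ have a period of order $i$?'' to an equality test under the given (arbitrary) split of the bits of $x$. Fix $i\le n$. By definition $x$ has a period of order $i$ iff $x[i+1,n]=x[1,n-i]$, i.e.\ $x_{i+j}=x_j$ for all $j\in[n-i]$. Since each coordinate of $x$ has a unique owner, Alice can form the $(n-i)$-bit string $a^{(i)}$ whose $j$-th bit is the XOR of whichever of $x_{i+j},x_j$ she owns (and $0$ if neither), and Bob forms $b^{(i)}$ symmetrically; then $a^{(i)}_j\oplus b^{(i)}_j=x_{i+j}\oplus x_j$, so $x$ has a period of order $i$ iff $a^{(i)}=b^{(i)}$. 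The players decide this with error $\le 1/n$ using $O(\log n)$ bits via the usual fingerprinting protocol for equality~\cite[Example~3.5]{kushilevitz97communication}.

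The second ingredient is a \emph{local} characterization of the primitive order, needed so that the $\UBPP$ witness is unique. Recall the primitive period is the shortest period of order $\le n/2$, which exists iff some period of order $\le n/2$ exists. If $d$ is a period order and $td\le n$, then iterating $x_m=x_{m+d}$ shows $td$ is again a period order; together with \Cref{lem:period}, the period orders lying in $[n/2]$ are exactly the multiples of the smallest one $d$ that lie in $[n/2]$. Consequently, an integer $i\le n/2$ equals the primitive order if and only if $i$ is a period order while, for every prime $q\mid i$, $i/q$ is \emph{not} a period order. (``Only if'' is immediate since then $d=i$; for ``if'', if the true primitive order $d$ were strictly smaller then $d\mid i$, hence $d\mid (i/q)$ for some prime $q\mid i$, which would make $i/q\le n/2$ a period order, a contradiction.) Note the restriction $i\le n/2$ is essential, as order $n$ is always a period.

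The protocol is now immediate. In the first phase the players guess a witness $i\in[\lfloor n/2\rfloor]$, of length $c_1=O(\log n)$. In the second phase they run the equality test above (tuned to error $\le 1/n$, cost $O(\log n)$ each) to check that $x$ has a period of order $i$ and, for each of the at most $\log_2 n$ primes $q\mid i$, that $x$ has no period of order $i/q$; they accept iff the first test passes and all the latter report inequality. The second-phase cost is $c_2=O(\log^2 n)$, and a union bound over the $O(\log n)$ equality calls keeps the overall error below $1/3$, so the total cost is $c_1+c_2=O(\log^2 n)$. Correctness and uniqueness follow from the characterization: if $x$ has a primitive period of order $d$ then $i=d$ is accepted with high probability and is the \emph{only} candidate that can be accepted, since any other period order $i\le n/2$ has $d\mid i$ with $d<i$, so some check $i/q$ (with $d\mid i/q$) finds a period and rejects, while any $i$ that is not a period order $\le n/2$ fails the first test; and if $x$ has no primitive period, no $i\in[\lfloor n/2\rfloor]$ is a period order, so no witness is accepted. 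The accepted witness, when it exists, is exactly the order of the primitive period.

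The one genuinely delicate point is the uniqueness clause of the $\UBPP$ model: it forbids simply guessing ``the smallest period order'' (which is not a locally checkable predicate) and instead forces the characterization of the second step — non-periodicity at all maximal proper divisors of $i$ — where \Cref{lem:period} and the multiples-of-$d$ structure do the real work. The split-equality reduction of the first step and the union-bound error analysis are routine.
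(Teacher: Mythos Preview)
Your proof is correct and follows essentially the same approach as the paper: guess the order $i\le n/2$ of the primitive period, verify it is a period via one equality test, and verify minimality by checking that $i/q$ is \emph{not} a period for each prime $q\mid i$ (at most $\log n$ many), giving total cost $O(\log^2 n)$ and unambiguity because the primitive order is unique. You supply more detail than the paper on two points---the explicit XOR-based reduction of a period test to equality under an arbitrary split, and the justification (via \Cref{lem:period}) that the period orders in $[n/2]$ are exactly the multiples of the smallest one---but the underlying argument is the same.
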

\begin{proof}
Suppose Alice is given the bits $x_I$, $I\subseteq[n]$, and Bob the bits $x_{\bar{I}}$, $\bar{I}\coloneqq[n]\smallsetminus I$. The idea is that Alice and Bob guess the order of the primitive period, and then verify their guess using randomness. The guess is just a $\log n$-bit number $k\in[n/2]$ having some prime factorization $k=p_1^{e_1}p_2^{e_2}\cdots p_\ell^{e_\ell}$ where $e_i\geq 1$ and $\ell\leq \log n$. In the randomized checking phase Alice and Bob run an $O(\log n)$-bit equality protocol (as in \Cref{lem:cc-ub}) to check whether $x[1,i]$ is a period (namely, they test the equality $x[i+1,n]=x[1,n-i-1]$). If yes, we continue to check that there is no shorter period. Since the shortest (primitive) period divides $k$, it suffices to check that none of the candidates $\{k/p_1, k/p_2,\ldots,k/p_\ell\}$ is a period. For each such candidate we run an equality protocol. Altogether this checking phase costs $O(\ell\log n)=O(\log^2 n)$ bits of communication. The protocol is indeed unambiguous since the primitive period (should it exist) is unique.
\end{proof}

\newcommand{\Rpf}{R_{\mathrm{pf}}}

If we let $\Rpf$ denote the randomized communication complexity of the above period finding problem, then we can interpret \Cref{lem:cc-period} as evidence that $\Rpf\leq \polylog(n)$. Assuming period finding is indeed easy, we can then provide similar evidence for the easiness of $\PM$ for large $k$.
\begin{lemma}\label{lem:PM-UBPP}
$\mathrm{SM}_{n,0.9n}$ admits an $\UBPP$ protocol of cost $O(\log n)+\Rpf$.
\end{lemma}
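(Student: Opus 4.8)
The plan is to have the players nondeterministically guess a single starting position $i$ and then use randomness to verify that $i$ is the \emph{leftmost} occurrence of $y$ in $x$; this leftmost occurrence is the unique accepting witness. The point is that a long pattern $y$ may occur in $x$ many times (e.g.\ $y=1^{0.9n}$, $x=1^n$), so guessing ``a position'' is not, by itself, unambiguous. We resolve this using the structure of periodic strings from \Cref{sec:periods}.

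First I would record three structural facts about $\PMargs{n}{0.9n}$, all consequences of \Cref{lem:period} together with the inequality $n-k\le k/2$ (which holds with room to spare for $k=0.9n$). \textbf{(i)} If $y$ occurs in $x$ at positions $i<i'$, then, since $i'-i\le n-k\le k/2$, the string $y$ has a period of order $i'-i$; in particular, if $y$ has no period of order $\le k/2$ (i.e.\ no primitive period), then $y$ occurs in $x$ at most once. \textbf{(ii)} If $y$ has primitive period of order $d$ and occurs at $i<i'$, then $d\mid(i'-i)$, because by \Cref{lem:period} $\gcd(d,i'-i)$ is again a period order of $y$ and hence equals $d$. \textbf{(iii)} In the situation of (ii) the two $d$-periodic windows $x[i,i+k-1]$ and $x[i',i'+k-1]$ overlap in more than $d$ positions (again using $i'-i\le n-k$ and $d\le k/2$), so $x[i,i'+k-1]$ is itself $d$-periodic; hence $y$ occurs at every position $i+jd$ with $0\le j\le (i'-i)/d$. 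Writing $i^\star$ for the leftmost occurrence of $y$ in $x$ when one exists, (iii) says that if $y$ has a primitive period of order $d$, then $i^\star$ is exactly the position with $x[i^\star,i^\star+k-1]=y$ and with $i^\star-d<1$ or $x[i^\star-d,i^\star-d+k-1]\ne y$.

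With these in hand, the protocol is: the players guess $i\in[n-k+1]$, which costs $c_1=O(\log n)$ bits. In the verification phase they (a)~run the randomized period-finding protocol on $y$ (cost $\Rpf$) to learn whether $y$ has a primitive period, and if so its order $d$; (b)~run a public-coin $O(\log n)$-bit equality test for $x[i,i+k-1]=y$; and (c)~if step~(a) reported a primitive period of order $d$, run a second $O(\log n)$-bit equality test and accept only when $i-d<1$ or $x[i-d,i-d+k-1]\ne y$. They accept iff all the checks they ran passed, so $c_2=O(\log n)+\Rpf$. For correctness, condition first on steps (a)--(c) being correct. If $(x,y)\in\PMargs{n}{0.9n}^{-1}(1)$ then, by (i) and (iii), the leftmost occurrence $i^\star$ is accepted while every other $i$ is rejected: if $i$ is not an occurrence then (b) fails, and if $i$ is a non-leftmost occurrence then (iii) forces an occurrence at $i-d\ge i^\star\ge1$, so (c) fails. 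If $(x,y)\in\PMargs{n}{0.9n}^{-1}(0)$ then (b) fails for every $i$. Finally, amplifying the error probabilities of the randomized subroutines (a), (b), (c) to a small constant and taking a union bound makes the verification err with probability below $1/3$ at only a constant-factor cost, so the total cost stays $O(\log n)+\Rpf$.

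The step that requires actual work is fact (iii): showing that the occurrences of a periodic $y$ ``fill in'' an arithmetic progression of step $d$ inside the window $[i,i'+k-1]$, which is precisely what lets a single probe at offset $-d$ (step~(c)) detect every non-leftmost occurrence. The remaining details are routine; note in particular that the bipartition of the bits of $(x,y)$ plays no role, since the public-coin equality test and the period-finding protocol of \Cref{lem:cc-period} both work under an arbitrary split of their input.
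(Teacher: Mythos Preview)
Your proposal is correct and follows essentially the same approach as the paper: guess a position $i$, run the period-finding protocol on $y$, and then certify that $i$ is the leftmost occurrence by checking $x[i,i+k-1]=y$ together with (in the periodic case) $x[i-d,i-d+k-1]\ne y$ via two randomized equality tests. The paper's proof simply asserts the equivalence ``$i$ is the earliest occurrence iff $y$ starts at $i$ and does not start at $i-\ell$,'' whereas you supply the justification via facts~(ii) and~(iii); this is the only noteworthy difference.
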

\begin{proof}
The idea is that the players guess a position $i\in[n]$ ($\log n$ bits) and then verify, using randomness, that $i$ is the starting point for the \emph{earliest} occurrence of $y$ in $x$. More precisely, in the verification phase, the players first run the $\Rpf$-bit protocol to decide whether $y$ has a primitive period (and compute its order). Observe that if $y$ occurs more than once in $x$, then since $k=0.9n$, the occurrences must overlap by $\geq 0.8n$ positions. In this case~$y$ has a period of order $\leq 0.1n\leq k/2$, and hence $y$ has a primitive period. Two cases:
\begin{itemize}[label=$-$]
\item \emph{$y$ does not have a primitive period.}
Then $y$ can appear at most once in $x$. The players run an $O(\log n)$-bit equality protocol (as in \Cref{lem:cc-ub}) to test whether $y$ starts at position $i$ in $x$.
\item \emph{$y$ has a primitive period of order $\ell\in[k/2]$.}
Then position $i$ is the earliest occurrence of $y$ in $x$ iff (1) $y$ starts at position $i$ in $x$, and (2) $y$ does not start at position $i-\ell$ in $x$. The conditions (1) and (2) can be checked by running an equality protocol twice.
\end{itemize}
\end{proof}

\section{Threshold Circuits}
In this section we prove \Cref{thm:threshold}.
\threshold*
In \Cref{sec:threshold construction} we prove the upper bound, in \Cref{sec:lb threshold unbdd k = logn} we give the lower bounds. Finally, in \Cref{sec:thrdepth2} we study the complexity of $\PM$ in the models of restricted threshold circuits.

\subsection{Upper bound}\label{sec:threshold construction}
We start with a construction giving the upper bound of \Cref{thm:threshold}.

\begin{lemma}\label{thm:upthdepth2}
There is a depth-$2$ threshold circuit of size $O(n-k)$ computing $\PM$.
\end{lemma}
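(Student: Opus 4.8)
The plan is to build a depth-2 threshold circuit whose top gate is an $\OR$-like threshold that fires iff the pattern $y$ occurs at some starting position $i\in[n-k+1]$, using one bottom gate per candidate starting position. Concretely, for each $i\in[n-k+1]$ I want a single threshold gate $g_i$ that outputs $1$ iff $x[i,i+k-1]=y$. The natural way to test equality of two $k$-bit strings with one threshold gate is to count mismatches: $x[i,i+k-1]=y$ iff $\sum_{j=1}^{k}\bigl(x_{i+j-1}(1-y_j)+(1-x_{i+j-1})y_j\bigr)=0$, i.e.\ iff $\sum_{j} x_{i+j-1}(1-2y_j) + \sum_j y_j \le 0$. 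The left-hand side is an affine function of the input bits $x_{i+j-1},y_j$ (the products $x_{i+j-1}y_j$ are the only obstruction, but $x_{i+j-1}(1-2y_j)$ is \emph{linear} in the pair once we expand — wait, it is not, since it multiplies an $x$ variable by a $y$ variable). So a little more care is needed: the mismatch count $\sum_j (x_{i+j-1}\oplus y_j)$ is a quadratic form, not a linear threshold, so a single LTF cannot directly compute $x[i,i+k-1]=y$ when $y$ is part of the input. I would therefore use the standard trick of an LTF over the extended input obtained by also feeding in the bit-complements, or — cleaner — note that $x_{i+j-1}\oplus y_j = x_{i+j-1} + y_j - 2x_{i+j-1}y_j$, and handle the $x_{i+j-1}y_j$ cross terms by a preliminary observation, OR simply observe that comparing a window of $x$ to $y$ bit-by-bit, the count of positions where they \emph{agree} equals $k$ iff equality holds, and an LTF can check "agreements $\ge k$", where "agreement at position $j$" is the function $x_{i+j-1}=y_j$; but again that is an XNOR, quadratic. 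The correct resolution, which I would present, is: the bottom layer does not need to take $(x,y)$; I can instead let the bottom gates be ordinary LTFs in the $2k$ inputs of a window together with $y$ by using the identity that equality of two length-$k$ binary vectors $u,v$ holds iff $\langle u,v\rangle + \langle \bar u,\bar v\rangle = k$, i.e.\ $\sum_j u_jv_j + \sum_j(1-u_j)(1-v_j)\ge k$ — still quadratic. So a single-layer equality test on two input strings genuinely needs depth $2$ by itself; hence the honest construction is depth $2$ with the first layer computing the $n-k+1$ window-equality predicates and the second layer OR-ing them, but each window-equality predicate already costs depth $2$, giving depth $3$ naively.

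To actually get depth $2$ and size $O(n-k)$, I would exploit the structure more cleverly, following the idea hinted at by the "$O(n-k)$" bound (note $n-k$, not $n$): the $n-k+1$ windows overlap heavily, so consecutive window-equality tests are almost the same. The key is a \emph{sliding} / periodicity argument. I would proceed as follows. First, handle the comparison of $x$ against the \emph{fixed} shifts of $y$ using the fact that for a fixed candidate shift $i$, the predicate $x[i,i+k-1]=y$ together with the knowledge of which $y_j$ are $0$ vs $1$ lets us write the mismatch count as a linear form $\sum_j \pm x_{i+j-1} + c_i$ \emph{once $y$ is fixed}; but since $y$ is an input this does not apply directly. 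The resolution used in the paper is almost surely this: build, in the first layer, $O(n-k)$ gates each of which compares one bit of $x$ to one bit of $y$ in an aggregate way, using the observation that $x[i,i+k-1]=y$ iff two prefix-sum-type linear functionals match. Specifically, define $a_i = \sum_{j=1}^{i+k-1} w_j x_j$ and $b = \sum_{j=1}^{k} w_j' y_j$ for suitably chosen weights separating windows — i.e.\ use a single "signature" linear functional with weights that are powers chosen so that equality of windows reduces to equality of signatures, then compare signatures with threshold gates. Comparing two reals $a_i = b'$ with exact equality needs two threshold gates ($a_i\ge b'$ and $a_i\le b'$), and $a_i$ depends only on input bits linearly, $b'$ likewise, so each of these is a genuine LTF of the input — \emph{this} is depth $1$! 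Then the second layer OR's the $n-k+1$ equality indicators. But an equality-of-reals indicator is an AND of two LTFs, so it is the indicator "$a_i - b' = 0$", which is not itself an LTF but is computed by \emph{two} first-layer gates; the second-layer gate must then OR over $i$ of (AND of two gates), which is a threshold function of $2(n-k+1)$ first-layer gate outputs: it should accept iff some pair $(2i-1,2i)$ is both on, which a single threshold gate cannot do in general. So I would instead, in the first layer, directly compute $\mathrm{EQ}_i := [\,a_i = b'\,]$ as \emph{one} bottom gate — impossible with one LTF — OR accept that the real construction is: first layer has $2(n-k+1)$ LTF gates computing $[a_i \ge b']$ and $[a_i \le b']$ (hence $[a_i > b']$, $[a_i<b']$ by complement), second layer computes $\mathrm{SM} = \bigvee_i \neg([a_i>b']\vee[a_i<b'])$, and the top gate is a threshold that fires iff $\sum_i ([a_i>b'] + [a_i<b']) \le n-k$, i.e.\ iff at least one $i$ has both inequalities failing — this IS a single LTF over the $2(n-k+1)$ bottom-gate outputs.

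So the concrete construction I would write up: choose injective weights $w_1,\dots,w_n$ (e.g.\ $w_j = 2^{j}$, or to keep weights polynomial, carefully bucket) such that for each window start $i$, $x[i,i+k-1]=y \iff \sum_{j=1}^k w_{j} x_{i+j-1} = \sum_{j=1}^k w_j y_j$; since the map $u\mapsto \sum w_j u_j$ is injective on $\{0,1\}^k$ when the $w_j$ are, this works. Bottom layer: for each $i\in[n-k+1]$, gate $A_i = [\sum_j w_j x_{i+j-1} - \sum_j w_j y_j \ge 1]$ and $B_i = [\sum_j w_j y_j - \sum_j w_j x_{i+j-1} \ge 1]$; each is an LTF of $(x,y)$. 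Note $A_i \vee B_i = $ "window $i$ is NOT equal to $y$". Top layer: output $[\,\sum_{i} (A_i + B_i) \le n-k\,]$; since there are $n-k+1$ windows and $A_i+B_i\in\{0,1,2\}$ with value $1$ when unequal (one of $A_i,B_i$ holds) and $0$ when equal, the sum is $\le n-k$ iff at least one window equals $y$, i.e.\ iff $\PM(x,y)=1$. This is a single LTF over the $2(n-k+1) = O(n-k)$ bottom gates, so the circuit has $O(n-k)$ gates and depth $2$. The main obstacle — and the only thing that needs real care — is the weight size: $w_j = 2^j$ gives exponential weights, which is acceptable since the theorem places no weight restriction, but if one wants to be tidy one can note the problem is trivial for $k$ close to $n$ (then $n-k$ is small and exchanging/comparing is cheap) and for small $k$ exponential-in-$k$ weights suffice; I would just remark that arbitrary weights are allowed and exponential weights suffice here, matching the remark in the paper about equality needing exponential weights. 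I would also double-check the boundary/edge cases: empty pattern or $k=n$ (single window, reduces to equality, handled), and the off-by-one in the window count $n-k+1$ versus the threshold $n-k$.
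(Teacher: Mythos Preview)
Your final construction is correct and essentially identical to the paper's: for each window $i$ you use two linear comparison gates testing strict inequality in each direction (the paper uses the non-strict versions $g_i,\ell_i$, i.e.\ $\GEQ$ gates with weights $1,2,\dots,2^{k-1}$), and the top threshold gate fires exactly when some window has both comparison gates equal to~$0$ (equivalently, both of the paper's gates equal to~$1$). The only difference is cosmetic---strict vs.\ non-strict comparison---which merely flips the sense of the top-layer threshold from your $\sum_i(A_i+B_i)\le n-k$ to the paper's $\sum_i(g_i+\ell_i)\ge n-k+2$.
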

\begin{proof}
Let $\GEQ$ be the gate that gets $2k$ bits $z_1,\dots,z_k ; w_1,\dots,w_k$
and evaluates to 1
if and only if the number represented in binary by the bits $z_1,...,z_k$
is greater than or equal to the number represented by $w_i,...w_k$.
Note that $\GEQ$ can be implemented by one threshold gate
as follows: $\GEQ(z_1,\dots,z_k ; w_1,\dots,w_k) =1$ if and only if
$(z_1 + 2\cdot z_2+ 4\cdot x_3+...+2^{k-1} \cdot z_k ) - (w_1 + 2\cdot w_2+ 4\cdot w_3+...+2^{k-1} \cdot w_k ) \geq 0$.

Now we describe a circuit computing $\PM(x_1,\dots,x_n ; y_1,\dots,y_k)$.
Let the first layer contain $n-k+1$ $\GEQ$ gates $g_i$ and $n-k+1$ gates $\ell_i$ for $i=1,\dots n-k+1$,
where each $g_i$ gets as inputs $x_i,\dots,x_{i+k-1} ; y_1,\dots,y_k$ (and evaluates to $1$
if and only if the number represented by the corresponding bits of $x$ is \emph{at least} that represented by $y$),
and $\ell_i$ gets as inputs $y_1,\dots,y_k ; x_i,\dots,x_{i+k-1}$ (and evaluates to $1$
if and only if the number represented by the corresponding bits of $x$ is \emph{at most} that represented by $y$).
The second (output) layer evaluates to $1$ if and only if $\sum_{i=1}^{n-k+1} (g_i+\ell_i) \geq n-k+2$.
Clearly the circuit contains $2n-2k+3$ gates.

In order to prove correctness, we note that for every $i$, at least one of $g_i$ and $\ell_i$ evaluates to $1$,
and $x[i,i+k-1] = y$ if and only if both $g_i$ and $\ell_i$ are equal to $1$.
Therefore, $\sum_{i=1}^{n-k+1} (g_i+\ell_i) > n-k+1$ if and only if $y$ is a substring of $x$, i.e., $\PM(x,y) = 1$.
\end{proof}

\subsection{Lower bounds}\label{sec:lb threshold unbdd k = logn}
In order to prove the first lower bound of $\Omega(\frac{n\log\log{k}}{k\log{n}})$ we use the classical result on communication complexity of threshold gates~\cite{nisan1993communication}, and the lower bound on communication complexity of $\PM$ from \Cref{thm:comm}.

Nisan and Safra~\cite{nisan1993communication} proved that for \emph{any} bipartition of the $n$ input bits, the $\epsilon$-error randomized communication complexity of a threshold gate (with arbitrary weights) has communication complexity $O(\log{n/\epsilon})$. From this they concluded that for any function $f$, a lower bound of $m$ on the randomized communication complexity for \emph{some} bipartition of the input implies a lower bound of $\Omega(m/\log{n})$ on the threhold complexity of $f$. Now the lower bound of $\Omega(n\log\log{k}/k)$ from \Cref{thm:comm} implies the lower bound of $\Omega(\frac{n\log\log{k}}{k\log{n}})$ on the size of an unbounded depth threshold circuit computing $\PM$.

Below we prove the second lower bound stated in \Cref{thm:threshold}.
The lower bound is shown via a reduction from a hard function $f \colon \cc[k/2-1] \to \zo$ which has $n/k$ preimages of $1$: $|f^{-1}(1)|=n/k$. First, we prove the desired lower bound for the case where $k$ is even and $n$ is a multiple of $k$. In the end of this section we explain how to adjust the proof to the remaining cases. Let $\ell$ and $t$ be integers such that $k=2\ell+2$ and $n=t\cdot k$. Let $F_{\ell,t} = \{f \colon \cc[\ell] \to \zo : |f^{-1}(1)|=t\}$ be the class of Boolean functions of $\ell$ inputs which have exactly $t$ preimages of $1$.

We prove this lower bound via a reduction from a hard function $f \in F_{\ell,t}$.
Specifically, we show that if $\PM$ can be solved by a circuit of size $s$,
then every function $f \in F_{\ell,t}$ also has a circuit of size $s$ computing it.
Then, we show that there are functions in $F_{\ell,t}$
that require large threshold circuits, which implies the corresponding lower bound for the $\PM$ function.

\paragraph{The reduction.}
Given a string $a \in \cc[\ell]$ define $dup(a) \in \cc[k]$
to be the string obtained from $a$
by repeating each bit of $a$ twice, and concatenating it with $01$ in the end.
(Note that $2\ell + 2 = k$ by the choice of $\ell$).
For example $dup(010) = 00110001$.


\begin{observation}\label{observ:reduction to PM t-sparse}
Given a function $f \in F_{\ell,t}$
define $x_f \in \cc[tk]$ to be the concatenation of $dup(a)$ for all $a \in f^{-1}(1)$ in the lexicographic order on $\cc[\ell]$.
Note that $|x_f| = tk = n$.
Then, for any $y \in \cc[\ell]$
it holds that $f(y)=1$ if and only if $\PM(x_f,dup(y)) = 1$.
\end{observation}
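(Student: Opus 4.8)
The plan is to verify the two directions of the biconditional directly from the definition of $dup$ and the construction of $x_f$. The key structural fact is that $dup$ is an injective \emph{substring-preserving} encoding: because every bit of $a$ is doubled and the block ends with the marker $01$, a string of the form $dup(y)$ can occur inside $x_f$ (a concatenation of $dup(a)$-blocks) only in an ``aligned'' way, i.e., starting at a block boundary and coinciding with one whole block. First I would establish this alignment claim, which is really the heart of the argument.

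For the forward direction, suppose $f(y)=1$. Then $y\in f^{-1}(1)$, so $dup(y)$ appears verbatim as one of the blocks in the concatenation defining $x_f$; hence $\PM(x_f, dup(y))=1$ immediately. For the reverse direction, suppose $\PM(x_f,dup(y))=1$, so $dup(y)$ occurs as a substring of $x_f$ starting at some position $i$. I would argue that this occurrence must be block-aligned: reading $x_f$ as blocks $B_1B_2\cdots B_t$ each of length $k$, the substring $dup(y)$ also has length $k$, so it overlaps either one block (if aligned) or two consecutive blocks $B_jB_{j+1}$ (if not). In the misaligned case, the occurrence would have to match a suffix of $dup(a)$ followed by a prefix of $dup(a')$ for two patterns $a,a'\in f^{-1}(1)$; I would derive a contradiction from the position of the $01$ marker. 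Concretely, the penultimate two bits of $dup(y)$ are forced (they are $01$ coming right before the final $01$? — here one checks the exact bit pattern $dup(y)=y_1y_1y_2y_2\cdots y_\ell y_\ell 01$), and tracking where the unique ``$\ldots 0 1$'' boundary and the forced-doubling structure can land inside $B_jB_{j+1}$ rules out every shift $1\le r\le k-1$. Thus the occurrence is some full block $B_j=dup(a)$ with $dup(a)=dup(y)$, and by injectivity of $dup$ we get $a=y$, so $y\in f^{-1}(1)$, i.e., $f(y)=1$.

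The main obstacle is the misalignment case: one must rule out \emph{spurious} occurrences of $dup(y)$ straddling two blocks. The doubling trick handles this cleanly — in any block-straddling window of length $k$, the forced ``consecutive equal pairs'' parity of $dup(y)$ cannot line up with the parity of the two halves of $dup(a)dup(a')$ once the $01$ suffixes are taken into account — but writing this out requires a short careful case analysis on the shift $r \bmod 2$ and on where the trailing $01$ markers of $B_j$ and $B_{j+1}$ fall relative to the window. I would isolate this as a one-paragraph sublemma (``$dup(y)$ occurs in a concatenation of $dup$-blocks only at block boundaries''), prove it by the parity/marker argument, and then the observation follows in two lines. Everything else — the length count $|x_f|=tk=n$, injectivity of $dup$, the forward direction — is routine.
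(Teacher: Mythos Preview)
Your proposal is correct and follows exactly the approach the paper has in mind. The paper's own justification of this observation is a single sentence: ``Duplicating every bit in $a$ and adding $01$ to the end of the resulting pattern are done to ensure that if $f(y)=0$ there will not be a copy of $dup(y)$ in $x_f$''---so you have correctly identified (and fleshed out) the alignment sublemma that the paper leaves implicit, and your plan to rule out misaligned occurrences via the parity of the shift together with the position of the $01$ markers is precisely the intended argument.
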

Indeed, it is immediate to see that if $f(y)=1$ then $\PM(x_f,dup(y)) = 1$. Duplicating every bit in $a$ and adding $01$ to the end
of the resulting pattern are done to ensure that if $f(y)=0$ there will not be a copy of $dup(y)$ in $x_f$.

Given the observation above, it is not difficult to see that any lower bound on the
size of a circuit computing $f\in F_{\ell,t}$ implies a lower bound on $\PM$.
\begin{proposition}\label{prop:reduction threshold t-sparse}
Let $C$ be a threshold circuit computing $\PM$. Then for every $f\in F_{\ell,t}$, there exists a threshold circuit $C'$ computing $f$ such that $|C'| \leq |C|$.
\end{proposition}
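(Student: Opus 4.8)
The plan is to hard-wire the bits of $x_f$ into the circuit $C$ for $\PM$, leaving only the bits of the pattern as inputs. Concretely, fix $f \in F_{\ell,t}$. By \Cref{observ:reduction to PM t-sparse}, the string $x_f \in \cc[n]$ is a fixed constant once $f$ is fixed, and for every $y \in \cc[\ell]$ we have $f(y) = \PM(x_f, dup(y))$. So I would take the circuit $C$ on $n + k$ input wires, substitute the constants $x_f$ into the $n$ wires carrying $x$, and substitute the affine function $y \mapsto dup(y)$ into the $k$ wires carrying the pattern. Since $dup$ simply copies each of the $\ell$ input bits $y_1,\dots,y_\ell$ to two designated pattern-wires and feeds the constants $0,1$ to the last two pattern-wires, this substitution only rewires and constant-fixes inputs; it does not add any gates. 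The resulting object $C'$ has $\ell = k/2 - 1$ genuine inputs, computes $y \mapsto \PM(x_f, dup(y)) = f(y)$, and has $|C'| \le |C|$.

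The one point that needs a word of care is that $C'$ must still be a legal threshold circuit. Fixing some inputs of a threshold gate to constants leaves a threshold gate (just shift the threshold $\theta$ by the contribution of the fixed inputs); duplicating an input wire to feed two positions of a gate with coefficients $a_i, a_{i'}$ is equivalent to a single wire with coefficient $a_i + a_{i'}$, which is again a valid threshold gate. Neither operation changes the fan-in structure in a way that would break the circuit, and crucially neither creates new gates, so the gate count can only drop. Hence $C'$ is a threshold circuit with $|C'| \le |C|$ computing $f$, as claimed. (Exactly the same substitution argument will work verbatim for DeMorgan circuits later, which is presumably why it is phrased this way.)

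I do not expect a genuine obstacle here — the statement is essentially just "restriction/substitution does not increase circuit size," instantiated through the encoding of \Cref{observ:reduction to PM t-sparse}. The only thing one must be slightly careful about is to observe that the encoding $a \mapsto dup(a)$ is an input-renaming-plus-constants map rather than something requiring computation, so that the reduction is free; once that is noted, the proposition follows immediately. The real work — producing a specific $f \in F_{\ell,t}$ whose threshold complexity is large — is deferred to the next step and is not part of this proposition.
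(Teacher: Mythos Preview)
Your proposal is correct and follows essentially the same approach as the paper: fix the text inputs to $x_f$, identify each pair $y_{2i-1},y_{2i}$ with a single new variable (summing their weights at any shared gate), and hard-wire $y_{2\ell+1}=0$, $y_{2\ell+2}=1$. Your explicit justification that constant-fixing and wire-merging keep each gate a valid threshold gate is exactly the ``proper adjustment of the weight'' the paper alludes to.
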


\begin{proof}
Suppose there exists a circuit $C$ of size at most $s$ computing $\PM$. We denote the input variables of the pattern $y$
by $y_1,y_2 \ldots y_{k = 2 \ell+2}$.
We show how to convert it into a circuit $C'$ computing $f$ by fixing some of the input variables of $C$.
This is done by
(1) fixing the ``text part" (the variables corresponding to $x$) of the input of $\PM$ to be $x_f$ as defined in \Cref{observ:reduction to PM t-sparse},
and (2) replacing every pair of variables $y_{2i-1}$ and  $y_{2i}$ for all $i=1,\dots,\ell$ by a single variable $\widehat{y}_i$ that is
fed to all gates that have inputs $y_{2i-1}$ or $y_{2i}$ (with a proper adjustment of the weight if both $y_{2i-1}$ and $y_{2i}$ are inputs of the gate).
Finally, fix $y_{2\ell+1} = 0$ and $y_{2\ell+2} = 1$. It is now easy to see that $C'$ computes~$f$.
\end{proof}

In order to complete the proof of \Cref{thm:threshold},
we need to show that there exists a function $f \in F_{\ell,t}$
that requires large threshold circuits. For this, we compare the number of small threshold circuits (see, for example, \cite{jukna2012boolean,kane2016super}) with the number of functions in $F_{\ell,t}$.

\begin{proposition}\label{prop:hard functions for thresholds t-sparse}
    Let $\ell \in \N$ be sufficiently large, and let $t \in \N$.
      There exists a function $f \in F_{\ell,t}$
      such that any threshold circuit (with no restrictions on its depth)
      computing $f$ must be of size at least $\Omega(\sqrt{t - t\log{t}/\ell})$.
\end{proposition}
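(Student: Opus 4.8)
The strategy is a standard counting (probabilistic) argument: I will count how many distinct Boolean functions on $\ell$ inputs can be computed by threshold circuits of a given size $s$, compare this against $|F_{\ell,t}| = \binom{2^\ell}{t}$, and conclude that if $s$ is too small then not every function in $F_{\ell,t}$ can be realized. First I would recall (citing \cite{jukna2012boolean,kane2016super}) the bound on the number of threshold circuits of size $s$ over $\ell$ input bits. A single threshold gate with $m$ inputs has at most $2^{O(m\log m)}$ distinct behaviors (by the classical bound on the number of linear threshold functions, or the fact that an LTF on $m$ bits can be represented with integer weights of magnitude $2^{O(m\log m)}$); a circuit of size $s$ has at most $s + \ell$ ``wires'' feeding each gate, so each gate has at most $2^{O((s+\ell)\log(s+\ell))}$ behaviors, and specifying the whole circuit amounts to specifying $s$ such gates together with their wiring. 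This gives a count of at most $2^{O(s(s+\ell)\log(s+\ell))}$ functions computable by size-$s$ threshold circuits on $\ell$ inputs.

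Next I would lower bound $\log|F_{\ell,t}| = \log\binom{2^\ell}{t}$. Using the standard estimate $\binom{N}{t} \geq (N/t)^t$ with $N = 2^\ell$, we get $\log\binom{2^\ell}{t} \geq t(\ell - \log t) = t\ell - t\log t$. So there is a function in $F_{\ell,t}$ requiring threshold circuits of size $s$ whenever $s(s+\ell)\log(s+\ell) < c(t\ell - t\log t)$ for a suitable constant $c$. Solving this inequality for $s$: in the regime of interest $s$ will be at most roughly $\sqrt{t}$, which (since $t \leq 2^\ell$ is the relevant range and one checks $s \leq \ell$ in the parameter range where the claimed bound is nontrivial) makes $s + \ell = O(\ell)$ and $\log(s+\ell) = O(\log \ell)$, so the constraint becomes $s^2 \cdot O(\ell \log \ell) < c(t\ell - t\log t)$, i.e. $s = \Omega\!\bigl(\sqrt{(t\ell - t\log t)/(\ell\log\ell)}\bigr) = \Omega\!\bigl(\sqrt{(t - t\log t/\ell)/\log \ell}\bigr)$. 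This is slightly weaker than the stated $\Omega(\sqrt{t - t\log t/\ell})$ by a $\sqrt{\log\ell}$ factor, so to get exactly the claimed bound I would instead bound the number of gates more carefully as a function of the \emph{total} number of input bits the circuit reads. Since we only care about circuits on $\ell$ inputs with at most $s$ gates, and in the target regime we may assume $s \le \ell$, the sum of fan-ins is $O(\ell^2)$ and a more careful accounting (each gate's LTF on its actual fan-in $m_i$, with $\sum m_i \le s(s+\ell)$) gives a count of $2^{O(s \ell \log \ell)}$; comparing with $t\ell - t\log t$ and noting that in the nontrivial regime $t = \Theta(n/k)$ while $\ell = \Theta(k)$ forces $\log \ell$ to be absorbable, yields the stated bound. (If an extra logarithmic factor is acceptable, the cruder estimate already suffices for the downstream application, since \Cref{thm:threshold} only claims $\Omega(\sqrt{n/k})$ and $t\asymp n/k$, $\ell \asymp k$.)

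Finally, combining this with \Cref{prop:reduction threshold t-sparse}: the hard function $f \in F_{\ell,t}$ produced above requires threshold circuits of size $\Omega(\sqrt{t - t\log t/\ell})$, and any threshold circuit for $\PM$ restricts (by fixing the text bits to $x_f$ and identifying duplicated pattern bits) to a threshold circuit for $f$ of no larger size, so $\PM$ itself requires threshold circuits of size $\Omega(\sqrt{t - t\log t/\ell})$; plugging $t = n/k$ and $\ell = k/2-1$ and simplifying gives the second lower bound of \Cref{thm:threshold}. The main obstacle is purely the bookkeeping in the counting step — getting the dependence on $\log\ell$ and $s$ tight enough that the final bound matches the claimed $\sqrt{t - t\log t/\ell}$ rather than losing spurious logarithmic factors; there is no conceptual difficulty beyond carefully choosing how to parametrize ``size of a threshold circuit'' when bounding the number of such circuits.
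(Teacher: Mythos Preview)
Your counting strategy matches the paper's approach, but the execution has a real gap. The claim that a threshold gate on $m$ inputs has $2^{O(m\log m)}$ distinct behaviors is false: the number of LTFs on $m$ Boolean inputs is $2^{\Theta(m^2)}$. The weight-magnitude bound $2^{O(m\log m)}$ you cite only says each LTF admits an integer representation with weights of that size, giving at most $2^{O(m^2\log m)}$ weight vectors, not $2^{O(m\log m)}$ functions. Your subsequent simplification $s(s+\ell)\log(s+\ell)\approx s^2\cdot\ell\log\ell$ is also wrong in the regime $s\le\ell$ (it should be $s\ell\log\ell$, which is linear in $s$), so the $\sqrt{\cdot}$ you arrive at is an artifact of two compounded errors rather than a correct derivation. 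The hand-waving about ``absorbing'' the $\log\ell$ factor does not repair either issue.

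The tool you are missing, and which the paper invokes, is a theorem of Roychowdhury, Orlitsky, and Siu (stated in the paper as \Cref{thm:ROS}): for any fixed $f_1,\dots,f_s\colon\{0,1\}^\ell\to\{0,1\}$, the number of functions on $\{0,1\}^\ell$ realizable as a threshold gate applied to $(f_1,\dots,f_s)$ is at most $2^{O(\ell s)}$. The point is that such a gate only needs to separate the at most $2^\ell$ points $\{(f_1(x),\dots,f_s(x)):x\in\{0,1\}^\ell\}\subseteq\mathbb{R}^s$ by a hyperplane, and a Cover-type dichotomy count gives $\sum_{i\le s}\binom{2^\ell}{i}\le 2^{O(\ell s)}$; this exploits the domain size $2^\ell$ rather than just the fan-in of the gate. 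Applying this per gate, the paper gets at most $2^{O(\ell s^2)}$ functions computable by size-$s$ threshold circuits on $\ell$ inputs, and comparing with $|F_{\ell,t}|=\binom{2^\ell}{t}\ge 2^{t(\ell-\log t)}$ immediately yields $\ell s^2\ge\Omega(t\ell-t\log t)$, i.e., $s\ge\Omega(\sqrt{t-t\log t/\ell})$ with no logarithmic slack.
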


\begin{proof}

We first upper bound the number of functions that can be represented by threshold circuits of size at most $s$.
This can be obtained using the following result from \cite{roychowdhury1994neural}.

\begin{theorem}\label{thm:ROS}
Let $f_1 \ldots f_s \colon \cc[\ell] \to \zo$ be a set of $s$ Boolean functions.
Then, the number of Boolean functions which are realized by a threshold gate $g \colon \cc[s] \to \zo$
whose $s$ inputs are $f_1 \ldots f_s$ is at most $2^{O(\ell s)}$.
\end{theorem}

It follows from \Cref{thm:ROS} that the number of distinct
Boolean functions with $\ell$ variables computed by a threshold circuit of size $s$ is $2^{O(\ell s^2)}$ (as there are at most $2^{O(\ell s)}$ choices for every gate and there are $s$ gates).
On the other hand, the number of Boolean functions $f \in F_{\ell,t}$
is ${2^\ell \choose t} \geq (\frac{2^\ell}{t})^t = 2^{(\ell - \log{t})t}$.
Therefore, there exists a function $f \in F_{\ell,t}$ that cannot be computed by a threshold circuit
of size $s \geq \Omega(\sqrt{t - t\log{t}/\ell})$.
\end{proof}

We now derive the desired lower bound on the size of threshold circuits computing the string matching function.
Plugging in $k = 2\ell + 2$ and $n = tk$,
we get the lower bound of $s \geq \Omega\big(\sqrt{\frac{n}{k}  - \frac{2n}{k^2} \cdot \log(\frac{n}{k})}\;\big)=\Omega(\sqrt\frac{n}{k})$ assuming $k\geq \Omega(\log n)$.

Now we describe how this proof can be adopted for the case when $n$ is not a multiple of $k$ and the case of odd $k$. First, in order to handle the case of pattern of \emph{odd} length, one can add the string $010$ (instead of $01$) to the end of $dup(a)$. If $n$ is not a multiple of $k$, then in the reduction above we can pad the string $x_f$ with zeros in the end,
and the reduction still satisfies the property that
$f(y)=1$ if and only if $\PM(x_f,dup(y)) = 1$
as in \Cref{observ:reduction to PM t-sparse},
and the same lower bound holds (up to a constant factor in the asymptotics).

\subsection{Depth-2 Circuits}\label{sec:thrdepth2}

In \cref{thm:weird_bounds} we prove lower bounds for some restricted classes of depth-$2$ circuits computing $\PM$.
These results should be contrasted with the upper bounds of \Cref{thm:threshold} and \Cref{thm:lb DeMorgan gate elimination}. Namely, there exists an $\LTF\circ\LTF$ circuit of size $O(n-k)$ and an $\OR\circ\AND\circ\OR$ circuit of size $O(nk)$ computing $\PM$.

We recall a few definitions. Let $\ELTF$ denote the class of \emph{exact} threshold functions (that is, the functions which output $1$ on an $m$-bit input $x$ if and only if $\sum_{i\in[m]} a_ix_i= \theta$ for some fixed coefficient vector $a \in \R^{m}$, and $\theta \in \R$). Similarly, $\EMAJ$ denotes the class of exact majorities which output $1$ if and only if the sum of their $m$ Boolean inputs is exactly $m/2$. By $\SYM$ we denote the class of all symmetric Boolean functions. For two classes of functions $\C_1$ and $\C_2$, by $\C_1\circ\C_2$ we denote the class of depth-$2$ circuits where the output gate is from $\C_1$ and the gates of the first layer are from $\C_2$. For a class of circuits $\C$ and a function $f$, be $\C(f)$ we denote the minimal size of a circuit from $\C$ computing $f$.

In proving lower bounds for $\PM$ a simple yet useful property is that Observation~\ref{obs:disj} can
be applied to circuits as well. This allows to reduce the disjointness problem to string matching, and get lower bounds for $\PM$ via known circuit lower bounds for disjointness.
The point is that a circuit $C$ with strings of length roughly $m k$ for $\PM$ (and patterns of length $k$) can be used to solve disjointness on strings of length $m$
by feeding $C$ with the string $x~\coloneqq~a_1b_{1}1^{k-2}0a_2b_{2}1^{k-2}0 \ldots a_{n}b_{n}1^{k-2}0$ and the pattern $y=1^k$. Hence a lower bound of $s(n)$ for circuits
computing disjointness implies a lower bound of $\Omega(s(n/k))$ for circuits computing $\PM$.

\begin{theorem}\label{thm:weird_bounds} For every $1<k\leq n$,
\begin{enumerate}[itemsep=1pt]
\item $\OR\circ\LTF(\PM)\geq \Omega(n-k)$;
\item $\AND\circ\LTF(\PM)\geq 2^{\Omega(n/k)}$;
\item $\AND\circ\OR\circ\XOR(\PM)\geq 2^{\Omega(n/k)}$;
\item $\ELTF\circ\SYM(\PM)\geq 2^{\Omega(n/k)}$;
\item $\EMAJ\circ\ELTF(\PM)\geq 2^{\Omega(n/k)}$.
\end{enumerate}
\end{theorem}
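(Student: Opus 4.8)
The plan is to prove item~1 directly from the structure of "contains a run of $k$ ones", and to get items~2--5 by reducing a hard function to $\PM$ through \Cref{obs:disj} and invoking an exponential circuit lower bound for that function in each model; item~3 I would prove from scratch by a short counting argument, and the other three I would obtain from known (or analogously provable) lower bounds for disjointness.

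\textbf{Item 1 ($\OR\circ\LTF$).} Fixing the pattern to the constant $y=1^k$ only decreases $\OR\circ\LTF$, so it suffices to show the monotone function $g\colon\{0,1\}^n\to\{0,1\}$, $g(x)=1$ iff $x$ contains $k$ consecutive ones, needs $\Omega(n-k)$ LTFs in an $\OR\circ\LTF$. For $i\in[n-k+1]$ let $p_i$ be the string with ones exactly on $[i,i+k-1]$; each $p_i$ is a minterm of $g$, while any string containing no $k$-run is a $0$-input. Suppose $g=h_1\vee\dots\vee h_s$ with each $h_j$ an LTF; since $h_j\le g$ pointwise, each $h_j$ rejects every $0$-input. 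The key claim is that one LTF $h$ with $h\le g$, weights $w$ and threshold $\theta$, cannot accept two minterms $p_i,p_{i'}$ with $i'\ge i+2$: using $k\ge2$ and $i'\ge i+2$, both $[i+1,i+k-1]\cup\{i'+k-1\}$ and $\{i\}\cup[i',i'+k-2]$ index $0$-inputs (a $(k-1)$-run and a separated isolated one), so writing $W(S)=\sum_{\ell\in S}w_\ell$ one gets $W([i,i+k-1])-w_i+w_{i'+k-1}<\theta\le W([i,i+k-1])$ and $W([i',i'+k-1])+w_i-w_{i'+k-1}<\theta\le W([i',i'+k-1])$, i.e.\ $w_{i'+k-1}<w_i$ and $w_i<w_{i'+k-1}$, a contradiction. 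Hence each $h_j$ accepts at most two minterms and those two are consecutive, so $s\ge\lceil(n-k+1)/2\rceil=\Omega(n-k)$.

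\textbf{Items 2--5: the reduction and item 3.} By \Cref{obs:disj}, for $m=\Theta(n/k)$ the function $\Disj_m$ is a restriction of $\PM_{n,k}$ obtained only by fixing the pattern and certain $x$-bits to constants and renaming the rest; this introduces no gates, so $\mathcal C(\PM_{n,k})\ge\mathcal C(\Disj_m)$ for \emph{every} circuit class $\mathcal C$. Thus items~2--5 follow from $\mathcal C(\Disj_m)\ge2^{\Omega(m)}$ for $\mathcal C=\AND\circ\LTF,\ \AND\circ\OR\circ\XOR,\ \ELTF\circ\SYM,\ \EMAJ\circ\ELTF$. For item~3 this has a clean proof: complementing and pushing negations into the negation-closed $\XOR$ layer, $\AND\circ\OR\circ\XOR(\Disj_m)=\OR\circ\AND\circ\XOR(\overline{\Disj_m})$, which is the minimum number of $\mathbb{F}_2$-affine subspaces of $\mathbb{F}_2^{2m}$ covering the disjointness set $D=\{(a,b):\operatorname{supp}(a)\cap\operatorname{supp}(b)=\varnothing\}$, each contained in $D$. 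If $V\subseteq D$ is affine then for each pair index $i$ its projection to $(a_i,b_i)$ is affine and avoids $(1,1)$, hence has dimension $\le1$, so $V$ satisfies a nonzero linear equation in $\{a_i,b_i\}$; these $m$ equations live on disjoint variable sets, so $\dim V\le m$ and $|V|\le2^m$. Since $|D|=3^m$, at least $(3/2)^m=2^{\Omega(m)}=2^{\Omega(n/k)}$ subspaces (hence $\AND$-gates) are required.

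\textbf{Items 2, 4, 5 and the main obstacle.} The remaining three require exponential lower bounds for $\Disj_m$ against unions/intersections of halfspaces, exact thresholds of symmetric gates, and exact-majorities of exact thresholds. These are known, or provable in the spirit of the affine-subspace count above — bound the ``volume'' of the largest halfspace / symmetric-constant region / exact-threshold region that can sit inside $D$ or inside $\Disj_m^{-1}(1)$ without touching the other class, then divide into the total count. This is where the real work lies. I expect $\AND\circ\LTF$ (item~2) to be the most delicate: a halfspace is not pinned down by its coordinate-pair projections the way an affine subspace is, so either the covering argument needs an additional idea, or one falls back on the communication-complexity lower bound for disjointness combined with the cheap randomized communication of a single LTF gate (using that an $\AND$ of LTFs yields an efficient co-nondeterministic-plus-randomized protocol). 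In each case one should also verify the routine point that the \Cref{obs:disj} reduction preserves the exact class $\mathcal C$, which it does since it is a substitution of constants and a renaming of variables.
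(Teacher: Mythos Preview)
Your proof of item~1 is correct and essentially identical to the paper's: you derive the contradiction as two strict inequalities $w_{i'+k-1}<w_i$ and $w_i<w_{i'+k-1}$, while the paper averages the same two $0$-inputs to get $g_1(x)+g_1(x')\ge2\theta$; these are the same manipulation. Your item~3 is also correct but takes a genuinely different route: the paper proves items~2 and~3 together by a communication-complexity simulation, whereas your affine-cover counting argument (the projection to each pair $(a_i,b_i)$ misses $(1,1)$, hence codimension $\ge m$, hence $(3/2)^m$ subspaces) is self-contained and more elementary.

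The gap is in item~2. The simulation you describe --- ``an $\AND$ of LTFs yields an efficient co-nondeterministic-plus-randomized protocol'' --- is a $\mathsf{coMA}$ simulation: Merlin names a violated LTF, Arthur evaluates it with $O(\log m)$ randomized bits. But Klauck's bound gives only $\mathsf{MA}(\neg\Disj_m)=\Theta(\sqrt m)$, so this yields $\log t\ge\Omega(\sqrt m)$ and hence $t\ge2^{\Omega(\sqrt{n/k})}$, quadratically short of the claimed $2^{\Omega(n/k)}$. The paper avoids this loss by using a \emph{random} (not nondeterministic) choice of gate: pick $j\in[t]$ uniformly and evaluate the $j$th LTF to error $0.1/t$. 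That protocol accepts every $1$-input with probability $\ge\alpha\approx1/t$ and every $0$-input with probability $\le\alpha/2$; for this SBP-type constant-factor gap, \cite{goos16communication} proves an $\Omega(m)$ lower bound on $\neg\Disj_m$, which gives $t\ge2^{\Omega(m)}$. Your alternative covering heuristic (``bound the largest halfspace contained in $D$'') does not obviously work either: already for $m=1$ the halfspace $\{(a,b):a+b\le1\}$ equals $D$, so unlike the affine case there is no coordinate-pair codimension argument to run.

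For items~4 and~5 your plan --- reduce to $\Disj_m$ via \Cref{obs:disj}, then quote a known exponential lower bound --- matches the paper, which cites Hansen--Podolskii for $\ELTF\circ\SYM$ and Razborov--Sherstov (together with closure of $\EMAJ\circ\ELTF$ under $\AND$) for item~5.
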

\begin{proof}
\begin{enumerate}
\item We will prove that even for the fixed pattern $y=1^k$, the number of $\LTF$ gates in any $\OR\circ\LTF$ circuit computing $\PM$ must be at  least $(n-k+1)/2$. Assume, for the sake of contradiction that there exist $t<(n-k+1)/2$ threshold gates $g_1,\ldots, g_t$ whose $\OR$ computes $\PM(x,1^k)$. For $0\leq i\leq n-k$, let $x_i=0^i 1^k 0^{n-k-i}$ be a string of length $n$. Note that for every $i$, $\PM(x_i,1^k)=1$, therefore, there exists at least on gate $g_j$ for $1\leq j\leq t$ accepting it. Since there are $n-k+1$ strings $x_i$, and $t<(n-k+1)/2$ gates, at least one gate $g_j$ must accept two non-consecutive $x_i$'s. Without loss of generality assume that $g_1$ accepts $(x_i,y)$ and $(x_j,y)$ for $j>i+1$. Now let
\begin{align*}
x&=0^i10^{j-i-1}1^{k-1}0^{n-k-j+1} \;, \\
x'&=0^{i+1}1^{k-1}0^{j-i-1}10^{n-k-j} \; .
\end{align*}
For the fixed pattern $y=1^k$, suppose $g_1$ computes the function $g_1(x)=\sum_{m=1}^n \alpha_m x[m]\geq \theta$ of the text $x$. From $g_1(x_i)=g_1(x_j)=1$, we have that $\sum_{m=i+1}^{i+k}\alpha_m+\sum_{m=j+1}^{j+k}\alpha_m\geq 2\theta$. Now we apply the function $g_1$ to $x$ and $x'$:
\begin{align*}
g_1(x)+g_1(x')=\left(\alpha_{i+1}+\sum_{m=j+1}^{j+k-1}\alpha_m\right) + \left(\sum_{m=i+2}^{i+k}\alpha_m + \alpha_{j+k}\right)
=\sum_{m=i+1}^{i+k}\alpha_m+\sum_{m=j+1}^{j+k}\alpha_m\geq 2\theta \;.
\end{align*}
Therefore, at least one of the inputs $x$ and $x'$ is accepted by $g_1$ (and, therefore, by the $\OR\circ\LTF$ circuit). Note that $x$ and $x'$ each has $k$ ones with a zero in between (since $j>i+1$). Therefore, neither $x$ not $x'$ can be accepted by a circuit computing $\PM(x,1^k)$, which leads to a contradiction.

\item
We will simulate an $\OR\circ\LTF = \neg\AND\circ\LTF$ circuit by a special type of private-coin \emph{``small bounded-error''} protocol against which lower bounds are known for computing $\neg\Disj_m$ for $m\coloneqq n/k$~\cite{goos16communication} (and hence $\neg\PM$ by \Cref{obs:disj}). Suppose the top fan-in of an $\OR\circ\LTF$ circuit is $t$ and its input $x\in\{0,1\}^m$ is bipartitioned between two players. In the simulation, Alice first uses her private coins to choose a uniform random $i\in[t]$ and sends it to Bob ($\log t$ bits). Then Alice and Bob together evaluate the $i$-th LTF gate to within error $\epsilon\coloneqq 0.1/t$, which takes $O(\log(m/\epsilon))=O(\log m+\log t)$ many bits of communication. This protocol is such that it accepts every $1$-input of the circuit with probability at least $\alpha\coloneqq 1/t \cdot (1-\eps)$ (at least one LTF evaluates to 1); and every $0$-input it accepts with probability at most $\epsilon\leq \alpha/2$. It is known that every such protocol (with a constant-factor acceptance gap, $\alpha$ vs.\ $\alpha/2$) for $\neg\Disj_m$ (and hence for $\neg\PM$) needs $\Omega(m)$ bits of communication~\cite[Thm~1.3]{goos16communication}. This shows a lower bound of $t\geq 2^{\Omega(m)}=2^{\Omega(n/k)}$ for the size of any $\OR\circ\LTF$ circuit for $\neg\PM$, or equivalently, any $\AND\circ\LTF$ circuit for $\PM$.

\item The above proof works with the layer of $\LTF$ gates replaced by $\OR\circ\XOR$ gates: both types of gates admit an $O(\log(n/\epsilon))$-bit $\epsilon$-error protocol (evaluating $\OR\circ\XOR$ involves computing the \emph{equality} function).
\item This bound follows from the reduction from $\Disj$ and Theorem~15 in Hansen and Podolskii~\cite{hansen2010exact}.
\item Follows from the work of Razborov and Sherstov~\cite{razborov2010sign}, and the closedness of $\EMAJ\circ\ELTF$ under $\AND$ (see Theorem~20 in~\cite{hansen2010exact}).
\end{enumerate}
\end{proof}
\section{DeMorgan Circuits}\label{sec:depth2}
In this section we prove \Cref{thm:depth2lb} and \Cref{thm:lb DeMorgan gate elimination}.
\depthtwolb*

\elimination*
In \Cref{sec:construction DeMorgan unbdd} we give upper bounds for both theorems, in~\Cref{sec:lb DeMorgan d2} we prove lower bounds for depth-$2$ circuits, and in~\Cref{sec:unbounded} we provide a lower bound for the unbounded depth case.

\subsection{Upper Bounds}\label{sec:construction DeMorgan unbdd}
We first give a $\DNF$ with $2^k(n-k+1)$ clauses computing $\PM$, and in~\Cref{lem:DNF} we will prove that this $\DNF$ is essentially optimal.

\begin{lemma}\label{thm:depth2 construction}
For any $k \leq n$ there exists a DeMorgan circuit of depth 2 and size $(n-k+1) \cdot 2^k +1$ computing $\PM$.
\end{lemma}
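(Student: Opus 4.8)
The plan is to realize the string matching function as a DNF where there is exactly one clause for every possible occurrence of a pattern value at a particular starting position. Concretely, $\PM(x,y)=1$ if and only if there exists a starting position $i\in[n-k+1]$ and a value $w\in\{0,1\}^k$ such that $y=w$ and $x[i,i+k-1]=w$; the disjunction is over all such pairs $(i,w)$. For each fixed pair $(i,w)$, the event ``$y=w$ and $x[i,i+k-1]=w$'' is a conjunction of $2k$ literals (each bit of $y$ and each of the $k$ relevant bits of $x$ is forced to the prescribed value), so it is computed by a single AND gate of fan-in $2k$. Taking the OR of all these AND gates over $i\in[n-k+1]$ and $w\in\{0,1\}^k$ gives a depth-$2$ circuit with $(n-k+1)\cdot 2^k$ gates in the bottom layer and one OR gate on top, for a total of $(n-k+1)\cdot 2^k+1$ gates, matching the claimed size.

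The verification of correctness is the routine part: if $\PM(x,y)=1$, pick a witnessing occurrence at position $i$ and let $w\coloneqq y=x[i,i+k-1]$; the corresponding clause is satisfied, so the circuit outputs $1$. Conversely, if some clause indexed by $(i,w)$ is satisfied, then $y=w$ and $x[i,i+k-1]=w$, hence $x[i,i+k-1]=y$ and $\PM(x,y)=1$. I would also note that the circuit is a genuine DNF (an OR of ANDs of literals), so it has depth $2$ in the DeMorgan basis with NOT gates pushed to the inputs; since we count only gates and not input literals, the size is as stated.

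There is essentially no obstacle here — this lemma is the easy ``upper bound'' half whose role is to be shown optimal later in \Cref{lem:DNF}. The only mild subtlety worth a sentence is the bookkeeping of gate count under the paper's convention that inputs and their negations are free: one should be careful that ``size'' refers to the number of AND/OR gates, which is $(n-k+1)2^k$ bottom gates plus the single top OR, and not to the total number of wires (which would be $\Theta(nk2^k)$). I would present the construction, state the two directions of correctness in one or two lines each, and conclude.

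\begin{proof}
We construct a DNF. For each starting position $i\in[n-k+1]$ and each value $w=w_1w_2\cdots w_k\in\{0,1\}^k$, introduce an AND gate $C_{i,w}$ of fan-in $2k$ whose inputs are the literals forcing $y_j=w_j$ for all $j\in[k]$ and $x_{i+j-1}=w_j$ for all $j\in[k]$ (using the variable $x_{i+j-1}$ if $w_j=1$ and its negation if $w_j=0$, and likewise for $y_j$). Let the output gate be the OR of all $(n-k+1)\cdot 2^k$ gates $C_{i,w}$. This is a depth-$2$ DeMorgan circuit with $(n-k+1)\cdot 2^k+1$ gates.

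For correctness, observe that $C_{i,w}(x,y)=1$ if and only if $y=w$ and $x[i,i+k-1]=w$, which forces $x[i,i+k-1]=y$. Hence if some $C_{i,w}$ accepts then $\PM(x,y)=1$. Conversely, if $\PM(x,y)=1$, there is some $i\in[n-k+1]$ with $x[i,i+k-1]=y$; setting $w\coloneqq y$, the gate $C_{i,w}$ accepts $(x,y)$. Therefore the circuit computes $\PM$.
\end{proof}
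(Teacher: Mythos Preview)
Your proposal is correct and is essentially the same construction as the paper's: the paper first writes the $2^k$-clause DNF for $\EQ(x[i,i+k-1];y)$ (one clause per value $a\in\{0,1\}^k$), takes the OR over $i\in[n-k+1]$, and then collapses the two OR layers, which yields exactly your clauses $C_{i,w}$ indexed by pairs $(i,w)$. The gate count and correctness argument match.
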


\begin{proof}
First we note that equality of two $k$-bit strings can be implemented using a DNF of width $2k$ and size (number of clauses) $2^k$.
    Indeed, denoting the two inputs by $z = (z_1,\dots,z_k)$ and $w = w_1,\dots,w_k$, let
    \begin{equation*}
        \EQ(z_1,\dots,z_k ; w_1,\dots,w_k) =
        \bigvee_{a = (a_1,\dots,a_k) \in \{0,1\}^k} \big(\wedge_{i=1}^k (z_i=a_i) \wedge_{i=1}^k (w_i=a_i) \big) \enspace,
    \end{equation*}
    where $(w_i=a_i)$ is equal to $w_i$ if $a_i = 1$, and $\neg w_i$ otherwise.

    For each $i=1,\dots,n-k+1$ let $\EQ_i$ be the DNF that outputs 1 if and only if
    $y = (x_i,\dots,x_{i+k-1})$. Taking $\bigvee_{i=1}^{n-k+1} \EQ_i$ we obtain a circuit
    of depth-$3$ that computes the $\PM$ function.
    In order to turn it into a depth-$2$ circuit, note that the second and the third layers
    consist of $\vee$ gates, and hence can be collapsed to one layer.
    This way we get a depth-$2$ circuit of size $(n-k+1) \cdot 2^k +1$.
\end{proof}

Now we show that already in depth $3$, one can compute $\PM$ by a much smaller circuit. This Lemma is likely to have been discovered multiple times, we attribute it to folklore.
\begin{lemma}\label{thm:depth3 construction}
There exists a DeMorgan circuit of depth 3 and size $O(nk)$ computing $\PM$.
\end{lemma}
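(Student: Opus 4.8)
\textbf{Proof plan for \Cref{thm:depth3 construction}.}

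The plan is to build the depth-$3$ circuit from three layers, where the bottom two layers compute, for every starting position $i\in[n-k+1]$, whether $x[i,i+k-1]=y$, and the top $\OR$ gate collects these equality tests. The key realization is that a single equality test between two $k$-bit strings is \emph{not} cheap as a depth-$2$ DNF (it needs $2^k$ clauses, as in \Cref{thm:depth2 construction}), but it \emph{is} cheap as a depth-$2$ \emph{CNF}: $x[i,i+k-1]=y$ holds iff for every offset $j\in[k]$ we have $x_{i+j-1}=y_j$, and each bit-equality $x_{i+j-1}=y_j$ is itself an OR of two conjunctions, namely $(x_{i+j-1}\wedge y_j)\vee(\neg x_{i+j-1}\wedge\neg y_j)$ --- or, written as a CNF over the literals, $(x_{i+j-1}\vee\neg y_j)\wedge(\neg x_{i+j-1}\vee y_j)$. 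Thus the equality $\EQ_i$ is an AND of $2k$ clauses of width $2$, i.e.\ a depth-$2$ AND-of-ORs subcircuit of size $O(k)$.

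First I would write $\PM(x,y)=\bigvee_{i=1}^{n-k+1}\EQ_i(x,y)$ where $\EQ_i$ is the CNF just described. This is literally a depth-$3$ circuit: an $\OR$ on top, a layer of $\AND$ gates (one per starting position $i$), and a bottom layer of $\OR$ gates of fan-in $2$. Counting gates: $1$ top OR, $n-k+1$ middle AND gates, and $2k(n-k+1)$ bottom OR gates, for a total of $O(nk)$ gates, as claimed. (If one prefers to count only nontrivial gates, the fan-in-$2$ bottom ORs are what dominate, and there are $O(nk)$ of them.) Correctness is immediate: the $i$-th AND gate outputs $1$ iff all $k$ bit-equalities hold iff $x[i,i+k-1]=y$, and the top OR fires iff some window equals $y$.

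The only point needing a sentence of care is that the circuit genuinely has depth $3$ and not $4$: the bottom ORs feed the middle ANDs, which feed the top OR, so the two OR-layers are \emph{separated} by the AND-layer and cannot be collapsed, unlike in the proof of \Cref{thm:depth2 construction} where both upper layers were ORs. I expect no real obstacle here --- the main (and minor) subtlety is just the bookkeeping distinction between the DNF size of equality ($2^k$) and its CNF size ($O(k)$), which is exactly what makes the depth-$3$ construction exponentially smaller than the depth-$2$ one.
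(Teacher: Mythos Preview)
Your proposal is correct and matches the paper's proof essentially verbatim: both write each $\EQ_i$ as the width-$2$ CNF $\bigwedge_{j=1}^{k}\big((x_{i+j-1}\vee\neg y_j)\wedge(\neg x_{i+j-1}\vee y_j)\big)$ and take the $\OR$ over all $n-k+1$ starting positions, arriving at the same $O(nk)$ gate count. Your extra remark about why the OR layers do not collapse (since they sandwich an AND layer) is a nice clarification but not needed for the argument.
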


\begin{proof}
    First we note that the equality function of two $k$-bit strings can be implemented using a $\CNF$ of width $2$ and size (number of clauses) $2k$.
    Indeed, we can check equality of two bits $z$ and $w$ using the circuit $(z \vee \neg w) \wedge (\neg z \vee w)$.
    Therefore, we can implement equality of two $k$-bits strings using the CNF formula

    \[
    \EQ(z_1,\dots,z_k ; w_1,\dots,w_k) = \bigwedge_{i=1}^{k} \left( (z_i \vee \neg w_i) \wedge (\neg z_i \vee w_i) \right) \; .
    \]

    From here on we can proceed as in the previous proof, namely, for each $i=1,\dots,n-k+1$ let $\EQ_i$ be the CNF that outputs 1 is and only if
    $y = (x_i,\dots,x_{i+k-1})$. Taking $\bigvee_{i=1}^{n-k+1} \EQ_i$ we obtain a circuit
    of depth 3 that computes the $\PM$ function.
    The output gate has fanin $n-k+1$, the gates in the second layer have fan-in $2k$,
    and the gates in the first layer have fan-in $2$.
    Therefore, the total size of the circuit is $O(nk)$.
\end{proof}

\subsection{Lower bounds for depth 2}\label{sec:lb DeMorgan d2}
We may assume wlog that every optimal circuit of depth $2$ is either a $\CNF$ or a $\DNF$. First, we show that in the class of $\DNF$s, the construction from \Cref{thm:depth2 construction} is optimal (up to a constant factor).

\begin{lemma}\label{lem:DNF}
\item For every $k>1$, the $\DNF$-size of $\PM$ is at least
\[
\DNF(\PM) \geq 2^{k-1}(n-k+1) \; .
\]
\end{lemma}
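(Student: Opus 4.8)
The plan is to exhibit, for the fixed pattern $y$ ranging over all of $\{0,1\}^k$ and appropriately chosen texts $x$, a large collection of inputs in $\PM^{-1}(1)$ each of which is a "minterm-like" witness that must be covered by a \emph{distinct} clause of any $\DNF$ computing $\PM$. Recall that a clause (an AND of literals over the $n+k$ input variables $x_1,\dots,x_n,y_1,\dots,y_k$) that is satisfied by some $1$-input must not be satisfied by any $0$-input. So it suffices to produce a set $W\subseteq\PM^{-1}(1)$ of size $2^{k-1}(n-k+1)$ such that no single clause can be satisfied by two members of $W$ without also being satisfied by some $0$-input; equivalently, for any two distinct $w,w'\in W$, the coordinatewise "common restriction" (the partial assignment fixing exactly the coordinates where $w$ and $w'$ agree) can be extended to a $0$-input of $\PM$. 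First I would take $W$ to consist of all pairs $(x,y)$ where $y\in\{0,1\}^k$ is arbitrary and $x = 0\cdots 0\, y\, 0\cdots 0$ with the copy of $y$ placed starting at position $i$, for $i\in[n-k+1]$; this has exactly $2^k(n-k+1)$ elements, and I will need to throw away roughly half to get a set where the pairwise-extension property holds, yielding the factor $2^{k-1}$.

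The key step is the combinatorial argument that a clause covering two such witnesses $(x,y)$ and $(x',y')$ must also accept a $0$-input. A clause $C$ accepting $(x,y)$ fixes some subset of $x$-coordinates and some subset of $y$-coordinates to prescribed values consistent with $(x,y)$. If $C$ also accepts $(x',y')$, then on every coordinate it constrains, the values of $(x,y)$ and $(x',y')$ agree. I would argue that if $y\ne y'$ then $C$ leaves some $y$-coordinate free, and moreover the set of $x$-positions that are "forced to be the shifted pattern" is compatible with making $x$ all-zeros outside a short window — then one can flip that free $y_j$ (or, if $C$ constrains all of $y$, use the fact that the two texts have the pattern in different positions and are zero elsewhere, so the common restriction of the texts is sparse) to produce a string $x''$ that contains no copy of the (possibly modified) $y$, giving a $0$-input satisfying $C$. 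The subtlety is handling the case $y=y'$ but $i\ne i'$: here $x$ and $x'$ are the all-zero string with $y$ placed at two different locations, so their common restriction fixes few coordinates of the text (only those positions where both copies happen to place the same bit value, plus the all-$0$ regions), and one can extend that to an all-zero text, which is a $0$-input unless $y = 0^k$; excluding the single pattern $y=0^k$ (and its analogue $y=1^k$ appropriately, or more precisely discarding a sub-$2^{k-1}$ family) costs only a constant factor, not the claimed $2^{k-1}$.

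To actually land the clean bound $2^{k-1}(n-k+1)$, I expect the right move is: fix $i$, and for the texts with the pattern at position $i$, restrict $y$ to range over a set $A_i\subseteq\{0,1\}^k$ of size $2^{k-1}$ chosen so that within $A_i$ the patterns are pairwise "non-shiftable" onto each other — e.g.\ all $y$ with $y_1=1$ — and argue separately across different $i$ using the spatial separation of the pattern windows (the union bound over $n-k+1$ positions is where the linear factor comes from). Then show the union $\bigcup_i \{(x^{(i)},y) : y\in A_i\}$ forces $2^{k-1}(n-k+1)$ distinct clauses. The \textbf{main obstacle} will be the case analysis verifying that for \emph{every} pair of distinct witnesses in this set, the common restriction genuinely extends to a $0$-input — in particular ruling out clauses that constrain the entire $k$-bit $y$ and many $x$-coordinates simultaneously; the argument should turn on the observation that such a clause, to accept two witnesses at different positions $i\ne i'$, must leave the relevant pattern windows unconstrained in $x$, but then emptying those windows destroys all occurrences. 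Making this robust while keeping exactly the factor $2^{k-1}$ (rather than $2^{k-1} - O(1)$ or $\Omega(2^k)$) is the delicate accounting I would expect to spend the most effort on.
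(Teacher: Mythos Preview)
Your approach is the paper's: take $W=\{(0^i\,y\,0^{n-k-i},\,y): y_1=1,\ 0\le i\le n-k\}$ and show that no single clause can cover two distinct members of $W$. You also arrive at the correct choice of the half $\{y:y_1=1\}$. The gap is in your case analysis.

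For $y=y'$ and $i\ne i'$ you assert that the common restriction of the two texts extends to the all-zero string. This is false: when the two shifted copies of $y$ overlap they may agree on a $1$ (for instance $y=11$, $i=0$, $i'=1$ forces $x_2=1$), so the common restriction pins some text bits to $1$ and no all-zero extension exists. The actual role of the condition $y_1=1$ --- not ``non-shiftability within a fixed $i$'' as you surmise --- is this: with $i_1<i_2$, position $i_1+1$ carries a $1$ in the first text (it is $y_1$) and a $0$ in the second, so it is \emph{always} a disagreement coordinate and hence unconstrained by the clause. Flipping that bit to $0$ in the first witness yields a text with strictly fewer $1$'s than $y$ has, so $y$ cannot occur anywhere in it; that is your $0$-input. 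The same device handles the case $i_1=i_2$, $y\ne y'$: pick $j\ge 2$ with $y_j\ne y'_j$, swap the roles of the two witnesses if necessary so that $y_j=1$, and flip text bit $i_1+j$ to $0$.

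So the construction and the overall strategy are right, but the mechanism is a one-line $1$-count rather than an all-zero extension, and the restriction $y_1=1$ is precisely what guarantees a free text bit inside the pattern window that you can zero out. Once you see this there is no ``delicate accounting'': each case is immediate.
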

\begin{proof}
Let us consider the set $P$ of $2^{k-1}$ patterns of length $k$ which all start with a $1$:
\[
P=\{1p\colon p\in\{0,1\}^{k-1}\} \; .
\]
For any pattern $p\in P$ and integer $0\leq i\leq n-k$, let $s_{p, i}=0^i p 0^{n-k-i}$ be the string containing $p$ at the $i+1$th position and having zeros everywhere else. Now let the set $S$ be the set of inputs to the $\PM$ problem (that is a set of pairs of a text and pattern) consisting of all $p\in P$ and the corresponding $s_{p,i}$'s:
\[
S=\{(s_{p,i}, p)\colon p\in P, 0\leq i\leq n-k \}
\; .
\]
Consider a $\DNF$ $\phi$ computing $\PM$. In order to show that it has at least $2^{k-1}(n-k+1)$ clauses, we will show that no pair of distinct inputs from $S$ can be accepted by the same clause. Indeed, since every input from $S$ must be accepted by $\phi$ and $|S|=2^{k-1}(n-k+1)$, we get the corresponding lower bound on the number of clauses in $S$.

Assume, for the sake of contradiction that $(s_{p_1,i_1}, p_1)\neq (s_{p_2,i_2}, p_2)$ are accepted by the same clause $C$. Consider the following two cases.
\begin{description}
\item{Case 1: $i_1=i_2, p_1\neq p_2$.} Since $p_1\neq p_2$, there exists an index $2\leq j\leq k$ such that $p_1[j]\neq p_2[j]$. The clause $C$ cannot depend on the $(i_1+j)$th character of the text, because if it depended on it it wouldn't accept one of these strings. Let us consider the string $s$ which differs from $s_{p_1,i_1}$ only in the character number $i_1+j$. Then the input $(s, p_1)$ must still be accepted by $\phi$. This contradicts the definition of $\PM$ because the string $s$ contains exactly one string of length $k$ staring with a 1, and that string differs from $p_1$ in one character.
\item{Case 2: $i_1\neq i_2$.} Wlog assume that $i_1<i_2$. Then the strings $s_{p_1,i_1}$ and $s_{p_2,i_2}$ differ in the character number $j=i_1+1$. (Indeed, by the definition of $p_1$, $s_{p_1,i_1}$ has a 1 in the $j$th position, while $s_{p_2,i_2}$ has a 0 since $i_2>i_1$.) Again, this implies that $C$ does not depend on the $j$th character of the text.
Let us now consider the string $s$ which differs from $s_{p_1,i_1}$ only at the character number $j$. Then the input $(s, p_1)$ is accepted by $\phi$ which leads to a contradiction.
\end{description}
\end{proof}

Now we will prove lower bounds for $\CNF$s computing $\PM$. We will need the following definition.
\begin{definition}\label{def:minterm maxterm}
 A \emph{maxterm} of a Boolean function $f$ is a set of variables of $f$, such that some assignment to those variables makes $f$ output $0$ irrespective of the assignment to the other variables.
The \emph{width} of a maxterm is the number of variables in it.
\end{definition}

First we find the minimal width of maxterms of $\PM$.
\begin{lemma}\label{lem:minterm}
For any $k\leq n$, every maxterm of $\PM$ has width at least
\begin{center}
\begin{tabular}{l l }
\upshape $2\sqrt{n-k+1}$ & for all $k$ ;\\
\upshape $k+\frac{n-k+1}{k}$ & if $k\leq \sqrt{n-k+1}$.\\
\end{tabular}
\end{center}
\end{lemma}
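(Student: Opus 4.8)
The plan is to establish each of the two lower bounds on maxterm width by exhibiting, for a candidate maxterm $M$ (a set of text and pattern positions), a strategy for the adversary that fills in the complementary positions to create \emph{either} an occurrence of the pattern \emph{or} a non-occurrence regardless of how the variables in $M$ are set — in other words, showing that if $|M|$ is too small then $M$ cannot be a maxterm. Concretely, fix an assignment $\rho$ to the variables in $M$; I want to argue that there is always a completion of $\rho$ making $\PM=1$ and another making $\PM=0$, which contradicts $M$ being a maxterm. Since a maxterm must also \emph{contain} at least one pattern variable $y_j$ that is fixed (otherwise the pattern is free and we can always match), the analysis naturally splits according to how $M$ distributes between the $n$ text positions and the $k$ pattern positions.

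\textbf{The $2\sqrt{n-k+1}$ bound.} Let $a = |M \cap \{x_1,\dots,x_n\}|$ be the number of text positions in $M$ and $b = |M \cap \{y_1,\dots,y_k\}|$ the number of pattern positions, so the width is $a+b$. There are $n-k+1$ candidate starting positions $1,\dots,n-k+1$ for an occurrence. If $a < n-k+1$, then some window $[i,i+k-1]$ contains no text position of $M$; the adversary can set that whole window equal to the (partially fixed, partially free) pattern, forcing $\PM = 1$, and alternatively set that window to differ from the pattern in a free coordinate while making \emph{all other} windows also fail — this last part is where I need the pattern to have a fixed bit, so I can kill every window by a local mismatch at a text position not in $M$. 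Hence we may assume $a \ge n-k+1$ is \emph{not} forced to be small in the wrong direction; the real trade-off comes from a counting/pigeonhole argument: roughly, to block occurrence at each of the $n-k+1$ windows the adversary's ``free'' completion must be obstructed, and $M$ must hit enough structure that $b$ pattern positions plus $a$ text positions, arranged over windows of length $k$, force $a \cdot (\text{something}) \gtrsim n-k+1$, optimized by AM--GM to give $a + b \ge 2\sqrt{n-k+1}$. I expect the cleanest route is: show each of the $n-k+1$ windows must contain a ``conflict pair'' — a fixed text bit together with a fixed pattern bit whose values already force a mismatch — and that the $\le a$ fixed text bits each participate in at most $\le b'$ such windows where $b'$ relates to the spread of pattern positions, yielding $a b \gtrsim n-k+1$, hence $a+b \ge 2\sqrt{ab} \ge 2\sqrt{n-k+1}$.

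\textbf{The refined $k + \tfrac{n-k+1}{k}$ bound for $k \le \sqrt{n-k+1}$.} Here I would argue that a maxterm must in fact fix \emph{all} $k$ pattern bits: if some $y_j$ is free, the adversary can, after seeing $\rho$, choose the value of $y_j$ adaptively, and because $k$ is small relative to $n-k+1$ there is enough room among the $\ge a$ fixed text positions distributed over $> a/k$ disjoint length-$k$ blocks to always realize one consistent window — so $b = k$. Given $b = k$, each of the $n-k+1$ windows must be blocked by a fixed text position inside it that disagrees with the now fully-fixed pattern; a single fixed text position $x_m$ lies in at most $k$ windows, so $a \ge (n-k+1)/k$, giving width $a + b \ge k + (n-k+1)/k$. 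The main obstacle I anticipate is the ``free pattern bit'' elimination step and, in the first bound, making the pigeonhole over windows fully rigorous — in particular handling the boundary windows cleanly and ensuring that the adversary's two completions (one forcing $1$, one forcing $0$) can always be realized simultaneously as required by the definition of maxterm, rather than just one of them. I would handle this by always reserving, for the forcing-$1$ direction, a window entirely disjoint from $M$'s text positions (which exists precisely when $a$ is below the threshold), and for the forcing-$0$ direction, a single free text coordinate used to break the unique surviving window.
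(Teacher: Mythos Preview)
Your approach for the first bound is essentially the paper's: for each of the $n-k+1$ shifts there must be an index $j$ with both $y_j$ and $x_{i+j-1}$ fixed (otherwise the free coordinates in that window can be completed to force a match), a single fixed text bit pairs with at most $b$ fixed pattern bits and hence covers at most $b$ shifts, so $ab \ge n-k+1$ and AM--GM finishes. Your exposition meanders before getting there, and the opening remark that you need completions realizing \emph{both} $\PM=1$ and $\PM=0$ is off --- to contradict a maxterm you only need, for every assignment $\rho$ to $M$, a completion with $\PM=1$.

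The second bound has a genuine gap: your claim that a maxterm must fix all $k$ pattern bits is false. Take $M=\{y_1\}\cup\{x_1,\dots,x_{n-k+1}\}$ with the assignment $y_1=1$ and $x_1=\cdots=x_{n-k+1}=0$; every candidate window $i$ has $x_i=0\neq 1=y_1$, so $\PM=0$ under all completions, yet $b=1$. Your ``adaptive $y_j$'' sketch cannot be made to work because the adversary cannot in general override a single text bit that already kills every window. The paper does not try to force $b=k$; instead it reuses the inequality $ab\ge n-k+1$ from the first part to get
\[
a+b \;\ge\; \frac{n-k+1}{b}+b,
\]
and then observes that the function $t\mapsto \frac{n-k+1}{t}+t$ is decreasing on $(0,\sqrt{n-k+1}\,]$. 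Since $b\le k\le \sqrt{n-k+1}$, the minimum over admissible $b$ is attained at $b=k$, giving $a+b\ge k+\frac{n-k+1}{k}$. So the refined bound is a one-line consequence of the same product inequality plus monotonicity, not a separate structural claim about $b$.
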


\begin{proof}

Consider a substitution $\rho$ which fixes $n_1$ variables in the text and $k_1$ variables in the pattern.
In order to force $\PM$ to output $0$, for every shift $1\leq i \leq n-k+1$ there must be an index $1\leq j\leq k$ such that $\rho$ assigns a value to $y_j$ and $x_{i+j}$.
Thus, every of $n_1$ assigned variables in the text ``covers'' at most $k_1$ shifts. Since the total number of shifts is $n-k+1$, we have that $n_1\cdot k_1 \geq n-k+1$.
Therefore, by the the inequality of arithmetic and geometric means, $n_1+k_1\geq 2\sqrt{n_1 \cdot k_1} \geq 2\sqrt{n-k+1}$.

Since $n_1\cdot k_1 \geq n-k+1$, we have that $n_1 + k_1 \geq \frac{n-k+1}{k_1} + k_1$.
The second bound follows by noting that the function $f(k_1) = \frac{n-k+1}{k_1} + k_1$ is monotone decreasing for $k_1 < \sqrt{n-k+1}$.

\end{proof}

Next we prove tight bounds on the number of non-satisfying inputs of $\PM$.
\begin{lemma}\label{lem:preimages}
For $k \leq n$, let $Z$ denote the set of preimages of $0$ of $\PM$.
That is,
\begin{equation*}
Z = \{(x,y)\in\{0,1\}^{n+k}:\PM(x;y)=0 \}.
\end{equation*}

Then
\begin{center}
\begin{tabular}{l l }
\upshape $|Z| = \Theta\left(2^{n+k}\right)$                    & if $k\geq\log{n}+1$;\\
\upshape $|Z| \geq \Omega\left(2^n(1-2^{-k})^n\right)$         & for all $k$.
\end{tabular}
\end{center}

\end{lemma}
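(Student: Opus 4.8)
The statement bounds $|Z|$, the number of pairs $(x,y)\in\{0,1\}^{n+k}$ with $\PM(x,y)=0$. I would split into the two claimed regimes.

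First, the lower bound $|Z|\geq\Omega(2^n(1-2^{-k})^n)$ that holds for all $k$. The natural approach is to fix a single "hard" pattern $y$ and count the texts $x$ that avoid it, then multiply by $1$ (we only need one pattern for a lower bound on the pair count, since $|Z|\geq\sum_y\#\{x:\PM(x,y)=0\}\geq\#\{x:x\text{ avoids }y\}$). Take $y=1^k$; a text $x$ avoids $1^k$ iff it has no run of $k$ consecutive $1$'s. The standard way to lower bound the number of such strings is a Lovász-Local-Lemma argument or, more elementarily, inclusion–exclusion/the first-moment deletion bound: the expected number of bad windows in a uniform $x$ is $(n-k+1)2^{-k}$, and a careful argument (e.g. breaking $x$ into $n/k$ disjoint blocks of length $k$ and forcing at least one $0$ in each block) gives $\#\{x:x\text{ avoids }1^k\}\geq 2^n\prod(1-2^{-k})\geq 2^n(1-2^{-k})^n$. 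Actually the cleanest route is: partition $[n]$ into $\lfloor n/k\rfloor$ blocks of length $k$, and observe that if each block contains a $0$ then there is no all-ones window entirely inside a block; this undercounts but still yields $\geq(2^k-1)^{\lfloor n/k\rfloor}\cdot 2^{n \bmod k}$, which one checks is $\Omega(2^n(1-2^{-k})^n)$. I'd present whichever of these is tightest against the target expression.

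Second, the sharper two-sided bound $|Z|=\Theta(2^{n+k})$ when $k\geq\log n+1$. The upper bound $|Z|\leq 2^{n+k}$ is trivial. For the matching lower bound $|Z|\geq\Omega(2^{n+k})$ I would show that for a constant fraction of all pairs $(x,y)$, $x$ does not contain $y$. Fix $y$ arbitrarily (this time we must handle all patterns, or at least a constant fraction of them, since we want $\Theta(2^{n+k})$, not just $\Theta(2^n)$): for a uniformly random text $x$, the probability that a fixed window $x[i,i+k-1]$ equals $y$ is $2^{-k}$, so by a union bound $\Pr_x[\PM(x,y)=1]\leq (n-k+1)2^{-k}\leq n\cdot 2^{-k}$. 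When $k\geq\log n+1$ this is $\leq 1/2$, so $\Pr_x[\PM(x,y)=0]\geq 1/2$ for \emph{every} $y$. Summing over all $2^k$ patterns gives $|Z|\geq\frac12\cdot 2^{n+k}$, completing the $\Theta$.

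**Main obstacle.** The routine parts (union bound for $k\geq\log n+1$, trivial upper bound) are immediate; the only place requiring care is matching the exact form $\Omega(2^n(1-2^{-k})^n)$ in the all-$k$ regime — in particular making sure the block-decomposition estimate $(2^k-1)^{\lfloor n/k\rfloor}2^{n\bmod k}$ is genuinely $\Omega\big(2^n(1-2^{-k})^n\big)$ rather than losing a factor that is polynomial or worse in the bad parameter range (small $k$). If the block bound is too lossy, the fallback is the Lovász Local Lemma with events $A_i=\{x[i,i+k-1]=1^k\}$ (each of probability $2^{-k}$, each dependent on $\leq 2k$ others), which gives $\Pr[\bigcap\overline{A_i}]\geq\prod(1-x_i)$ for a suitable assignment $x_i\approx 2^{-k}$, yielding exactly a bound of the shape $(1-O(2^{-k}))^n$; I expect that reconciling the LLL constant with the clean target $(1-2^{-k})^n$ is the fiddliest point, and it may be simplest to state the bound as $\Omega(2^n e^{-O(n 2^{-k})})$ and note this matches $\Omega(2^n(1-2^{-k})^n)$ up to the constant in the exponent.
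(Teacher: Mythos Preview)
Your argument for the regime $k\geq\log n+1$ is correct and is essentially the same counting as in the paper: you phrase it as a union bound over windows for each fixed $y$, while the paper counts $|O|\leq n\cdot 2^n$ by noting each $x$ has at most $n$ length-$k$ substrings; these are the same inequality summed in two different orders.

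The second part, however, has a real gap. Your block decomposition is not a valid lower bound: requiring each length-$k$ block to contain a $0$ does \emph{not} force $x$ to avoid $1^k$, because a run of $k$ ones can straddle two consecutive blocks (e.g.\ $k=3$, $x=011\,110$ has a $0$ in each block but contains $111$). So the count $(2^k-1)^{\lfloor n/k\rfloor}2^{n\bmod k}$, while numerically large enough, is counting the wrong set and does not lower bound the number of $1^k$-avoiding strings. Your LLL fallback is logically sound but only delivers a bound of the shape $(1-C\cdot 2^{-k})^n$ with $C>1$; for small $k$ this is exponentially smaller than the target $(1-2^{-k})^n$ (e.g.\ $k=2$ gives $(1/2)^n$ versus $(3/4)^n$), so it does \emph{not} establish $\Omega(2^n(1-2^{-k})^n)$. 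This constant matters downstream: the CNF lower bound relies on $2^{1/k}(1-2^{-k})\geq 2^{1/10k}$, which fails if the $2^{-k}$ is replaced by $2\cdot 2^{-k}$.

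The paper instead fixes the pattern $y_0=0^k$ and observes that the count $F_n$ of length-$n$ binary strings avoiding $0^k$ satisfies the generalized Fibonacci (``$k$-step'') recurrence $F_n=\sum_{i=1}^k F_{n-i}$, then invokes a known asymptotic for such sequences to get $F_n=\Omega(2^n(1-2^{-k})^n)$ directly. If you want a self-contained route, you could either (i) analyze that recurrence yourself---its dominant root is $2(1-2^{-k})+o(2^{-k})$---or (ii) repair the block idea by forcing a $0$ at every position that is a multiple of $k$, which \emph{does} kill every length-$k$ window, but then you must accept the weaker count $2^{n-\lfloor n/k\rfloor}$, which is not enough. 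Option (i) is what actually matches the stated bound.
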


\begin{proof}



Let $O$ denote the set of preimages of $1$ of $\PM$. Since every string of length $n$ contains at most $n-k+1$ different substrings of length $k$, we have that $|O|\leq n\cdot 2^n$.
Now for $k\geq\log{n}+1$, from the equation $|Z|+|O|=2^{n+k}$, we have that $|Z|\geq\Omega(2^{n+k})$.

In order to prove the lower bound $|Z|=\Omega\left(2^n(1-2^{-k})^n\right)$, we consider the pattern string $y_0=0^k$. The number $F_n$ of strings $x$ of length $n$ which do not contain $y_0$ satisfies the generalized Fibonacci recurrence:
\[
F_n=\sum_{i=1}^k F_{n-i}\; .
\]
From the known bounds on the generalized Fibonacci numbers (see, e.g., Lemma 3.6 in~\cite{wolfram1996solving}), we have $F_n\geq \Omega(2^n(1-2^{-k})^n)$, which implies the corresponding lower bound on $|Z|$.

\end{proof}

\begin{lemma}\label{thm:CNF DNF}
For every $k$, the $\CNF$-size of $\PM$ is at least

\begin{center}
\begin{tabular}{l l }
\upshape $\CNF(\PM) \geq \Omega\left(2^{\frac{n}{10k}}\right)$     & if  $1<k\leq\log{n}+1$;\\
\upshape $\CNF(\PM) \geq \Omega\left(2^{k+n/k}\right)$                    & if  $\log{n}+1\leq k\leq\sqrt{n}$;\\
\upshape $\CNF(\PM) \geq \Omega\left(2^{2\sqrt{n-k+1}}\right)$            & if  $k\geq\sqrt{n}$.
\end{tabular}
\end{center}
\end{lemma}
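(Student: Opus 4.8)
The plan is to run the standard clause-counting argument for $\CNF$ lower bounds, supplied by the two preceding lemmas: each clause of a $\CNF$ for $\PM$ can falsify only few inputs (since, by \Cref{lem:minterm}, its width must be large), while \Cref{lem:preimages} guarantees that $\PM$ has many $0$-inputs, all of which must be falsified. First I would note that if $\phi=\bigwedge_j C_j$ is a $\CNF$ computing $\PM$, then the variable set of every clause $C_j$ is a maxterm of $\PM$: assigning those variables to the unique values that make $C_j=0$ forces $\phi=0$ regardless of the remaining variables. Hence, writing $w$ for the minimum maxterm width of $\PM$, \Cref{lem:minterm} says every clause has width $\geq w$, so a single clause is falsified by at most $2^{(n+k)-w}$ of the $2^{n+k}$ inputs. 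Since every $(x,y)\in Z=\PM^{-1}(0)$ falsifies at least one clause,
\[
\CNF(\PM)\ \geq\ \frac{|Z|}{2^{(n+k)-w}}\ =\ |Z|\cdot 2^{\,w-(n+k)} .
\]

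Next I would substitute \Cref{lem:minterm} and \Cref{lem:preimages} in the three regimes. For $k\geq\sqrt n$ (so $k\geq\log n+1$ as well), use $w\geq 2\sqrt{n-k+1}$ and $|Z|=\Theta(2^{n+k})$; the displayed bound gives $\CNF(\PM)\geq\Omega(2^{2\sqrt{n-k+1}})$ at once. For $\log n+1\leq k\leq\sqrt n$, use instead the sharper $w\geq k+\tfrac{n-k+1}{k}$ (valid once $k\leq\sqrt{n-k+1}$; in the thin remaining band $\sqrt{n-k+1}<k\leq\sqrt n$ we have $n\geq k^2$, hence $n-k+1\geq(k-1)^2$ and $2\sqrt{n-k+1}\geq 2(k-1)\geq k+\tfrac{n}{k}-O(1)$, so the weaker maxterm bound already suffices) together with $|Z|=\Theta(2^{n+k})$; this yields $\CNF(\PM)\geq\Omega\big(2^{\,k+(n-k+1)/k}\big)=\Omega\big(2^{\,k+n/k}\big)$, since $2^{-1+1/k}\geq\tfrac12$ for $k\geq 1$.

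The only delicate case is $1<k\leq\log n+1$, where $|Z|=\Theta(2^{n+k})$ is no longer available. Here I would use $|Z|\geq\Omega\big(2^n(1-2^{-k})^n\big)$ from \Cref{lem:preimages} and $w\geq k+\tfrac{n-k+1}{k}$ (legitimate since $k\leq\log n+1\leq\sqrt{n-k+1}$ for large $n$), which turns the displayed bound into $\CNF(\PM)\geq\Omega\big((1-2^{-k})^n\cdot 2^{(n-k+1)/k}\big)$. All that remains is to check that the exponent $n\log_2(1-2^{-k})+\tfrac{n-k+1}{k}$ exceeds $\tfrac{n}{10k}$ up to an additive constant, i.e.\ that $\log_2\tfrac{1}{1-2^{-k}}\leq\tfrac{9}{10k}$ for every $k\geq 2$; this is exactly where the slack constant $10$ in the statement is spent. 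I expect this elementary inequality to be the main (indeed only) obstacle: it is essentially tight at $k=2$, where it reads $3/4\geq 2^{-9/20}$ (true, but only just), so I would verify $k=2$ (and $k=3$) by hand and handle $k\geq 3$ through the crude estimate $\log_2\tfrac{1}{1-2^{-k}}\leq\tfrac{2^{-k}}{(1-2^{-k})\ln 2}\leq 1.7\cdot 2^{-k}\leq\tfrac{9}{10k}$, valid because $k/2^k\leq 3/8$ there. Everything else is routine bookkeeping on top of the two cited lemmas.
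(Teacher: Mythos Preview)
Your proposal is correct and follows essentially the same route as the paper: bound the number of clauses below by $|Z|/2^{n+k-w}$, where $w$ is the minimum maxterm width from \Cref{lem:minterm} and $|Z|$ comes from \Cref{lem:preimages}, then plug in for each regime. Your treatment of the thin band $\sqrt{n-k+1}<k\leq\sqrt n$ and your explicit verification of the inequality $(1-2^{-k})\geq 2^{-9/(10k)}$ at $k=2,3$ are in fact more careful than the paper's proof, which simply asserts the corresponding inequality $2^{1/k}(1-2^{-k})\geq 2^{1/(10k)}$ and silently uses the bound $c\geq k+n/k-1$ across the whole middle range.
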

\begin{proof}

%
We say that a clause of a $\CNF$ \emph{covers an input $w \in \zo^{n+k}$} if this clause evaluates to 0 on $w$.
Note that a clause of width $c$ covers at most $2^{n+k-c}$ elements in $\zo^{n+k}$.
For the parameters $k \leq n$ we claim first that every clause of a $\CNF$ computing $\PM$
must be of width at least $c = c(k,n)$ depending on the range of $k$ (as follows from \Cref{lem:minterm}).
This implies that the number of clauses in any $\CNF$ computing $\PM$ is at least $|Z|/2^{n+k-c}$.
Below, we use \Cref{lem:preimages} and \Cref{lem:minterm} to estimate $c$ and $|Z|$ for different ranges of $k$.

If $k\leq\log{n}+1$, then $|Z|\geq \Omega\left(2^n(1-2^{-k})^n\right)$ by \Cref{lem:preimages}.
By \Cref{lem:minterm}, the width of each maxterm is at least $c\geq k+\frac{n-k+1}{k}$.
Thus, the number of clauses in any $\CNF$ computing $\PM$ must be at least
\[
\Omega\left(|Z|/2^{n+k-c}\right)\geq \Omega\left(|Z|/2^{n-n/k}\right)\geq \Omega\left(2^{n/k}(1-2^{-k})^n\right) \geq \Omega\left(2^{n/10k}\right) ,
\]
where the last bound follows from the inequality $2^{1/k} \cdot (1 - 2^{-k}) \geq 2^{1/10k}$ which holds for all $k \geq 2$.

For $k\geq \log{n} +1$, \Cref{lem:preimages} gives us an $\Omega(2^{n+k})$ lower bound on $|Z|$.
\Cref{lem:minterm} provides a lower bound on the width $c$ of maxterms: for $\log{n}+1\leq k\leq\sqrt{n}$, $c\geq k+n/k-1$, and for $k\geq\sqrt{n}$, $c\geq 2\sqrt{n-k+1}$. The desired bounds on the number of clauses in any $\CNF$ computing $\PM$ now follow immediately.
\end{proof}

\paragraph{Discussion.}
\Cref{lem:DNF} and \Cref{thm:CNF DNF} together give the lower bounds of \Cref{thm:depth2lb}. We observe a curious behavior of $\CNF$s and $\DNF$s for $\PM$. For $k\leq\sqrt{n}$, an optimal depth-$2$ circuit for $\PM$ is a $\DNF$. It can also be shown that for $k\geq n-O(\frac{n}{\log{n}})$, an optimal circuit is a $\CNF$. (Indeed, in order to certify that $\PM(x,y)=0$, it suffices to give mismatches for each of the $(n-k+1)$ shifts of the pattern $y$ in $x$. This amounts to $k^{O(n-k+1)}<n\cdot2^k$ clauses.) We leave the exact $\CNF$ complexity of $\PM$ for the regime $k>\sqrt{n}$ as an open problem. One way to prove a stronger lower bound in this regime would be to give a lower bound on the width of every maxterm. This approach does not lead to stronger lower bounds because there exist maxterms of width $2\sqrt{n}$. To see this, consider an assignment where the first $\sqrt{n}$ characters of the pattern $y$ are fixed to zeros, and all indices divisible by $\sqrt{n}$ in the text $x$ are fixed to ones. While we cannot prove a stronger lower bound on the width of ``most'' maxterms, we know that some maxterms must have width at least $n-k+1$. Indeed, consider the text $x=0^n$ and pattern $y=10^{k-1}$. Every clause which outputs $0$ on this pair, must assign the first $(n-k+1)$ positions of $x$ to $0$. 

We remark that weaker lower bounds of $2^{\Omega(\sqrt{n/k})}$ and $2^{0.08n/k}$ on the size of $\CNF$ computing $\PM$ follow from the reduction from Disjointness in \Cref{obs:disj} and the known lower bound on the depth-$3$ complexity of Iterated Disjointness~\cite{haastad1995top} and Disjointness~\cite{jukna2006graph}.

\subsection{Lower bound for unbounded depth}\label{sec:unbounded}
Now we prove the lower bound of  \Cref{thm:lb DeMorgan gate elimination}. For circuits with fan-in $2$, a linear lower bound follows from the observation that $\PM$ essentially depends on all of its inputs. In the next lemma, we use an extension of the gate elimination technique to show that even in the class of DeMorgan circuits with \emph{unbounded fan-in}, $\PM$ still requires linear size.

\begin{lemma}
For $k>1$, any DeMorgan circuit computing $\PM$ has size at least $n/2$.
\end{lemma}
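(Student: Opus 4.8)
The plan is to use the gate elimination technique, adapted to unbounded fan-in DeMorgan circuits, to show that removing a small number of gates lets us recover a smaller circuit for an instance of $\PM$ on (roughly) two fewer input bits, and then to iterate. First I would fix a near-minimal circuit $C$ computing $\PM$ on inputs $(x,y)$ with $|x|=n$, $|y|=k$, and argue that $\PM$ depends on \emph{every} input variable (for $k>1$ this is a short check: flipping any $x_i$ can change the answer for a suitable choice of the other bits, e.g.\ by surrounding a potential occurrence of the pattern with a forcing context, and similarly each $y_j$ matters). Hence every input feeds at least one gate.

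Next I would pick a ``topologically first'' gate $g$ of $C$, i.e.\ one all of whose inputs are literals. Since $\PM$ depends on all variables, such a $g$ exists unless $C$ is degenerate. The key step is: by choosing a restriction $\rho$ that fixes one or two well-chosen input bits, I can (i) kill $g$ (make it a constant, or make its output irrelevant), (ii) kill at least one more gate along the way, and (iii) leave a function that is still (equivalent to) a $\PM$ instance on $n-2$ text bits. Concretely, fixing a text bit $x_i$ that participates in $g$ to an appropriate value either makes $g$ constant or, combined with fixing one neighboring text bit, makes $g$ constant; a constant input to a gate either trivializes that gate or removes one literal from it — and since an OR-gate fed a $1$ (resp.\ AND fed a $0$) becomes constant, we eliminate a second gate downstream. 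The restriction must be designed so that the residual function is again string matching: fixing two consecutive text positions at the boundary (e.g.\ setting $x_{n-1}x_n$ or $x_1x_2$ to a pattern-breaking value) yields $\mathrm{SM}_{n-2,k}$ up to relabeling, using that a pattern of length $k>1$ cannot straddle such a break when the fixed bits are chosen adversarially. This gives the recursion $\Size(\mathrm{SM}_{n,k}) \ge \Size(\mathrm{SM}_{n-2,k}) + 2$, whence unrolling down to the base case $n=k$ (where even equality needs a positive number of gates) yields $\Size \ge (n-k)/2 \cdot 2 \ge n/2$ for the relevant range, or a clean $\ge n/2$ bound after adjusting constants.

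The main obstacle I anticipate is making the gate-elimination step genuinely remove \emph{two} gates per pair of variables fixed in the \emph{unbounded fan-in} setting, since the textbook gate-elimination arguments are tailored to fan-in $2$: there, killing a gate's output propagates cheaply, but here a gate with large fan-out and large fan-in is only ``simplified,'' not eliminated, when one of its literal inputs is fixed. The fix is to be careful about which gate we attack — choosing a bottom gate $g$ such that some input bit $x_i$ appears \emph{only} as a literal into $g$ (or into a bounded set of bottom gates), so fixing $x_i$ to the absorbing value of $g$'s connective makes $g$ a constant, which then either trivializes $g$'s parent or lets us fix a second variable to finish off a second gate. I would argue such a ``cheaply attackable'' variable always exists among the $n$ text variables because the average fan-in-to-gate incidence is controlled by the circuit size we are lower-bounding, so if it failed everywhere the circuit would already be large. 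A secondary subtlety is ensuring the residual instance is exactly $\PM$ and not merely ``$\PM$-like'': I would handle this by always eliminating from the ends of $x$ and verifying, via the periodicity/occurrence structure, that no occurrence of a length-$k>1$ pattern is destroyed or created spuriously, so the induction hypothesis applies verbatim.
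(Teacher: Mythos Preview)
Your proposal has the right high-level idea (gate elimination adapted to unbounded fan-in), but it is overcomplicated and contains two genuine gaps that you yourself partly anticipate.

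\textbf{Gap 1: two gates per two variables is not established.} In the unbounded fan-in setting, fixing a literal input of a gate eliminates that gate only if you hit the absorbing value ($0$ into an $\AND$, $1$ into an $\OR$). That buys you \emph{one} gate per variable, not two. To get a second elimination you would need the constant produced by the first kill to be absorbing for the parent gate, and there is no reason this holds. Your proposed fix --- find a variable feeding only a single bottom gate, or use an averaging/incidence argument --- does not resolve this: even if $x_i$ feeds a unique gate $g$, killing $g$ need not cascade, and the averaging argument (``if no variable is cheaply attackable then the circuit is already large'') is circular since it presupposes the very lower bound you are proving.

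\textbf{Gap 2: the residual function is not $\mathrm{SM}_{n-2,k}$.} Because the pattern $y$ is itself an input variable, there is no choice of constants for $x_{n-1}x_n$ (or $x_1x_2$) that is ``pattern-breaking'' for all $y$ simultaneously. After fixing $x_{n-1}x_n=ab$, the residual function is $\mathrm{SM}_{n-2,k}(x_1\ldots x_{n-2},y)$ \emph{disjuncted} with the predicate ``$y$ occurs ending at position $n-1$ or $n$,'' and the latter is a nontrivial function of the remaining variables. So your inductive step does not go through as stated. (Even if it did, your recursion unrolls to $n-k$, not $n/2$.)

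The paper's argument avoids both problems with a simpler idea: do not try to preserve $\PM$ structure, and be content with \emph{one} eliminated gate per fixed variable. One shows directly that for any assignment to the odd-indexed text bits $x_1,x_3,\ldots,x_{2t-1}$, the resulting restricted function still depends on $x_{2t+1}$; hence $x_{2t+1}$ (or its negation) feeds some gate, and fixing $x_{2t+1}$ to the absorbing value of that gate removes it. Iterating over the $\lfloor n/2\rfloor$ odd positions yields the bound. The dependence claim has a one-line witness: set all free even positions to $0$ except $x_{2t+2}=1$; then flipping $x_{2t+1}$ from $0$ to $1$ creates the substring $11$, so for a suitable $y$ containing $11$ the output changes. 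No reduction to a smaller $\PM$ instance, and only one gate per step, are needed.
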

\begin{proof}
    Suppose that a circuit $C$ computes $\PM$, and consider an input $(x,y)$ to the circuit.
    We prove that $C$ has at least $n/2$ gates as follows. 
    We show that for \emph{any} fixing of the bits $x_1,x_3,x_5,\ldots,x_{2t-1}$ for $1 \leq t \leq n/2-1$, the restricted function depends on the bit $x_{2t+1}$.
    Since the function depends on $x_{2t+1}$, any circuit computing it must have $x_{2t+1}$ or $\neg x_{2t+1}$ among its inputs. Without loss of generality we assume that $x_{2t+1}$ appears as an input. Now we show that we can fix the input $x_{2t+1}$ so that at least one gate of the circuit is removed.

        Indeed, if $x_{2t+1}$ appears as an input to an $\AND$ gate, we can set $x_{2t+1} = 0$, hence setting the output of the gate to $0$.
    This way we can remove the gate from the circuit by setting the output of the $\AND$ gate to $0$, and propagating it.
    (It is possible that we also affect other gates).
    Similarly, if $x_{2t+1}$ appears as an input to an $\OR$ gate, we can set $x_{2t+1} = 1$,
    hence setting the outputs of the $\OR$ gate to $1$, and remove the gate from the circuit.
    Therefore, we can remove at least $n/2-1$ gates from the circuit, and hence the size
    of the original circuit computing $\PM$ was at least $n/2$.

    Therefore, it is left to prove the following claim
    \begin{claim}
    Let $k \geq 2$.
        For \emph{any} fixing of the bits $x_1,x_3,x_5,\ldots,x_{2t-1}$ for $1 \leq t \leq n/2-1$, the restricted function depends on the bit $x_{2t+1}$.
    \end{claim}
    \begin{proof}
        Let $x^*=(x^*_1,x^*_3,\ldots,x^*_{2t-1})$ be the values of the $t$ fixed bits of $x$. In order to show that the restricted function depends on $x_{2t+1}$, we show that there exist two inputs: $(x,y)$ and $(x',y)$, such that $0=\PM(x,y)\neq\PM(x',y)=1$ and $(x,y)$ and $(x',y)$ are extensions of $x^*$ which differ only in the position $2t+1: x_{2t+1}\neq x'_{2t+1}$.

        We set all non-fixed bits of $x$ to $0$, except for $x_{2t+2}=1$.
        Now we set $x'$ to be equal to $x$ everywhere except for the position $2t+1$, where $x'_{2t+1}=1$.
        Now we see that the string $x$ does not contain two ones in a row, while $x'$ does. Since $k\geq2$, we can set $y$ to be an arbitrary substring of $x'$ of length $k$ which contains $x'_{2t+1}$ and $x'_{2t+2}$.
        By the definition of $y$ we have $\PM(x',y)=1$ and $\PM(x,y)=0$ because $x$ does not contain the substring $11$.
    \end{proof}
\end{proof}

\section{Learning}
\subsection{$\VC$ dimension}
In this section we prove Theorem~\ref{thm:VCmain}.
\VCmain*



We begin by upper bounding the $\VC$ dimension.
In the proof we will use the following folklore construction of a Sperner system.
\begin{definition}
A system $\F$ of subsets of $\{1,\ldots,n\}$ is called a \emph{Sperner system} if no set in $\F$ contains another one:
$$\forall A, B \in \F\colon A\neq B \implies A\not\subseteq B\; .$$
\end{definition}
For any $n$, there exists a Sperner system of size ${n \choose \floor{n/2}}$.  Indeed, one can take $\F$ to be the family of all sets of size exactly $\floor{n/2}$.

\begin{lemma}\label{lem:upper}
Let $\Sigma$ be a finite alphabet of size $|\Sigma|\geq2$, then
$$\VC(\mathcal{H}_{k,  \Sigma}) \leq \min(\ceil{k\log{|\Sigma|}}, \log{n}+0.5\log\log{n}+2) \; .$$
\end{lemma}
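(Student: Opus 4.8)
I would establish the two terms of the minimum separately. The bound $\VC(\mathcal{H}_{k,\Sigma})\leq\ceil{k\log|\Sigma|}$ is immediate from the cardinality bound $|\mathcal{H}_{k,\Sigma}|=\frac{|\Sigma|^{k+1}-1}{|\Sigma|-1}\leq|\Sigma|^{k+1}$ is not quite right; more carefully, $|\mathcal{H}_{k,\Sigma}|<|\Sigma|^{k+1}/(|\Sigma|-1)\le |\Sigma|^k\cdot\frac{|\Sigma|}{|\Sigma|-1}\le 2|\Sigma|^k$, so a shattered set $S$ satisfies $2^{|S|}\leq 2|\Sigma|^k$, giving $|S|\leq k\log|\Sigma|+1$; to get the cleaner $\ceil{k\log|\Sigma|}$ one argues that $h_\sigma$ for $\sigma$ the empty string is the constant-$1$ function, so one dichotomy (all-ones) is "wasted" in the sense that... actually the simplest route is: the number of \emph{distinct} labelings $\mathcal{H}_{k,\Sigma}$ can produce on any $S$ is at most $|\mathcal{H}_{k,\Sigma}|\le \frac{|\Sigma|^{k+1}-1}{|\Sigma|-1}$, and one checks $\frac{|\Sigma|^{k+1}-1}{|\Sigma|-1}\le |\Sigma|^k$ fails, so I will instead just bound $|\mathcal H_{k,\Sigma}|\le |\Sigma|^{k}\cdot\frac{|\Sigma|}{|\Sigma|-1}\le 2^{\ceil{k\log|\Sigma|}}$ when $|\Sigma|\ge 2$ — verifying the arithmetic $\frac{|\Sigma|}{|\Sigma|-1}\le 2^{\ceil{k\log|\Sigma|}-k\log|\Sigma|}\cdot 2^{0}$ is a short calculation I would just carry out. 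This part is routine.

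The real content is the bound $\VC(\mathcal{H}_{k,\Sigma})\leq\log n+0.5\log\log n+2$, which I would prove by the double-counting / Sperner argument previewed in the introduction. Suppose $S=\{s_1,\dots,s_d\}\subseteq\Sigma^n$ is shattered. For each subset $A\subseteq[d]$ there is a pattern $P_A$ of length $\le k$ occurring in $s_i$ iff $i\in A$; in particular the $2^d$ patterns $\{P_A\}_{A\subseteq[d]}$ are pairwise distinct. The plan is to extract from these a large "antichain-like" subfamily of patterns that all occur in a single string $s_i$ and are mutually non-nested as substring occurrences, then bound its size by counting positions. Concretely: fix $i\in[d]$; the patterns $P_A$ with $i\in A$ all occur in $s_i$, and there are $2^{d-1}$ of them. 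Among a Sperner family of subsets of $[d]\setminus\{i\}$ of size $\binom{d-1}{\lfloor(d-1)/2\rfloor}$, consider the corresponding patterns $P_{\{i\}\cup B}$; I claim that for $B\neq B'$ in this Sperner family, neither of $P_{\{i\}\cup B}, P_{\{i\}\cup B'}$ is a substring of the other — because if $P_{\{i\}\cup B}$ were a substring of $P_{\{i\}\cup B'}$, then every string containing $P_{\{i\}\cup B'}$ also contains $P_{\{i\}\cup B}$, forcing $\{i\}\cup B'\subseteq\{i\}\cup B$, i.e. $B'\subseteq B$, contradicting the Sperner property. So we get a family of $\binom{d-1}{\lfloor(d-1)/2\rfloor}$ distinct patterns, pairwise non-nested, each occurring in $s_i\in\Sigma^n$.

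The final step bounds the size of a pairwise-non-nested family of substrings of a fixed length-$n$ string. Each such pattern can be associated with (say) the starting position of its \emph{leftmost} occurrence in $s_i$; if two distinct patterns in the family shared the same leftmost starting position, one would be a prefix — hence a substring — of the other, contradicting non-nestedness. So the family injects into $[n]$, giving $\binom{d-1}{\lfloor(d-1)/2\rfloor}\le n$. Using the standard estimate $\binom{d-1}{\lfloor(d-1)/2\rfloor}\ge \frac{2^{d-1}}{\sqrt{2(d-1)}}$ (or $2^{d-1}/\sqrt{d}$ up to constants), this yields $2^{d-1}\le n\cdot\sqrt{2(d-1)}$, hence $d-1\le \log n + 0.5\log(2(d-1))$; bootstrapping $d\le O(\log n)$ once into the right-hand side gives $d\le \log n + 0.5\log\log n + O(1)$, and I would just verify that the constant can be taken to be $2$ as claimed. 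The main obstacle — and the place I'd be most careful — is the non-nesting claim and the injection into positions: one must be sure "occurs as a substring" is used consistently (the pattern $P_A$ occurs in $s_j$ exactly for $j\in A$, so substring-containment among patterns genuinely forces set-containment among indices), and one must handle the leftmost-occurrence tie-breaking correctly so that equal start positions really do force a prefix relation. Everything else is the routine binomial estimate.
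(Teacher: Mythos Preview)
Your proposal is correct and follows essentially the same approach as the paper: the cardinality bound $|\mathcal{H}_{k,\Sigma}|<2|\Sigma|^k$ for the first term, and for the second term the Sperner-family argument to produce $\binom{d-1}{\lfloor(d-1)/2\rfloor}$ pairwise non-nested patterns all occurring in a single shattered string, followed by an injection into $[n]$ via starting positions and the standard central-binomial estimate. The only cosmetic difference is that the paper fixes the distinguished index to be $d$ and phrases the injection as ``sort occurrences by starting position; ending positions are then strictly increasing,'' whereas you use the leftmost-occurrence map directly---these are the same argument.
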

\begin{proof}
Since $|\mathcal{H}_{k, \Sigma}|=\frac{|\Sigma|^{k+1}-1}{|\Sigma|-1}<2|\Sigma|^k$, $\H_{k, \Sigma}$ cannot shatter a set of
strings $S$ of size $|S|\geq k\log|\Sigma|+1$. Hence, $\VC(\mathcal{H}_{k, \Sigma})\leq \ceil{k\log|\Sigma|}$. We now give a different upper bound
on $\VC(\mathcal{H}_{k, \Sigma})$.

Suppose one can shatter some $d$ strings $X=\{x_1, \ldots, x_d\}$, where $x_i \in \Sigma^n$. That is, for any dichotomy of the strings from $X$, there is a pattern from $\Sigma^{\leq k}$ which realizes it. We will show an upper bound on $d=\VC(\mathcal{H}_{k, \Sigma})$.

Consider a Sperner system of size $D={d-1 \choose \floor{(d-1)/2}}$ of the set $\{1,\ldots,d-1\}$. Now add the element $d$ to each of these sets. This way we have $D$ sets containing the element $d$, such that none of them is a subset of another. Let us denote this family of $D$ sets by ${\cal S}=\{{\cal S}_1,\ldots,{\cal S}_D\}$. Consider the following set ${\cal D}=\{{\cal D}_1,\ldots,{\cal D}_D\}$ of $D$ dichotomies of $X$: ${\cal D}_i$ labels $x_j$ with one if and only if $j \in {\cal S}_i$.

From the assumption that $X$ can be shattered, we have that there exist $D$ patterns $p_1,\ldots,p_D$ which realize all $D$ dichotomies from $\cal D$. Since each of these dichotomies labels $x_d$ with one, the string $x_d$ must contain all patterns $p_i$. If one of the patterns $p_i$ was a substring of another pattern $p_j$, then we would have that ${\cal S}_i \subseteq {\cal S}_j$, which contradicts the definition of Sperner systems.

Thus, there must be $D$ patterns which are contained in the string $x_d$, and none of these patterns is a substring of another one. Let us sort the occurrences of these $D$ patterns in $x_d$ by their starting position. Since one pattern cannot be a substring of another one, their ending positions must form an increasing sequence. Therefore, the length $n$ of $x_d$ is at least $D$. This gives us that
$$\frac{2^{d-1}}{\sqrt{2(d-1)}} \leq {d-1 \choose \floor{(d-1)/2}} \leq D \leq n \; ,$$ or, $\VC(\mathcal{H}_{k,  \Sigma})\leq \log{n}+0.5\log\log{n}+2$.
\end{proof}

To lower bound the VC dimension of $\mathcal{H}_{k, \Sigma}$ we need the following lemma.
\begin{lemma}\label{lem:construction}
Let $m$ be an integer $m\geq1$, and $\Sigma$ be an alphabet of size $|\Sigma|\geq2$. There exists a set $T_m$ of at least $|\Sigma|^{m-1}$ strings from $\Sigma^{m+\ceil{\log m}+2}$ with the following property. For any two distinct strings $\tau_1,\tau_2 \in T_m$, their concatenation $\tau=\tau_1\circ\tau_2$ doesn't contain any string from $T_m\setminus\{\tau_1,\tau_2\}$ as a substring.
\end{lemma}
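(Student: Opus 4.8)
The plan is to construct $T_m$ as a kind of self-synchronizing code: every codeword will begin with a short fixed \emph{marker} string $M$, chosen so that $M$ can neither reappear in the interior of a codeword nor straddle the junction in a concatenation of two codewords. This forces any occurrence of a codeword inside $\tau_1\circ\tau_2$ to be aligned at position $1$ or at position $|\tau_1|+1$, hence to equal $\tau_1$ or $\tau_2$.

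Concretely, fix two distinct symbols of $\Sigma$, call them $0$ and $1$; set $\mu\coloneqq\ceil{\log m}$, $M\coloneqq 1\,0^{\mu}\,1$ (so $|M|=\mu+2$), and $L\coloneqq m+\ceil{\log m}+2$; and define
\[
T_m\ \coloneqq\ \bigl\{\,M\circ w'\ :\ w'\in\Sigma^{m},\ M\text{ is not a substring of }w',\ 0^{\mu}1\text{ is not a prefix of }w',\ 1\,0^{\mu}\text{ is not a suffix of }w'\,\bigr\}.
\]
Each $\tau\in T_m$ then has length $\mu+2+m=L$, as required; the degenerate case $m=1$ (where $\mu=0$ and $M=11$) is covered by the same definition and is checked by hand. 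First I would record two structural facts. (P1): inside any $\tau=M\circ w'\in T_m$, the marker $M$ occurs only at position~$1$ --- this is a short case analysis on the start $p\le L-\mu-1$ of a hypothetical occurrence: for $2\le p\le\mu+1$ one has $\tau_p=0\ne 1$; $p=\mu+2$ would force $w'$ to begin with $0^{\mu}1$; and $p\ge\mu+3$ lies entirely in the $w'$-block, forcing $w'$ to contain $M$ --- all three forbidden. (P2): no $\tau\in T_m$ has $1\,0^{\mu}$ as a suffix; this is immediate once one checks $m\ge\mu+1$ (true for every $m\ge1$), so that the length-$(\mu+1)$ suffix of $\tau$ equals that of $w'$.

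Next comes the key combinatorial claim. Take distinct $\tau_1,\tau_2\in T_m$ and suppose some $\tau\in T_m$ occurs in $\tau_1\circ\tau_2$ at position $p\in\{1,\dots,L+1\}$. Since $\tau$ begins with $M$, the marker $M$ occurs in $\tau_1\circ\tau_2$ at position~$p$. By (P1), the only occurrences of $M$ lying wholly inside $\tau_1$ or wholly inside $\tau_2$ are at positions $1$ and $L+1$. A straddling occurrence of $M$ would have its length-$\ell$ prefix, which is $1\,0^{\ell-1}$, equal to a suffix of $\tau_1$, and its complementary length-$j$ suffix (with $j=\mu+2-\ell$), which is $1$ if $j=1$ and $0^{j-1}1$ otherwise, equal to a prefix of $\tau_2$; as $\tau_2$ begins with $M=1\,0^{\mu}1$ and hence with $1$, this forces $j=1$, i.e.\ $\ell=\mu+1$ and $\tau_1$ ending in $1\,0^{\mu}$, which (P2) excludes. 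Hence $p\in\{1,L+1\}$, so $\tau$ equals the length-$L$ prefix ($=\tau_1$) or the length-$L$ suffix ($=\tau_2$) of $\tau_1\circ\tau_2$; in particular $\tau\in\{\tau_1,\tau_2\}$, which is the assertion of the lemma.

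It remains to lower-bound $|T_m|$. For a uniformly random $w'\in\Sigma^{m}$ a union bound over starting positions gives $\Pr[\,M\text{ is a substring of }w'\,]\le m\cdot|\Sigma|^{-(\mu+2)}$, which is at most $\tfrac14$ since $|\Sigma|^{\mu}\ge 2^{\mu}\ge m$; and each of the prefix and suffix conditions fails with probability at most $|\Sigma|^{-(\mu+1)}\le\tfrac1{2m}$. Hence $|T_m|\ge|\Sigma|^{m}\bigl(1-\tfrac14-\tfrac1m\bigr)\ge|\Sigma|^{m-1}$ once $m$ is large enough, and the remaining finitely many pairs $(m,|\Sigma|)$ are checked directly. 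I expect this last step to be the delicate one: the lemma demands \emph{exactly} $|\Sigma|^{m-1}$ codewords with only a $\ceil{\log m}+3$-symbol overhead, so no constant-factor loss is affordable, and this is precisely why the data block is given length $m$ rather than $m-1$ (and the marker only length $\ceil{\log m}+2$) --- then a surviving fraction of $\tfrac34-o(1)$ still exceeds $1/|\Sigma|$. The other places needing care are the small-$m$ and $\mu=0$ cases, and verifying (P1) at the boundary start $p=\mu+2$, which is exactly where the ``$0^{\mu}1$ is not a prefix of $w'$'' condition is used.
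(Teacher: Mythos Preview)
Your argument is correct, and it rests on the same idea as the paper's---build $T_m$ as a self-synchronizing code by attaching a short marker that cannot reappear inside a codeword or across the junction of two codewords---but your execution is more elaborate than necessary. The paper places the marker at the \emph{end}: it takes $T_m=\{\,s\circ 0^{\mu+1}1:\ s\in\Sigma^m,\ 0^{\mu+1}\text{ is not a substring of }s\,\}$ with $\mu=\lceil\log m\rceil$. Only a single constraint on the data block is needed, because the terminal $1$ of the marker sits immediately before the next data block, and that $1$ alone already kills every straddling occurrence of $0^{\mu+1}1$; the ``no $0^{\mu+1}$ in $s$'' condition handles the rest. The size bound then drops out in one line, $|T_m|\ge|\Sigma|^m-m\,|\Sigma|^{m-\mu-1}\ge|\Sigma|^m-|\Sigma|^{m-1}\ge|\Sigma|^{m-1}$, valid for every $m\ge1$ and every $|\Sigma|\ge2$, with no small-case analysis at all. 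By contrast, your prefix marker $10^\mu1$ begins and ends with the same symbol, so a length-one overlap at the junction is not excluded by the marker alone; this is exactly what forces your extra suffix condition on $w'$ (and the prefix condition to get (P1) at $p=\mu+2$), and in turn the case-checking in the count. One small wording slip: ``the remaining finitely many pairs $(m,|\Sigma|)$'' is not literally right, since your union bound leaves $m\in\{1,2,3\}$ open for \emph{all} $|\Sigma|$; but for each such $m$ the exact count of $T_m$ is a polynomial in $|\Sigma|$ and is easily seen to dominate $|\Sigma|^{m-1}$, so the gap is cosmetic.
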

\begin{proof}
Since $|\Sigma|\geq2$, we can fix two distinct characters $0,1\in\Sigma$.
Let $S_m$ be the set of all strings from $\Sigma^{m}$ which don't contain $0^{\ceil{\log{m}}+1}$ as a substring. Note that each string containing $0^{\ceil{\log{m}}+1}$ as a substring is uniquely defined by the starting position of $0^{\ceil{\log{m}}+1}$ and $m-\ceil{\log{m}}-1$ remaining characters. Therefore, the number of strings containing $0^{\ceil{\log{m}}+1}$ doesn't exceed $m|\Sigma|^{m-\ceil{\log{m}}-1}$, and $|S_m|\geq |\Sigma|^m-m|\Sigma|^{m-\ceil{\log{m}}-1}\geq |\Sigma|^m-|\Sigma|^{m-1} \geq |\Sigma|^{m-1}$.

For each string $s \in S_m$, we include in $T_m$ the string $s$ appended with the string $0^{\ceil{\log m}+1}1$ ($\ceil{\log m}+1$ zeros followed by a one) in the end. Note that the number of strings in $T_m$ is at least $|\Sigma|^{m-1}$, and each string in this set is of length $m+\ceil{\log m}+2$. Now we'll prove that for any $\tau_1,\tau_2 \in T_m$, $\tau=\tau_1\circ\tau_2$ doesn't contain any string from $T_m\setminus\{\tau_1,\tau_2\}$.

Assume, towards contradiction, that $\tau$ contains a string $\tau_3 \in T_m\setminus\{\tau_1,\tau_2\}$. Recall that $\tau_1=s_1\circ0^{\ceil{\log m}+1}1, \tau_2=s_2\circ0^{\ceil{\log m}+1}1, \tau_3=s_3\circ0^{\ceil{\log m}+1}1$, where $s_1, s_2,$ and $s_3$ are distinct strings from $S_m$. Thus,
$$\tau=s_1\circ0^{\ceil{\log m}+1}1\circ s_2\circ0^{\ceil{\log m}+1}1 \; .$$
Since $\tau_3$ ends with $0^{\ceil{\log m}+1}1$ and neither $s_1\circ0^{\ceil{\log m}+1}$ nor $s_2\circ0^{\ceil{\log m}+1}$ contains this substring, $\tau_3$ must be equal to $\tau_1$ or $\tau_2$.
\end{proof}

\begin{lemma}\label{lem:VCLB}
Let $\Sigma$ be a finite alphabet of size $|\Sigma|\geq2$, then
$$\VC(\mathcal{H}_{k,  \Sigma}) \geq \min((k-\log{k}-5)\log{|\Sigma|}, \log{n}-\log\log{n}) \; .$$
\end{lemma}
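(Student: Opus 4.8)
The plan is to realize the construction sketched in the introduction. We may assume the right-hand side is at least $1$ (smaller cases are trivial since $\VC(\mathcal H_{k,\Sigma})\geq 1$ for $n\geq 1$), and we set $d:=\lfloor\min((k-\log k-5)\log|\Sigma|,\ \log n-\log\log n)\rfloor$; note in particular that $d\leq(k-\log k-5)\log|\Sigma|$ and $d\leq\log n-\log\log n$. First I would invoke \Cref{lem:construction} with $m:=\lceil d/\log|\Sigma|\rceil+1$, obtaining a family $T_m$ of at least $|\Sigma|^{m-1}\geq 2^d$ strings, all of the common length $L:=m+\lceil\log m\rceil+2$, such that the concatenation of any two distinct members contains no third member as a substring; moreover, by its explicit construction every member of $T_m$ ends with the symbol $1$. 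I would then fix an injection $A\mapsto p_A$ from the subsets of $[d]$ into $T_m$, and for each $i\in[d]$ let $x_i$ be the concatenation, in lexicographic order of the index sets, of all blocks $p_A$ with $i\in A$; this word consists of $2^{d-1}$ blocks and has length $2^{d-1}L$, after which I would right-pad each $x_i$ with copies of the symbol $0$ to reach length exactly $n$ (that $2^{d-1}L\leq n$ is checked below). The $x_i$ are pairwise distinct, since distinct $i$ induce distinct families $\{A:i\in A\}$ and hence distinct block sequences.

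The core claim is that for every $A\subseteq[d]$ the pattern $p_A$ realizes the dichotomy that labels $x_i$ with $1$ exactly when $i\in A$, which shows $x_1,\dots,x_d$ are shattered and hence $\VC(\mathcal H_{k,\Sigma})\geq d$, matching \Cref{lem:upper} up to lower-order terms. If $i\in A$ then $p_A$ literally appears as a block of $x_i$, so it occurs. If $i\notin A$ I would argue $p_A$ does not occur in $x_i$: every block of $x_i$ and $p_A$ itself all have length $L$, so any occurrence of $p_A$ within the block part of $x_i$ lies inside at most two consecutive blocks $p_B\circ p_{B'}$ (an occurrence of a length-$L$ word inside a concatenation of length-$L$ words straddles at most two of them); a single block $p_B$ has $i\in B$, hence $B\neq A$, so $p_B$ neither equals $p_A$ nor, being of equal length, properly contains it, while \Cref{lem:construction} forbids an occurrence of $p_A$ inside $p_B\circ p_{B'}$ since $p_A\notin\{p_B,p_{B'}\}$; finally an occurrence of $p_A$ whose last character falls inside the all-$0$ padding is impossible because $p_A$ ends with $1$, and there is no occurrence lying entirely inside the padding for the same reason. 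I expect this step --- confining the occurrence to two consecutive blocks and then invoking the pairwise non-overlap property of $T_m$ --- to be the conceptual crux, but since \Cref{lem:construction} is tailor-made for it, it should go through cleanly.

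It remains to discharge the three numeric constraints. (i) $|\Sigma|^{m-1}\geq 2^d$ holds by the choice of $m$. (ii) Every pattern has length $L\leq k$: from $d\leq(k-\log k-5)\log|\Sigma|$ we get $m\leq k-\log k-3$, and then $\lceil\log m\rceil\leq\log k+1$ gives $L=m+\lceil\log m\rceil+2\leq k$. (iii) Every $x_i$ has length $2^{d-1}L\leq n$: since $\log|\Sigma|\geq 1$ we have $m\leq d+2\leq\log n+2$, hence $L\leq\log n+\log\log n+6$, while $d\leq\log n-\log\log n$ gives $2^{d-1}\leq n/(2\log n)$, so $2^{d-1}L\leq n$ for $n$ large. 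The only genuine nuisance is that the minimum and the expressions for $m$ and $L$ involve floors and ceilings; I would absorb the resulting rounding loss (at most $1$) into the additive constant using $\log|\Sigma|\geq 1$, so that the stated bound with the constant $5$ emerges after all the rounding. Combined with \Cref{lem:upper}, this proves \Cref{thm:VCmain}.
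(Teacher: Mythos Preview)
Your proposal is correct and follows essentially the same approach as the paper: invoke \Cref{lem:construction} to obtain $2^d$ equal-length patterns, form each $x_i$ as the concatenation of those patterns indexed by sets containing $i$, pad to length $n$, and verify shattering and the two length constraints. The differences are cosmetic --- you fix $d$ first and derive $m$ while the paper fixes $m$ first and derives $d$, and you pad with $0$'s (using that every $p_A$ ends in $1$) whereas the paper pads with $1$'s; your treatment of the boundary case at the padding is in fact more explicit than the paper's.
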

\begin{proof}
We will show that there exist $d$ strings of length $n$: $x_0,\ldots, x_{d-1} \in \Sigma^n$, and $2^d$ patterns $p_0, \ldots, p_{2^d-1}\in\Sigma^{\leq k}$ of length at most $k$, such that each dichotomy of $\{x_0,\ldots, x_{d-1}\}$ is realized by some pattern $p_i$. This will prove that the $\VC$ dimension of $\H_{k, \Sigma}$ is at least $d$.

Let
$$m=\left\lfloor\min\left(k-\log{k}-3, \frac{\log{n}}{\log{|\Sigma|}}-\frac{\log\log{n}}{\log{|\Sigma|}}+1\right)\right\rfloor\; ,$$ and let $d=\floor{(m-1)\log{|\Sigma|}}$. If $m<1$, then the Lemma statement follows trivially, hence, assume that $m \geq 1$. We will show that $\H_{k, \Sigma}$ shatters $d$ strings, and this will finish the proof.

Let $T_m$ be the set of strings from Lemma~\ref{lem:construction} for the chosen value of $m$. Since the size $|T_m|\geq |\Sigma|^{m-1} \geq 2^d$, we can choose $2^d$ patterns from $T_m$. Let us call these patterns $p_0 \ldots p_{2^d-1}$. For $0\leq i \leq d-1$, we define $x_i$ to be the concatenation (in arbitrary order) of all strings $p_j$ such that the $i$th bit of the binary expansion of $j$ is $1$. If the length of $x_i$ is less than $n$, we pad it with ones in the end.

\begin{enumerate}
\item The length of each pattern $p_i$ is
$$m+\ceil{\log{m}}+2 \leq m+\log{m}+3 \leq k-\log{k}-3+\log{k}+3=k \; .$$
\item Each string $x_i$ can be padded to a string of length $n$, because it is a concatenation of $2^{d-1}$ patterns of total length
$$2^{d-1}(m+\ceil{\log{m}}+2) \leq 2^{(m-1)\log{|\Sigma|}-1}(m + \log{m}+3) \leq 2^{\log{n}-\log\log{n}-1}(\log{n}+4)\leq n  \; .$$
\item Consider now a subset $I \subseteq [d-1]$ of the strings $x_0,\ldots, x_{d-1}$ to be shattered. Let $0\leq j \leq 2^{d}-1$ be the number whose binary expansion is the indicator vector of $I$. We claim that the pattern $p_j$ realizes the set $I$. First, by the definition of the strings $x_i$, the pattern $p_j$ was among the patterns concatenated in $x_i$ if and only if $i\in I$. Second, by Lemma~\ref{lem:construction}, no $x_i$ with $i\not\in I$ contains $p_j$ as a substring.
\end{enumerate}
\end{proof}
This concludes the proof of Theorem~\ref{thm:VCmain}.

\subsection{Learning $\mathcal{H}_{k,  \Sigma}$}\label{subsec:learning}
In this section we discuss an efficient algorithm for learning the hypothesis class $\mathcal{H}_{k,  \Sigma}$. For completeness we state the definition of PAC learning:

Let $\mathcal{D}$ be a distribution over $\Sigma^n$. Suppose we are trying to learn $h_{\sigma}$
for $\sigma \in \Sigma^{\leq k}$. Given $\tau \in \Sigma^{\leq k}$, the loss of $h_{\tau}$ with respect to $h_{\sigma}$ is defined as
$$L_{\mathcal{D},\sigma}(\tau)=\Pr_{x \sim D}[h_{\tau}(x)\neq h_{\sigma}(x)]\; .$$

Following the notion of PAC-learning \cite{valiant1984theory,shalev2014understanding}, we can now define what we mean by learning $\mathcal{H}_{k,\Sigma}.$

\begin{definition}\label{def:PAC}
An algorithm $\cal A$ is said to PAC-learn $\mathcal{H}_{k,\Sigma}$ if for every distribution $\mathcal{D}$ over $\Sigma^n$ and
every $h_{\sigma} \in \mathcal{H}_{k,\Sigma}$ for all $\epsilon, \delta \in (0,1/2)$ the following holds.
Given $m:=m(\epsilon,\delta,n,k)$ i.i.d. samples $(x_1,h_\sigma(x_1)), \ldots, (x_m,h_\sigma(x_m))$ where each $x_i$
is sampled according to the distribution $\mathcal{D}$, $\cal A$ returns with probability at least $1-\delta$ a function $h_{\tau} \in \mathcal{H}_{k,\Sigma}$
such that $L_{\mathcal{D},\sigma}(\tau) \leq \epsilon$. Here the probability is taken with respect to the $m$ i.i.d. samples as well as the possible random choices
made by the algorithm $\cal A$.
\end{definition}

Throughout, we refer to $\delta$ as the confidence parameter and $\epsilon$ as the accuracy parameter.

In Definition~\ref{def:PAC} we consider the \emph{realizable} case.
Namely there exists $h_{\sigma} \in \mathcal{H}_{k,\Sigma}$ that we want to learn. One can also consider the \emph{agnostic} case. Consider a distribution $\mathcal{D}$ over $\Sigma^n \times \{0,1\}$. We now define the loss of $h_{\tau}$ as
$$L_{\mathcal{D}}(\tau)=\Pr_{x \sim D}[h_{\tau}(x)\neq y]\;,$$ namely the measure under $\mathcal{D}$ of all pairs $(x,y)\in \Sigma^n \times \{0,1\}$ with $h_{\tau}(x)\neq y$ \cite{shalev2014understanding}. In the agnostic case we wish to find, given $m$ i.i.d. samples $(x_1,h(x_1)), \ldots, (x_m,h(x_m))$, a pattern $\sigma' \in \Sigma^{\leq k}$ such that
$L_{\mathcal{D}}(\sigma') \leq \min_{\tau} L_{\mathcal{D}}(\tau)+\epsilon$ (where the minimum is taken over all $\tau \in \Sigma^{\leq k}$). Thus agnostically PAC-learning generalizes the realizable case where $\min_{\tau} L_{\mathcal{D}}(\tau)=0$.

Recall that a function $h_{\sigma} \in \mathcal{H}_{k,  \Sigma}$ (parameterized by the pattern $\sigma$ of length at most $k$)
can be learned with error $\epsilon$ and confidence $\delta$ by considering $m=O(\VC(\mathcal{H}_{k,  \Sigma}))$ samples $(x_1,h_\sigma(x_1)), \ldots, (x_m,h_\sigma(x_m))$ (where the constant in the $O$ term depends on $\epsilon,\delta$) and following the ERM (expected risk minimization) rule:
Finding $\sigma'$ that minimizes the loss
$$L(h_{\sigma'}):=\frac{|\{i \in [m]:h_{\sigma'}(x_i)\neq h_{\sigma}(x_i)\}|}{m} \; .$$ In words,
to PAC learn $h_{\sigma}$ we simply look for a string $\sigma'$ of length at most $k$ such that the fraction of sample points
that are misclassified by $h_{\sigma'}$ is minimized (the ERM rule applies both for the agnostic and realizable settings).

By Lemma~\ref{lem:upper}, the number of samples needed to PAC-learn $h_{\sigma}$ is at most
$O(\log n)$ (ignoring the dependency on $\epsilon,\delta$). Clearly we can implement the ERM by considering all possible substrings of length at most $k$ that occur in the
$m=O(\log n)$ strings
$x_1 \ldots x_m$
and finding the substring $\sigma'$ minimizing $L(h_{\sigma'})$. The number of such substrings is at most
$O(\log n\sum_{i=1}^k(n-k+1)) \leq O(kn \log n)$. Since for every substring we can check whether it occurs in a string of length $n$
in time $O(n)$, we can implement the ERM rule by going over every substring $\eta$ of length at most $k$ and checking for every string $x_i $ (with $i \in [m]$)
whether $\eta$ occurs in $x_i$. By keeping track of the pattern which has minimal classification error with respect to the sample $(x_1,h_\sigma(x_1)), \ldots, (x_m,h_\sigma(x_m))$
we can thus implement the ERM rule in time $O(kn^2 \log^2 n)$.

We can do better if the number of substrings of length at most $k$ which is upper bounded by $2|\Sigma|^{k}$ is smaller than $(n-k+1)\log{n}$.
Suppose for example, that $k \leq \frac{\log n}{\log|\Sigma|}$. By Lemma~\ref{lem:upper}, the VC-dimension of $\mathcal{H}_{k,  \Sigma}$
is then upper bounded by $k \log |\Sigma|$. Hence in this case we can assume the number of strings $m$ in our sample is at most
$k \log |\Sigma|$, and we can implement the ERM rule in time $O(|\Sigma|^{k} kn\log |\Sigma|)$. When $k,|\Sigma|$ are constants
independent of $n$ we can thus learn $h_{\sigma}$ in time $O(n)$.

We summarize this discussion with the following corollary:
\begin{corollary}\label{cor:runtime}
The hypothesis class $\mathcal{H}_{k,  \Sigma}$ is PAC-learnable in time $O(kn^2 \log^2 n)$, where the
$O$ symbol contains constants depending on $\epsilon,\delta$ but not on $n,k$.
If $k,|\Sigma|$ are constants independent of $n$, then $\mathcal{H}_{k,  \Sigma}$ can be learned in time $O(n)$.
\end{corollary}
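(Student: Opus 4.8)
The plan is to prove both parts of the corollary via empirical risk minimization (ERM), combining the generic PAC sample-complexity bound with the $\VC$ estimate of \Cref{lem:upper} and a linear-time string-matching subroutine. First I would invoke the standard fact~\cite{blumer1989learnability,ehrenfeucht1989general,hanneke2016optimal} that, for any hypothesis class $\F$, drawing $m = O\!\left(\frac{\VC(\F)+\log(1/\delta)}{\varepsilon}\right)$ i.i.d.\ samples and returning any hypothesis of minimum empirical loss yields, with probability at least $1-\delta$, a hypothesis of true loss at most $\varepsilon$ (and the same statement with $\varepsilon^{-2}$ in place of $\varepsilon^{-1}$ in the agnostic setting). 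Since \Cref{lem:upper} gives $\VC(\mathcal{H}_{k,\Sigma}) \le \log n + O(\log\log n)$, treating $\varepsilon,\delta$ as constants makes $m = O(\log n)$, so it remains only to implement the ERM step efficiently.

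Next I would cut down the ERM search space. Two patterns in $\Sigma^{\le k}$ that label $x_1,\dots,x_m$ identically have the same empirical loss, and any pattern occurring in none of the $x_i$ induces the all-zeros labeling; hence the minimum empirical loss over all of $\mathcal{H}_{k,\Sigma}$ is already attained over the explicit candidate set consisting of all substrings of length at most $k$ occurring in some $x_i$, together with one extra candidate covering all patterns of length $\le k$ that occur in no $x_i$ (when such a pattern exists). Each $x_i$ has at most $\sum_{\ell=1}^{k}(n-\ell+1) \le kn$ substrings of length $\le k$, so this candidate set has size $O(mkn)=O(kn\log n)$. For each candidate $\eta$ and each sample string $x_i$ we decide whether $\eta$ occurs in $x_i$ in time $O(n)$ via a linear-time matching algorithm~\cite{knuth1977fast}, computing the empirical loss of every candidate in total time $O(kn\log n)\cdot O(n\log n)=O(kn^2\log^2 n)$; we then output a candidate of minimum empirical loss, which is a legitimate ERM output. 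This gives the first bound.

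For the second part, when $k$ and $|\Sigma|$ are constants independent of $n$, \Cref{lem:upper} gives $\VC(\mathcal{H}_{k,\Sigma}) \le \ceil{k\log|\Sigma|} = O(1)$, so $m = O(1)$ samples suffice; moreover the class $\mathcal{H}_{k,\Sigma}$ has only $\frac{|\Sigma|^{k+1}-1}{|\Sigma|-1}=O(1)$ hypotheses, so one can simply enumerate all patterns in $\Sigma^{\le k}$ and, for each, test occurrence in each of the $O(1)$ sample strings in $O(n)$ time, running ERM in $O(n)$ time overall.

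I do not expect a genuine obstacle here; the one step that needs care is the search-space reduction, where one must verify that every empirical loss value achievable by some $h_\eta$ with $\eta\in\Sigma^{\le k}$ is achieved by one of the enumerated candidates, so that minimizing over the candidate set truly performs ERM over $\mathcal{H}_{k,\Sigma}$. The remaining ingredients --- the PAC sample-complexity bound, the $\VC$ upper bound of \Cref{lem:upper}, and the running time of linear-time string matching --- are quoted or routine.
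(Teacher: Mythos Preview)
Your proposal is correct and follows essentially the same route as the paper: bound the sample size by $O(\log n)$ via \Cref{lem:upper}, restrict the ERM search to the $O(kn\log n)$ substrings of the samples, and evaluate each candidate on each sample in $O(n)$ time; for constant $k,|\Sigma|$ use the $\ceil{k\log|\Sigma|}$ VC bound and enumerate the $O(1)$ patterns directly. If anything, you are slightly more careful than the paper in explicitly adding a representative for the all-zeros labeling to guarantee that the reduced candidate set attains the true ERM value.
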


\subsection{Extensions}\label{sec:extensions}

\paragraph{Infinite alphabet.}
So far we have been considering the case of finite alphabet $\Sigma$. For an infinite $\Sigma$ the $\VC$ dimension is essentially $\log{n}$ for every value of $k\geq 1$. Note that the upper bound of $\VC(\mathcal{H}_{k,  \Sigma}) \leq\log{n}+0.5\log\log{n}+2$ from Lemma~\ref{lem:upper} holds even for infinite alphabets $\Sigma$. Indeed, this upper bound counts the number of different patterns which have to occur in one string and compares it to the length of the string $n$. In the following lemma we give a lower bound of $\log{n}$ for all values of $k\geq1$.

\begin{lemma}
Let $\Sigma$ be an infinite alphabet, and $k\geq 1$. Then
$$ \VC(\mathcal{H}_{k,  \Sigma}) =(1+o(1)) \log{n} \; .$$
\end{lemma}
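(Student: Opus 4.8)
The plan is to establish the matching lower bound $\VC(\mathcal{H}_{k,\Sigma}) \geq (1-o(1))\log n$ for infinite $\Sigma$ and $k \geq 1$; together with the upper bound from \Cref{lem:upper} (which, as noted, applies verbatim to infinite alphabets) this yields $\VC(\mathcal{H}_{k,\Sigma}) = (1+o(1))\log n$. The key point is that with an infinite alphabet we are no longer constrained to short patterns, so even $k=1$ suffices: we can afford to use distinct fresh symbols freely, which trivializes the combinatorial obstructions (overlaps, one pattern being a substring of another) that \Cref{lem:construction} had to work around in the finite case.

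First I would reduce to the case $k=1$, since $\mathcal{H}_{1,\Sigma} \subseteq \mathcal{H}_{k,\Sigma}$ and hence $\VC$ is monotone in $k$. So it suffices to shatter roughly $\log n$ strings using single-character patterns. Set $d := \lfloor \log n \rfloor - O(\log\log n)$ (choosing the low-order term so that the length bound below works out), and pick $2^d$ pairwise-distinct symbols $c_0, c_1, \ldots, c_{2^d-1} \in \Sigma$, which is possible since $\Sigma$ is infinite and $2^d \leq n$. For each $i \in \{0,\ldots,d-1\}$, define the string $x_i \in \Sigma^n$ to be the concatenation (in any order) of all symbols $c_j$ with the $i$th bit of the binary expansion of $j$ equal to $1$; this is a string of length exactly $2^{d-1}$, and since $2^{d-1} \leq n/2 \leq n$ we can pad each $x_i$ to length $n$ with a single additional fresh symbol $\star \in \Sigma$ distinct from all the $c_j$ (using $\star$ rather than one of the $c_j$ avoids introducing an unwanted occurrence). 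Now for any $I \subseteq \{0,\ldots,d-1\}$, let $j$ be the integer whose binary expansion is the indicator of $I$; the length-$1$ pattern $c_j$ occurs in $x_i$ if and only if the $i$th bit of $j$ is $1$, i.e. if and only if $i \in I$. (The pattern $c_j$ never occurs in the padding, and no $c_j$ occurs ``accidentally'' since all the $c_j$ are distinct symbols.) Hence $c_j$ realizes the dichotomy $I$, and the set $\{x_0,\ldots,x_{d-1}\}$ is shattered, giving $\VC(\mathcal{H}_{k,\Sigma}) \geq d = (1-o(1))\log n$.

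This argument is essentially a simplification of the proof of \Cref{lem:VCLB}: the role of the family $T_m$ from \Cref{lem:construction} (patterns such that no concatenation of two contains a third) is played here by $2^d$ distinct one-character patterns, for which the non-containment property is immediate. The only thing to check carefully is the padding step --- one must make sure that appending the filler symbols to reach length $n$ does not create a spurious occurrence of some $c_j$; using a fresh symbol $\star$ not in $\{c_0,\ldots,c_{2^d-1}\}$ handles this, and such a symbol exists precisely because $\Sigma$ is infinite.

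I do not anticipate a genuine obstacle here: the infinite-alphabet case is strictly easier than the finite one already treated, and the main ``work'' is just arithmetic bookkeeping to confirm $2^{d-1} \leq n$ with $d = \log n - \Theta(\log\log n)$, which is immediate. The one subtlety worth stating explicitly in the writeup is why we get to ignore $k$ entirely (monotonicity plus the fact that single symbols are length-$1$ patterns), since this is exactly the feature that fails for finite $\Sigma$ and forces the more delicate construction there.
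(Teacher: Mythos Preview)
Your proposal is correct and takes essentially the same approach as the paper: pick pairwise distinct single symbols as patterns, build $x_i$ as the concatenation of those symbols whose index has $i$th bit $1$, and pad with one further fresh symbol. The only difference is that you introduce an unnecessary $-O(\log\log n)$ slack in $d$; in fact $d=\lfloor\log n\rfloor+1$ already works since then $2^{d-1}=2^{\lfloor\log n\rfloor}\leq n$, and the paper uses exactly this sharper value.
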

\begin{proof}

For the lower bound, we pick $2^d+1$ distinct elements $\bot,a_0,\ldots, a_{2^d-1}\in\Sigma$. Let  $d=\floor{\log{n}}+1$. We construct $d$ strings $X=\{x_1,\ldots,x_d\}, x_i\in\Sigma^n$ such that any dichotomy of $X$ is realizable in $\Sigma^{\leq k}$. Now, for $0\leq i\leq d-1$, we define $x_i$ to be a concatenation (in arbitrary order) of all $a_j$ such that the $i$th bit of the binary expansion of the number $j$ is $1$. Note that now the length of each $x_i$ is at most $2^{d-1}\leq n$, so we pad each $x_i$ with the element $\bot$ so that $x_i\in \Sigma^n$.

Now we need to show that each dichotomy of $X$ is realizable. For a dichotomy $\cal D$ of $X$, consider the set $I\subseteq[d]$ such that $\cal D$ labels $x_i$ with one if and only if $i \in I$.
In order to realize $\cal D$, we take the pattern $a_j$ such that the binary expansion of $0\leq j\leq 2^d-1$ equals the indicator vector of $I$. By the definition of $x_i$, $x_i$ contains $a_j$ if and only if $i \in I$. Therefore, $a_j$ realizes the dichotomy $\cal D$. Note that the pattern $a_j\in\Sigma\subseteq\Sigma^{\leq k}$.
\end{proof}

\paragraph{Learning multiple patterns.}
In this section we make a few simple observations regarding the VC dimension of classifiers defined by the occurrences of multiple patterns.
 The main observation is that learning a \emph{constant} number of patterns does not change the asymptotics of the $\VC$ dimension so long as the number of patterns is upper bounded by the
 length of the pattern $k$. Let us consider two natural classes $\mathcal{H}_{k,  \Sigma}^{\text{and}}$ and $\mathcal{H}_{k,  \Sigma}^{\text{or}}$ of multi-pattern Boolean functions over $\Sigma^n$. Each function $h_\sigma^{\text{and}}\in\mathcal{H}_{k,  \Sigma}^{\text{and}}$ is parameterized by $c>0$ patterns $\sigma=(\sigma_1,\ldots,\sigma_c)\in\left(\Sigma^{\leq k}\right)^c$. Now, for an $s\in\Sigma^n$, $h_{\sigma}^{\text{and}}(s)=1$ if and only if $s$ contains \emph{each} $\sigma_i,1\leq i\leq c$ as a substring (for brevity we omit from notation the dependence of $\mathcal{H}_{k,  \Sigma}^{\text{and}}$ and $\mathcal{H}_{k,  \Sigma}^{\text{or}}$ on $c$). Similarly, a function $h_\sigma^{\text{or}}\in\mathcal{H}_{k,  \Sigma}^{\text{or}}$ takes the value one: $h_{\sigma}^{\text{or}}(s)=1$ if and only if $s$ contains \emph{at least one} $\sigma_i$ as a substring. We stress that we assume that the set of patterns $\sigma_i,i \in [c]$ are distinct.

An upper bound on the VC dimension of $\mathcal{H}_{k,  \Sigma}^{\text{and}}$ and $\mathcal{H}_{k,  \Sigma}^{\text{or}}$ follows at once from the following Lemma proved in
\cite{blumer1989learnability} (Lemma 3.2.3).
\begin{lemma}
Let $\mathcal{H}_1,\ldots,\mathcal{H}_c$ be classes of functions of $\VC$ dimension at most $\forall i\colon \VC(\mathcal{H}_i)\leq d$. Let
\begin{align*}
\mathcal{H}^{\text{and}}&=\{f_{h_1,\ldots,h_c}(x)=h_1(x)\wedge\ldots\wedge h_c(x)\colon h_1\in\mathcal{H}_1,\ldots,h_c\in\mathcal{H}_c\} \; , \\
\mathcal{H}^{\text{or}}&=\{f_{h_1,\ldots,h_c}(x)=h_1(x)\vee\ldots\vee h_c(x)\colon h_1\in\mathcal{H}_1,\ldots,h_c\in\mathcal{H}_c\} \; .
\end{align*}
Then $\VC(\mathcal{H}^{\text{and}})=O(dc\log{c})$ and $\VC(\mathcal{H}^{\text{or}})=O(dc\log{c})$.
\end{lemma}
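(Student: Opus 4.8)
The plan is to derive both bounds from the Sauer--Shelah lemma together with a counting argument that bounds the growth (shatter) function of a combined class by the product of the growth functions of the pieces. This is the standard proof of Blumer et al., Lemma~3.2.3. First I would reduce the two cases to one: by De~Morgan, $h_1\wedge\cdots\wedge h_c=\neg(\neg h_1\vee\cdots\vee\neg h_c)$, and for any class $\mathcal{G}$ the complemented class $\{\neg g:g\in\mathcal{G}\}$ shatters exactly the same sets as $\mathcal{G}$ (complementing all labelings is a bijection on dichotomies), so it has the same $\VC$ dimension. Hence $\VC(\mathcal{H}^{\mathrm{and}})=\VC\bigl(\{\neg h_1\vee\cdots\vee\neg h_c\}\bigr)$, which is an ``OR class'' built from $\bar{\mathcal{H}}_1,\dots,\bar{\mathcal{H}}_c$, each still of $\VC$ dimension at most $d$. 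So it suffices to bound $\VC(\mathcal{H}^{\mathrm{or}})$.

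Second, fix a set $S$ of $m$ points and count the dichotomies that $\mathcal{H}^{\mathrm{or}}$ realizes on $S$. Any $f=h_1\vee\cdots\vee h_c$ restricted to $S$ is a function of the tuple of restrictions $(h_1|_S,\dots,h_c|_S)$. By the Sauer--Shelah lemma each $\mathcal{H}_i$ realizes at most $\binom{m}{\le d}\le (em/d)^d$ distinct dichotomies on $S$ (for $m\ge d\ge 1$), so there are at most $\bigl((em/d)^d\bigr)^c=(em/d)^{dc}$ such tuples, and hence at most that many dichotomies realized by $\mathcal{H}^{\mathrm{or}}$ on $S$.

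Third, if $S$ is shattered by $\mathcal{H}^{\mathrm{or}}$ then all $2^m$ dichotomies are realized, so $2^m\le (em/d)^{dc}$, i.e.\ $m\le dc\log_2(em/d)$. Writing $m=dc\,t$ this becomes $t\le \log_2 c+\log_2 t+O(1)$; since $t\mapsto t-\log_2 t$ is increasing for $t\ge 2$, a routine estimate shows $t=O(\log c)$ for $c\ge 2$, and therefore $m=O(dc\log c)$. By the reduction in the first paragraph the same bound holds for $\mathcal{H}^{\mathrm{and}}$.

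The only step requiring genuine care is the last one: solving the inequality $2^m\le (em/d)^{dc}$ cleanly so the answer comes out as $O(dc\log c)$ rather than the weaker $O(dc\log(dc))$, and handling the degenerate regimes (the statement is meant for $c\ge 2$; for $c=1$ one simply has $\VC(\mathcal{H}^{\mathrm{or}})=\VC(\mathcal{H}_1)\le d$). Everything else is the textbook growth-function argument, so I expect no real obstacle there.
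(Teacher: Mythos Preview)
Your proposal is correct and is exactly the standard Sauer--Shelah growth-function argument of Blumer et al.\ (Lemma~3.2.3). Note, however, that the paper does not actually prove this lemma at all: it merely quotes the statement and cites \cite{blumer1989learnability}, so there is no ``paper's own proof'' to compare against---your write-up simply supplies the omitted textbook proof.
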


We now turn to the lower bound. Our result here is rather modest: We show that the lower bound on the VC dimension of a single pattern also holds for $\mathcal{H}_{k,  \Sigma}^{\text{and}}$ and $\mathcal{H}_{k,  \Sigma}^{\text{or}}$ provided that the number $c$ of (distinct) patterns is not too large. Let us see that the lower bounds of Lemma~\ref{lem:VCLB} hold for $\mathcal{H}_{k,  \Sigma}^{\text{and}}$ and $\mathcal{H}_{k,  \Sigma}^{\text{or}}$. Indeed, for the class $\mathcal{H}_{k,  \Sigma}^{\text{and}}$, we use the construction from Lemma~\ref{lem:VCLB}, where for every pattern $\sigma$ in that construction we consider a set of $k$ patterns $\{\sigma^{1},\ldots,\sigma^{k}\}$. We define $\sigma^i=\sigma_1\ldots\sigma_i$ to be the prefix of length $i$ of $\sigma$. For example, for the pattern $11010$ we  take the patterns $\{1,11,110,1101,11010\}$. We remark that we obtain $k$ distinct subpatterns of $\sigma$. Since every string from the shattered set contains $\sigma$ if and only if it contains every pattern from $\{\sigma^1, \ldots ,\sigma^k\}$, all dichotomies are realized by the ``last'' pattern $\sigma^k=\sigma$. Since $c\leq k$, we take $c$ longest patterns $\{\sigma^{k-c+1}, \ldots ,\sigma^k\}$, and our construction gives a shattered set of size 
\begin{align*}
\VC(\mathcal{H}_{k,  \Sigma}^{\text{and}}) \geq\min\left(\log{|\Sigma|}(k-O(\log k)), \log n+O(\log \log n)\right) \; .
\end{align*}

For the class $\mathcal{H}_{k,  \Sigma}^{\text{or}}$, we can take $T'_m\subseteq T_m$ with $|T'_m|=|T_m|/2$ and shatter a set of size $d-1$. Now for every $\sigma \in T'_m$ define a $c$-tuple of patterns by adding to $\sigma$ $c-1$ patterns in $T_m \setminus T'_m$ (where $c \leq 2^{d-1}-1$ because $c\leq k$). 
Since none of the strings in the shattered set contains a pattern from $T_m \setminus T'_m$, all dichotomies are realized by the ``first'' pattern $\sigma_1$. Again, our construction from Lemma~\ref{lem:VCLB} gives a shattered set of size $\min\left(\log{|\Sigma|}(k-O(\log k)), \log n+O(\log \log n)\right)-1$.

To conclude, we have proved:
\begin{theorem}
Let $1\leq c\leq k$ be a fixed constant. Then
$$VC(\mathcal{H}_{k,  \Sigma}^{\text{and}}),VC(\mathcal{H}_{k,  \Sigma}^{\text{or}}) = \Theta\left(\min\left(\log{|\Sigma|}(k-O(\log k)), \log n+O(\log \log n)\right)\right) \; .$$
\end{theorem}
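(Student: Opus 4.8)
The plan is to obtain matching upper and lower bounds, in both cases reducing to the single-pattern analysis of \Cref{thm:VCmain}. Throughout, write $d := \VC(\mathcal{H}_{k,\Sigma}) = \min(\log|\Sigma|(k-O(\log k)), \log n + O(\log\log n))$ for the quantity computed there.

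For the upper bound I would simply apply the composition lemma of Blumer et al.\ quoted just above: $\mathcal{H}_{k,\Sigma}^{\text{and}}$ is exactly the class of functions $f_{h_1,\dots,h_c} = h_1 \wedge \cdots \wedge h_c$ with each $h_i \in \mathcal{H}_{k,\Sigma}$ (and likewise $\mathcal{H}_{k,\Sigma}^{\text{or}}$ with $\vee$). Hence $\VC(\mathcal{H}_{k,\Sigma}^{\text{and}}), \VC(\mathcal{H}_{k,\Sigma}^{\text{or}}) = O(dc\log c)$, which is $O(d)$ since $c$ is a fixed constant, matching the claimed bound.

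For the lower bound on $\mathcal{H}_{k,\Sigma}^{\text{and}}$ I would reuse the shattering construction of \Cref{lem:VCLB}: a set of $d$ strings together with patterns $p_0,\dots,p_{2^d-1}$ realizing every dichotomy. Replace each such pattern $\sigma$ by the $c$-tuple of its $c$ longest prefixes; these are distinct because $c \le k \le |\sigma|$ in the relevant parameter range (with a harmless shrinkage of $m$ if needed at the boundary). Since a string contains $\sigma$ iff it contains \emph{every} prefix of $\sigma$ — in particular the $c$ longest ones — the function $h^{\text{and}}$ on this tuple agrees with $h_\sigma$ on the $d$ strings, so the same set of size $d$ is shattered. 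For $\mathcal{H}_{k,\Sigma}^{\text{or}}$ I would instead split the family $T_m$ of \Cref{lem:construction} into a half $T'_m$ and its complement, run the \Cref{lem:VCLB} construction using only patterns from $T'_m$ (which shatters $d-1$ strings, the loss of one bit coming from halving $|T_m|$), and for each $\sigma \in T'_m$ used there form a $c$-tuple by appending $c-1$ distinct patterns from $T_m \setminus T'_m$. By the non-overlap property of $T_m$ and the fact that every shattered string is a concatenation of patterns from $T'_m$ only, no shattered string contains any pattern of $T_m \setminus T'_m$; hence $h^{\text{or}}$ on this tuple agrees with $h_\sigma$ on the shattered set, giving a shattered set of size $d-1$.

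The only real care needed — and the step I expect to be the main (if minor) obstacle — is verifying the parameter inequalities across both branches of the $\min$: that $c \le |\sigma|$ in the AND construction and that $c-1 \le |T_m \setminus T'_m|$ with $d-1 \ge 1$ in the OR construction, using $c \le k$ and the growth of $m$ (hence $d$) with $n$; and then observing that the additive $-1$ and the $\Theta(\cdot)$ slack are absorbed into the $O(\log k)$ and $O(\log\log n)$ error terms already present in \Cref{thm:VCmain}. Combining the upper and lower bounds yields the stated $\Theta$.
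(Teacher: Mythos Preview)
Your proposal is correct and follows essentially the same route as the paper: the upper bound via the Blumer et al.\ composition lemma, the $\mathcal{H}^{\text{and}}$ lower bound by replacing each shattering pattern with its $c$ longest prefixes, and the $\mathcal{H}^{\text{or}}$ lower bound by halving $T_m$ and padding each tuple with $c-1$ ``dummy'' patterns from the unused half. One small slip: you write $c\le k\le|\sigma|$, but in fact the patterns of \Cref{lem:VCLB} have length $|\sigma|=m+\lceil\log m\rceil+2\le k$, so the needed inequality $c\le|\sigma|$ does not follow from $c\le k$ as stated---though your parenthetical about adjusting $m$ already anticipates this, and the paper is equally informal on this point.
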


\paragraph{Patterns of length $k$.}
One can also consider learning patterns of length \emph{exactly} $k$. We consider this case separately since it seems that getting tight bounds on $\VC$-dimension in this case is a harder task. In particular, we are not able to get tight bounds for the regime $k=n^{1-o(1)}$ and leave this as an open question.

For a fixed \emph{finite} alphabet $\Sigma$ and an integer $k>0$, the class of functions $\mathcal{E}_{k,\Sigma}$ over $\Sigma^n$ is defined as follows.
Every Boolean function $h_{\sigma} \in \mathcal{E}_{k,\Sigma}$ is parameterized by a pattern $\sigma \in \Sigma^{k}$ of length exactly $k$. Therefore,
$|\mathcal{E}_{k,\Sigma}|=|\Sigma|^{k}$. For a string $s\in\Sigma^n$, $h_{\sigma}(s)=1$ if and only if $s$ contains $\sigma$ as a substring.
We use a simple double counting argument to prove:
\begin{lemma}\label{thm:VCeq}
$\VC(\mathcal{E}_{k,  \Sigma}) \leq \min(k \log |\Sigma|, \log (n-k+1)+1)$.
\end{lemma}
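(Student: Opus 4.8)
\textbf{Proof plan for \Cref{thm:VCeq}.}

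The plan is to prove the two upper bounds separately, since the claimed bound is a minimum of two quantities, and a set of size exceeding either bound cannot be shattered. The first bound, $\VC(\mathcal{E}_{k,\Sigma}) \leq k\log|\Sigma|$, is the trivial counting bound: to shatter a set $S$ of size $d$ we need $\mathcal{E}_{k,\Sigma}$ to realize all $2^d$ dichotomies of $S$, so we need $|\mathcal{E}_{k,\Sigma}| = |\Sigma|^k \geq 2^d$, which gives $d \leq k\log|\Sigma|$. This is identical to the first half of \Cref{lem:upper}.

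The main work is the second bound, $\VC(\mathcal{E}_{k,\Sigma}) \leq \log(n-k+1)+1$. I would mimic the Sperner-system argument from \Cref{lem:upper}, with the wrinkle that all patterns now have the \emph{same} length $k$, so "no pattern is a substring of another" is automatic for distinct patterns, but I must instead track \emph{occurrence intervals} inside the shattered string. Suppose $X = \{x_1,\ldots,x_d\}\subseteq\Sigma^n$ is shattered. Take a Sperner system on $\{1,\ldots,d-1\}$ of size $D = \binom{d-1}{\lfloor(d-1)/2\rfloor}$, adjoin the element $d$ to each member to get $D$ sets all containing $d$, none a subset of another; this yields $D$ dichotomies of $X$ each labelling $x_d$ with $1$. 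By shattering there are $D$ patterns $p_1,\ldots,p_D\in\Sigma^k$ realizing these dichotomies, all of which must occur in $x_d$. Here is the key step: the patterns $p_1,\ldots,p_D$ must be \emph{pairwise distinct} (distinct dichotomies need distinct classifiers, and distinct patterns of length exactly $k$ give distinct functions $h_\sigma$... more carefully, if $p_i = p_j$ as strings then the dichotomies coincide, contradiction). Since they are distinct strings all of length exactly $k$ occurring in $x_d$, their starting positions in $x_d$ are distinct, and there are at most $n-k+1$ possible starting positions. Hence $D \leq n-k+1$, so $\binom{d-1}{\lfloor(d-1)/2\rfloor} \leq n-k+1$, and using $\binom{d-1}{\lfloor(d-1)/2\rfloor} \geq 2^{d-1}/\sqrt{d-1} \geq 2^{d-1}/(n-k+1)$ — or more simply $2^{d-1}/(d) \leq \binom{d-1}{\lfloor(d-1)/2\rfloor}$ when $d\geq 2$ — one extracts $d \leq \log(n-k+1)+1$ after a short estimate. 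Actually the cleanest route: $\binom{d-1}{\lfloor (d-1)/2\rfloor}\ge 2^{d-2}$ for $d\ge 2$, giving $2^{d-2}\le n-k+1$, i.e. $d\le \log(n-k+1)+2$; a slightly tighter constant gives $\log(n-k+1)+1$.

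The main obstacle I anticipate is the bookkeeping of constants: pinning the bound to exactly $\log(n-k+1)+1$ rather than $+2$ requires the sharper central-binomial estimate $\binom{d-1}{\lfloor(d-1)/2\rfloor}\ge 2^{d-1}/\sqrt{2(d-1)}$ together with $d\le\log(n-k+1)+1$ being consistent with $n-k+1\ge 2$ (small cases $n-k+1\in\{0,1\}$ must be handled by hand, where the VC dimension is at most $1$ or $0$ respectively since there are too few starting positions). A second minor subtlety: I should confirm that distinct patterns of length exactly $k$ really do induce distinct members of $\mathcal{E}_{k,\Sigma}$ over $\Sigma^n$ when $n\ge k$ — this holds because the string $\sigma$ itself (padded arbitrarily to length $n$) separates $h_\sigma$ from $h_{\sigma'}$ for $\sigma'\neq\sigma$ — so that the $D$ realizing patterns are genuinely $D$ distinct strings and the starting-position count applies. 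Beyond these the argument is a direct transcription of the double-counting proof of \Cref{lem:upper}.
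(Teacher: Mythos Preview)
Your first bound (the counting bound $d\le k\log|\Sigma|$) matches the paper exactly.

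For the second bound your argument is correct but takes an unnecessary detour, and as you yourself notice, it does not cleanly deliver the stated constant $+1$. You import the Sperner machinery from \Cref{lem:upper}, but you also observe that since all patterns have length exactly $k$, ``no pattern is a substring of another'' is automatic for distinct patterns. You should follow that observation to its conclusion: the Sperner step was only there in \Cref{lem:upper} to manufacture a family of patterns with no substring relations, and here that property comes for free. So instead of restricting to a Sperner family of size $\binom{d-1}{\lfloor(d-1)/2\rfloor}$, simply take \emph{all} $2^{d-1}$ dichotomies of $X$ that label $x_d$ with $1$. Shattering gives $2^{d-1}$ pairwise distinct patterns in $\Sigma^k$, each occurring in $x_d$; since $x_d$ has at most $n-k+1$ distinct length-$k$ substrings, you get $2^{d-1}\le n-k+1$ and hence $d\le\log(n-k+1)+1$ on the nose.

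This is exactly the paper's argument. It is shorter than yours, avoids the central-binomial estimates entirely, and yields the precise constant without any case analysis for small $n-k+1$. Your route would only certify $d\le\log(n-k+1)+O(\log\log(n-k+1))$ (or $+2$ with the crude bound), which is weaker than the lemma as stated.
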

\begin{proof}
Since $|\mathcal{E}_{k,  \Sigma}|=|\Sigma|^k$, the upper bound of $\VC(\mathcal{E}_{k,  \Sigma}) \leq k \log |\Sigma|$ follows immediately. For the other upper bound, suppose we can shatter a set $X$ of $d$ strings $x_1 \ldots x_d$ of length $n$. Then we have $2^d$ distinct patterns which realize all dichotomies of $X$.
For a fixed $i \in [d]$, the number of dichotomies of $X$ which label $x_i$ with one is $2^{d-1}$. Therefore, every string $x_i$ contains at least $2^{d-1}$ distinct patterns of length $k$.
On the other hand, any string of length $n$ can contain at most $n-k+1$ distinct patterns of length $k$. Thus, we have $2^{d-1} \leq n-k+1$, or,
$d \leq \log (n-k+1)+1$.
\end{proof}

Now we prove the following upper bound:
\begin{lemma}
Let $\Sigma$ be a finite alphabet of size $|\Sigma|\geq2$, then
$$\VC(\mathcal{E}_{k,  \Sigma}) \geq \min((k-\log{k}-5)\log{|\Sigma|}, \log{n}-\log{k}) \; .$$
\end{lemma}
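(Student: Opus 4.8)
The plan is to essentially mirror the lower bound construction of Lemma~\ref{lem:VCLB}, but using patterns of length \emph{exactly} $k$ rather than at most $k$. Recall that in Lemma~\ref{lem:VCLB} we built a family $T_m$ (via Lemma~\ref{lem:construction}) of strings of length $m+\ceil{\log m}+2$ with the non-overlapping property that the concatenation of two distinct members contains no third member. We picked $2^d$ of these as patterns $p_0,\ldots,p_{2^d-1}$ and defined $d$ strings $x_i$, each a concatenation of those $p_j$ whose index has a $1$ in bit $i$, padding with ones. To adapt this to $\mathcal{E}_{k,\Sigma}$, the idea is to pad \emph{each pattern} on the left with a fixed prefix (say $1^{k - (m+\ceil{\log m}+2)}$) so that all of them have length exactly $k$; since the prefix is the same for all patterns, the non-overlapping property is preserved up to a constant, and the same shattering argument goes through.

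First I would set $m = \lfloor \min(k-\log k - 3, \frac{\log n}{\log|\Sigma|} - \frac{\log\log n}{\log|\Sigma|} + 1)\rfloor$ as before, and $d = \lfloor (m-1)\log|\Sigma|\rfloor$, and invoke Lemma~\ref{lem:construction} to get $T_m$ of size $\geq |\Sigma|^{m-1}\geq 2^d$. Then I would define padded patterns $\hat p_j := w \circ p_j$ where $w$ is a fixed string in $\Sigma^{k - |p_j|}$; for concreteness take $w = 1^{k-|p_j|}$ over the distinguished character $1 \in \Sigma$ (note $|p_j| = m+\ceil{\log m}+2 \leq k$ by the first bullet of Lemma~\ref{lem:VCLB}, so this is well-defined). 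Next I would define $x_i$ as the concatenation, in arbitrary order, of all $\hat p_j$ with the $i$th bit of $j$ equal to $1$, padded with a fresh symbol or with ones up to length $n$. The length check is analogous: each $x_i$ is a concatenation of $2^{d-1}$ patterns of length $k$, so has length $2^{d-1}k \leq 2^{(m-1)\log|\Sigma|-1}k$; using $m \leq \frac{\log n}{\log|\Sigma|} - \frac{\log\log n}{\log|\Sigma|}+1$ and $k \leq n$ one gets a bound like $n \cdot 2^{-\Omega(\log\log n)} \le n$ — the extra factor of $k$ (versus $\log n$ in the original) is what forces the weaker bound $\log n - \log k$ instead of $\log n + O(\log\log n)$ in the theorem statement.

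The shattering argument is then: for $I\subseteq[d]$ let $j$ be the integer whose binary expansion indicates $I$; I claim $\hat p_j$ realizes the dichotomy. It occurs in $x_i$ for $i\in I$ by construction. For $i\notin I$, I need that $x_i$ does not contain $\hat p_j$ as a substring. Since $x_i$ is a concatenation of various $\hat p_{j'} = w \circ p_{j'}$ with $j' \ne j$ (plus trailing padding), any occurrence of $\hat p_j = w\circ p_j$ inside $x_i$ would in particular contain an occurrence of the suffix $p_j$; but $p_j$ lies somewhere inside the concatenation $\cdots \hat p_{j'} \hat p_{j''}\cdots$, hence $p_j$ is a substring of some $\hat p_{j'} \hat p_{j''} = w p_{j'} w p_{j''}$ (or inside the padding, which we can make use only the symbol $1$ and be long enough that $p_j$, which ends in $0^{\ceil{\log m}+1}1$ and is short, cannot fit — actually cleanest to pad with a brand-new alphabet symbol $\bot$, but if $|\Sigma|=2$ one must argue via the $0$-run structure). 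The main obstacle is exactly this last step: because the padding $w$ is an all-$1$ block and is glued directly in front of $p_{j'}$, I must check that $p_j$ cannot straddle a boundary $\cdots p_{j'} \, w \, p_{j''}\cdots$ in a way Lemma~\ref{lem:construction} doesn't immediately rule out. The fix is to note that $w = 1^{k-|p_j|}$ contains no occurrence of $0^{\ceil{\log m}+1}$ (it has no zeros at all), and each $p_{j'} = s_{j'} \circ 0^{\ceil{\log m}+1} 1$ has its unique maximal zero-run at the very end; tracking where the zero-run $0^{\ceil{\log m}+1}1$ of a hypothetical embedded $p_j$ must land forces it to coincide with the zero-run of one of the $p_{j'}$, and then Lemma~\ref{lem:construction}'s non-overlap property (applied to $\tau_1 = \hat p_{j'}, \tau_2 = \hat p_{j''}$, noting the shared prefix $w$ does not create new embeddings) yields $j = j'$ or $j = j''\in I$, contradicting $i\notin I$. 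I would state this as a short sublemma and then conclude $\VC(\mathcal{E}_{k,\Sigma}) \geq d = \lfloor(m-1)\log|\Sigma|\rfloor \geq \min((k-\log k - 5)\log|\Sigma|,\ \log n - \log k)$ after simplifying the floor and the $m$-definition, exactly as in Lemma~\ref{lem:VCLB}.
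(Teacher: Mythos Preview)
Your proposal has a genuine gap in the length check. With your stated choice $m=\lfloor\min(k-\log k-3,\frac{\log n}{\log|\Sigma|}-\frac{\log\log n}{\log|\Sigma|}+1)\rfloor$ and $d=\lfloor(m-1)\log|\Sigma|\rfloor$, each string $x_i$ is a concatenation of $2^{d-1}$ patterns of length exactly $k$, so $|x_i|=2^{d-1}k$. Your computation gives $2^{d-1}\leq \frac{n}{2\log n}$, hence $|x_i|\leq \frac{nk}{2\log n}$; invoking ``$k\leq n$'' only yields $|x_i|\leq \frac{n^2}{2\log n}$, which is far larger than $n$ unless $k=O(\log n)$. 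You seem to sense this (``the extra factor of $k$\ldots forces the weaker bound $\log n-\log k$''), but you never actually change $m$ or $d$ to make the bound go through. To salvage your approach you must rechoose the second branch of the minimum to be roughly $\frac{\log(n/k)}{\log|\Sigma|}+1$ so that $d\leq \log n-\log k+1$ and hence $2^{d-1}k\leq n$.

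The paper avoids both this recalibration and your entire padding sublemma by a simpler device: it picks $m=\lfloor k-\log k-2\rfloor$ so that the strings in $T_m$ (which have length $m+\lceil\log m\rceil+2$) already have length $k$. No prefix $w$ is attached, so Lemma~\ref{lem:construction} applies verbatim and the non-overlap argument is immediate. It then sets $d=\lfloor\min((k-\log k-5)\log|\Sigma|+1,\ \log n-\log k+1)\rfloor$ directly; the first term ensures $|T_m|\geq|\Sigma|^{m-1}\geq 2^d$, and the second ensures $2^{d-1}k\leq n$. Your zero-run--tracking argument for the padded patterns is basically sound, but it is an unnecessary complication once $m$ is chosen so the patterns come out at length $k$ to begin with.
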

\begin{proof}
Let
\begin{align*}
d&=\floor{ \min((k-\log{k}-5)\log{|\Sigma|}+1 , \log{n}-\log{k}+1)} \; ,\\
m&=\floor{k-\log{k}-2} \; .
\end{align*}

 By Lemma~\ref{lem:construction}, we have the set $T_m$ of $|\Sigma|^{m-1} \geq 2^{(k-\log{k}-5)\log{|\Sigma|}}\geq2^d$ strings of length $k$. We choose $2^d$ arbitrary strings $p_0,\ldots,p_{2^d-1}$ from $T_m$. Now we essentially use the construction from Lemma~\ref{lem:VCLB}: we construct $d$ strings $x_0,\ldots,x_{d-1} \in \Sigma^n$ such that $x_i$ contains $p_j$ if and only if the $i$th bit of the binary expansion of $j$ is $1$, and pad $x_i$ with ones to have $x_i\in\Sigma^n$.

 We have $d$ strings which can be shattered by $\mathcal{E}_{k,  \Sigma}$. We also know that the length of each pattern is $k$. We only need to show that before the padding step, each string $x_i$ had length at most $n$. Since each $x_i$ is a concatenation of $2^{d-1}$ patterns of length $k$, we have that its length is at most
$$2^{d-1}\cdot k\leq 2^{\log{n}-\log{k}}\cdot k=n \; .$$
\end{proof}

We remark that for the case of patterns of length \emph{at most $k$}, Lemma~\ref{lem:upper} and Lemma~\ref{lem:VCLB} give essentially tight bounds for all regimes of the parameters. Here, in the case of patterns of length \emph{exactly $k$}, we have a gap between lower and upper bounds for the regime $k=n^{1-o(1)}$.

\paragraph{2D patterns.}
Our bounds for learning one dimensional strings generalize to the 2D case. Here we have an $n \times n$ image over an alphabet
$\Sigma$ and am $m \times m$ pattern $\sigma$ where $m \leq k \leq n$. An image is classified as $1$ if and only if it contains $\sigma$.
\begin{definition}
For a fixed \emph{finite} alphabet $\Sigma$ and an integer $k>0$, let us define the class of Boolean functions $\mathcal{G}_{k,\Sigma}$ over $\Sigma^{n\times n}$ as follows.
Every function $g_{\sigma} \in \mathcal{G}_{k,\Sigma}$ is parameterized by a square 2D pattern $\sigma \in \Sigma^{m \times m}$ of dimension $m \leq k$. For a 2D image $s\in\Sigma^{n\times n}$ of dimension $n$, $g_{\sigma}(s)=1$ if and only if $s$ contains $\sigma$ as a consecutive sub-matrix (sub-image).
\end{definition}

We give tight bounds (up to low order terms) on $\VC(\mathcal{G}_{k, \Sigma})$. Since the proofs are very similar to the 1D case, we only sketch the arguments here.

Since $|\mathcal{G}_{k, \Sigma}|=\sum_{1\leq i \leq k}|\Sigma|^{i^2}+1\leq \sum_{1\leq i \leq k}|\Sigma|^{ik}+1<2|\Sigma|^{k^2}$, we have that $\VC(\mathcal{G}_{k, \Sigma})\leq \ceil{k^2\log |\Sigma|}$. Suppose that $\mathcal{G}_{k, \Sigma}$ shatters a set of $d$ 2D images from $\Sigma^{n\times n}$. By considering a
Sperner system over $\{1, \ldots, d-1\}$ of size $D={d-1 \choose \floor{(d-1)/2}}$ and adding the element $d$ to
each subset, we get a family of $D={d-1 \choose \floor{(d-1)/2}}$ patterns all lying in a single $n \times n$ image such that no pattern contains another one. We have that the bottom right corners of all these
patterns are distinct, and thus $\frac{2^{d-1}}{\sqrt{2(d-1)}} \leq D \leq n^2$ implying that $d \leq 2\log n+0.5\log\log{n}+3$. Hence,
$$\VC(\mathcal{G}_{k, \Sigma})\leq \min(\ceil{k^2\log|\Sigma|},2\log n+0.5\log \log n+3).$$

For the lower bound, the main observation is that we can generalize Lemma~\ref{lem:construction} to the two dimensional case
having a set $R_m$ of $(m+ 2\lceil\log m\rceil +2)\times (m +2\lceil\log m\rceil +2)$ 2D patterns of cardinality $|\Sigma|^{m^2-1}$ such
that for any four distinct patterns $\alpha_1,\alpha_2, \alpha_3,\alpha_4$ from $R_m$, their concatenation (fitting the four patterns into a $2(m+ 2\lceil\log m\rceil +2)\times 2(m +2\lceil\log m\rceil +2)$ square image in each of the $4!$ possible ways) does not contain any $\alpha_5 \neq \alpha_{i}$ for $1\leq i\leq 4$ from $R_m$.
We achieve this by taking all $m\times m$ templates not containing the all $0$ 2D square template of size $(2\lceil\log m\rceil+1)\times(2\lceil\log m\rceil+1)$, padding them by an all zero strip of width $2\lceil\log m\rceil+1$ on the right and bottom, and then adding a boundary of ones on those two sides. Similarly to Lemma~\ref{lem:construction}, it can be verified that $R_m$ satisfies the desired condition.

We now set
$$m=\left\lfloor\min\left(k-2\log{k}-4, \sqrt{\frac{2\log{n}}{\log{|\Sigma|}}-\frac{3\log\log{n}}{\log{|\Sigma|}}}\right)\right\rfloor\;.$$
Let $R_m$ be a set of $|\Sigma|^{m^2-1}$ templates whose construction was described in the paragraph above and set $d=\lfloor(m^2-1)\log |\Sigma|\rfloor$.
Since $|R_m|= |\Sigma|^{m^2-1}\geq 2^d$, we can choose $2^d$ distinct 2D patterns $q_0 \ldots q_{2^{d}-1}$ from $R_m$.
The dimension of each pattern $q_i$ is $m+2\ceil{\log m}+2$ which by the choice of $m$ is at most $k$.

Define a set of $n \times n$ images $Y:=\{y_0 \ldots y_{d-1}\}$ where $y_i$ is an image containing all the patterns $q_j$ from $R_m$
such that the binary expansion of $j$ equals $1$ in the $i$th location.
This way, each image from $Y$ must contain at most $2^{d-1}$ patterns, while we can fit $\left\lfloor\frac{n}{m+2\ceil{\log{m}}+2}\right\rfloor^2$ patterns into an image of size $n\times n$.
It can be verified that for the chosen values of $m$ and $d$, $2^{d-1} \leq \left\lfloor\frac{n}{m+2\ceil{\log{m}}+2}\right\rfloor^2$. Thus, we have that each $y_i$ can be padded to an $n \times n$ image if necessary by assigning $1$ to all unassigned positions. Finally, it follows in a similar fashion to the 1D case that
the set of patterns $q_0 \ldots q_{2^{d}-1}$ shatters $Y$. Hence $R_m$ shatters $Y$. Since $|Y|=d$ the VC dimension of the set of all 2D patterns of dimensions at most $k$ is at least $d$.

We conclude this discussion with the following Theorem:
\begin{theorem}
$$\VC(\mathcal{G}_{k, \Sigma})= \min\left((k-O(\log k))^2\log|\Sigma|,2\log n-O(\log \log n)\right) \; .$$
\end{theorem}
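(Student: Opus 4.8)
The plan is to prove matching bounds of the form $\min\big((k-O(\log k))^2\log|\Sigma|,\ 2\log n-O(\log\log n)\big)$ by lifting the one-dimensional arguments of \Cref{lem:upper}, \Cref{lem:construction}, and \Cref{lem:VCLB} to square images; the theorem itself is then just the combination of the resulting 2D upper and lower bounds.

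For the \textbf{upper bound} I would prove two separate bounds and take their minimum. The first, $\VC(\mathcal{G}_{k,\Sigma})\le\lceil k^2\log|\Sigma|\rceil$, is immediate from $|\mathcal{G}_{k,\Sigma}|=\sum_{1\le i\le k}|\Sigma|^{i^2}+1<2|\Sigma|^{k^2}$ together with $2^d\le|\mathcal{G}_{k,\Sigma}|$ for any shattered set of size $d$. The second is the Sperner-system argument of \Cref{lem:upper}: if a set $X=\{x_1,\dots,x_d\}$ of $n\times n$ images is shattered, take a Sperner family of size $D=\binom{d-1}{\lfloor(d-1)/2\rfloor}$ on $\{1,\dots,d-1\}$ and adjoin $d$ to every member, yielding $D$ dichotomies all labeling $x_d$ with $1$. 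The $D$ realizing patterns are pairwise distinct, all occur inside $x_d$, and none is a sub-image of another (an inclusion of patterns would force an inclusion of the associated Sperner sets). Occurrences of $D$ pairwise non-nested patterns inside one $n\times n$ image must have pairwise distinct bottom-right corners --- if two occurrences shared a bottom-right corner, the smaller pattern would be the bottom-right sub-block of the larger, contradicting non-nestedness --- so $D\le n^2$, giving $2^{d-1}/\sqrt{2(d-1)}\le D\le n^2$ and hence $d\le 2\log n+0.5\log\log n+O(1)$.

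For the \textbf{lower bound} I would construct a two-dimensional analogue $R_m$ of the family $T_m$ of \Cref{lem:construction}. Start from all $m\times m$ templates over $\Sigma$ not containing the all-$0$ square sub-block of side $2\lceil\log m\rceil+1$ (by the same counting as in \Cref{lem:construction}, tracking two coordinates, there are at least $|\Sigma|^{m^2-1}$ of them), pad each on the right and bottom by an all-$0$ strip of width $2\lceil\log m\rceil+1$, and finally place a border of $1$'s on those two sides; the resulting patterns have side $L:=m+2\lceil\log m\rceil+2$. The key structural claim, proved just as in \Cref{lem:construction} but in two axes, is that an $L\times L$ window placed inside an arrangement of four blocks of $R_m$ in a $2L\times 2L$ grid --- over all $4!$ labelings of four distinct patterns --- cannot equal a fifth pattern of $R_m$: the zero-strip-then-ones marker forces the window to coincide with one of the four blocks, and four blocks suffice because an $L\times L$ window in an $L$-periodic grid straddles at most two blocks along each axis. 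Then I set $m=\big\lfloor\min\big(k-2\log k-4,\ \sqrt{2\log n/\log|\Sigma|-3\log\log n/\log|\Sigma|}\big)\big\rfloor$ and $d=\lfloor(m^2-1)\log|\Sigma|\rfloor$, pick $2^d$ patterns $q_0,\dots,q_{2^d-1}\in R_m$, and for $0\le i\le d-1$ let $y_i$ be the $n\times n$ image packing into a grid all $q_j$ whose index has $i$-th bit $1$, filling the remaining cells with $1$'s. The dimension bound $m+2\lceil\log m\rceil+2\le k$ follows from the choice of $m$, the packing is feasible since $2^{d-1}\le\lfloor n/L\rfloor^2$ for these parameters, and shattering follows: the pattern $q_j$ indexed by a target dichotomy occurs exactly in the images whose index has the appropriate bit $1$, using the four-block claim to rule out spurious occurrences (including at interfaces with the all-$1$ filler, which cannot host a full $R_m$ pattern since every such pattern contains a zero block).

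The \textbf{main obstacle} I anticipate is precisely this two-dimensional concatenation-avoidance lemma. In one dimension there is a single seam between two blocks and the tail marker $0^{\lceil\log m\rceil+1}1$ pins down the alignment; in two dimensions a $2\times2$ block arrangement has several seams and an interior corner, a window may cut across all four blocks at once, and windows overlapping the all-$1$ padding must also be handled. Verifying that the zero-strip-plus-ones-border marker still uniquely determines the alignment in all these cases --- and carrying out the constant bookkeeping so the error terms land as $O(\log k)$ inside the square and $O(\log\log n)$ in the second branch --- is the delicate part; the rest is a direct transcription of the 1D proof.
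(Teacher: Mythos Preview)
Your proposal is correct and follows essentially the same approach as the paper: the same Sperner-plus-distinct-bottom-right-corners upper bound, and the same lower-bound construction of $R_m$ via $m\times m$ templates avoiding a large all-zero square, padded by a zero strip and a ones border, with the identical choices of $m$ and $d$. You are in fact slightly more careful than the paper in flagging the need to rule out spurious occurrences at the interface with the all-$1$ filler, which the paper leaves implicit.
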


\section*{Acknowledgements}
We thank Pawe{\l} Gawrychowski for his useful feedback and Gy. Tur\'{a}n for sharing \cite{groeger1993linear} with us. We are also very grateful to anonymous reviewers for their insightful comments.
\bibliographystyle{alpha}
\bibliography{pm}

\end{document}